\newcites{oa}{References (Online Appendices)}
\newcolumntype{d}[1]{D..{#1}} 
\titleformat{\section}{\normalsize\bfseries}{\thesection}{1em}{}
\titleformat{\subsection}{\normalsize\bfseries}{\thesubsection}{1em}{}
\titlespacing*{\section}{0pt}{0.5\baselineskip}{0.2\baselineskip}
\titlespacing*{\subsection}{0pt}{0.5\baselineskip}{0.2\baselineskip}
\titlespacing*{\subsubsection}{0pt}{0.1\baselineskip}{0.2\baselineskip}
\newtheorem{theorem}{Theorem}[section]
\newtheorem{proposition}{Proposition}[section]
\newtheorem{lemma}{Lemma}[section]
\newtheorem{definition}{Definition}[section]
\newcommand{\neutralize}[1]{\expandafter\let\csname c@#1\endcsname\count@}
\newtheorem{assumption}{Assumption}[section]
\newtheorem{example}{Example}[section]
\newtheorem{remark}{Remark}[section]
\newcommand\Tstrut{\rule{0pt}{2.6ex}}         
\newcommand\Bstrut{\rule[-0.9ex]{0pt}{0pt}}  
\let\endtitlepage\relax
\DeclareMathOperator{\plim}{plim}
\DeclareMathOperator{\diag}{diag}
\newenvironment{mytitlepage}
{\begin{titlepage}\def\@thanks{}}
	{\end{titlepage}}
\xpatchcmd\titlepage{\setcounter{page}\@ne}{}{}{}
\xpatchcmd\endtitlepage{\setcounter{page}\@ne}{}{}{}
\newcommand{\ostar}{\mathbin{\mathpalette\make@circled\star}}
\newcommand{\make@circled}[2]{
	\ooalign{$\m@th#1\smallbigcirc{#1}$\cr\hidewidth$\m@th#1#2$\hidewidth\cr}
}
\newcommand{\smallbigcirc}[1]{
	\vcenter{\hbox{\scalebox{0.77778}{$\m@th#1\bigcirc$}}}
}
\date{November 2025}
\definecolor{colari}{rgb}{0.7, 0, 0.7}
\definecolor{coland}{rgb}{0, 0.7, 0.4}
\newcommand{\TITLE}{Count Data Models with Heterogeneous Peer Effects under Rational Expectations}
\title{\vspace{-2cm}\TITLE}
\author{Aristide Houndetoungan\footnote[1]{\fontsize{10pt}{10pt} I am thankful to Frank Windmeijer, the editor, and two anonymous referees for their helpful comments that improved the manuscript; I am also grateful to Vincent Boucher, Bernard Fortin, Yann Bramoullé, Ismael Mourifié, Stéphane Bonhomme, Elie Tamer, Michael Vlassopoulos, Arnaud Dufays, Luc Bissonnette, and Marion Goussé for their valuable suggestions, insights, and discussions. \\
		This research uses data from Add Health, a program directed by Kathleen Mullan Harris and designed by J. Richard Udry, Peter S. Bearman, and Kathleen Mullan Harris at the University of North Carolina at Chapel Hill, and funded by Grant P01-HD31921 from the Eunice Kennedy Shriver National Institute of Child Health and Human Development, with cooperative funding from 23 other federal agencies and foundations. Special acknowledgment is given to Ronald R. Rindfuss and Barbara Entwisle for assistance in the original design. Information on how to obtain Add Health data files is available on the Add Health website (\href{http://www.cpc.unc.edu/addhealth}{www.cpc.unc.edu/addhealth}). No direct support was received from Grant P01-HD31921 for this research.\\ I provide an easy-to-use R package---named CDatanet---for implementing the model and methods used in this paper. The package is located at \href{https://CRAN.R-project.org/package=CDatanet}{https://CRAN.R-project.org/package=CDatanet}.}}
\affil{\normalsize\emph{Laval University}\\
	\href{mailto:ahoundetoungan@ecn.ulaval.ca}{ahoundetoungan@ecn.ulaval.ca}\vspace{-2pt}}
\begin{document}
	\setlength{\abovedisplayskip}{4pt}
	\setlength{\belowdisplayskip}{4pt}
	\begin{mytitlepage}
		\maketitle
		
		\vspace{-0.8cm}
		\begin{abstract}
			\noindent 
			{\linespread{1.1}\selectfont
				This paper develops a peer effect model for count responses under rational expectations. The model accounts for heterogeneity in peer effects across groups based on observed characteristics. Identification is based on the linear model condition that requires the presence of friends of friends who are not direct friends. I show that this identification condition extends to a broad class of nonlinear models. Parameters are estimated using a nested pseudo-likelihood approach. An empirical application to students’ extracurricular participation reveals that females are more responsive to peers than males. An easy-to-use R package, \texttt{CDatanet}, is available for implementing the model.
                
				\noindent
				{\it Keywords:}  Discrete model,  Social networks, Bayesian game, Rational expectations.
			}
		\end{abstract}
	\end{mytitlepage}

	\newpage
	\setlength{\abovedisplayskip}{3pt}
	\setlength{\belowdisplayskip}{3pt}
	\section{Introduction}
	There is a large and growing literature on peer effects in economics.\footnote{For recent reviews, see \cite{de2017econometrics} and \cite{bramoulle2020peer}.} Recent contributions cover, among others, models for limited dependent variables, including binary \citep[e.g.,][]{brock2001discrete, lee2014binary, lin2024binary}, ordered \citep[e.g.,][]{liu2017social, xu2018social, boucher2022peer}, multinomial \citep[e.g.,][]{guerra2020multinomial}, and censored \citep[e.g.,][]{xu2015maximum} responses. However, count variables have received little attention, despite their prevalence in survey data (e.g., the number of physician visits, the frequency of consumption of a good or service, and the frequency of participation in activities). Peer effects on these variables are often estimated using a linear-in-means (or Tobit) model or a binary model after transforming the outcome into binary data. These approaches may lead to biased estimates because the linear-in-means (or Tobit) models are developed for (censored) continuous outcomes. Moreover, transforming the outcome to a binary format does not allow the interpretation of peer influence in terms of intensive marginal effects.

	This paper develops a microfounded rational-expectations model to estimate peer effects on count outcomes. The model incorporates heterogeneity, allowing for the influence of friends to vary across agents' and friends' observed groups (e.g., gender- or ethnicity-based groups). I establish conditions under which the model has a unique Bayesian Nash equilibrium (BNE). I demonstrate that the model parameters are identified under the well-known identification condition of linear-in-means models, which relies on the presence of friends' friends who are not friends in the network \citep{bramoulle2009identification}. Importantly, I generalize this identification result to a large class of nonlinear models. I apply the model to student participation in extracurricular activities using the National Longitudinal Study of Adolescent to Adult Health (Add Health) data. I find that females are more responsive to their peers than males.

	The proposed model is based on a static game with incomplete information, in which agents' payoffs depend on their friends' expected outcomes. I partition agents into observed groups and assume that peer effects depend on both agents' groups and peers' groups. Unlike standard games with continuous or censored outcomes, where payoffs are typically modeled as a linear-quadratic function \citep[see][]{blume2015linear}, I specify a semiparametric payoff. The assumption of a linear-quadratic payoff is widely employed in the literature, as it yields linear or Tobit specifications that are easy to estimate. I show that imposing this assumption on count outcomes, which is equivalent to estimating a linear spatial autoregressive model (SAR) or SAR-Tobit model \citep[][]{xu2015maximum}, yields an ordered model with evenly spaced thresholds or cutoff points. This assumption is restrictive and may lead to misspecification issues, particularly for count outcomes with large support. By employing a more flexible payoff function, the resulting econometric model imposes fewer restrictions on the cutoff-point distribution, while remaining estimable.

	Identification in peer effects models featuring rational expectations is typically established by imposing a rank condition on the matrix of explanatory variables \citep[see][]{brock2001discrete, lee2014binary, yang2017social, xu2018social}. However, this condition poses challenges for empirical testing because the explanatory variable matrix includes the average of friends' expected outcomes, which is unobservable to the econometrician. I address this problem by deriving easily testable sufficient conditions. Specifically, I show that a key identification condition is that the network includes agents who have friends' friends who are not directly their friends. While this condition has previously been used to identify parameters in linear models \citep{bramoulle2009identification}, I demonstrate its applicability in nonlinear models. Once identification is established, I show that the model parameters can be estimated using the nested pseudo-likelihood (NPL) method of \cite{aguirregabiria2007sequential}. The count outcome is allowed to have a large support.

	I empirically investigate peer effects on the number of extracurricular activities students enroll in. First, I estimate these effects without considering heterogeneity in peer effects. The results indicate that increasing the expected number of activities in which a student's friends are enrolled by 1 implies an increase of 0.08 in the expected number of activities in which the student is enrolled. In contrast, the SAR-Tobit estimate of peer effects is more than four times higher. Subsequently, I introduce heterogeneity in peer effects by gender. I find that female students are more responsive to their peers compared to male students. Using these results, I conduct a counterfactual analysis investigating the effects of school gender composition on student participation in extracurricular activities.
	
	This paper contributes to the large literature on social interaction models \citep[e.g., see][]{manski1993identification, brock2001discrete, lee2014binary, blume2015linear, xu2015maximum, guerra2020multinomial}. I investigate the case of count outcomes using a game of incomplete information. I show that estimating peer effects on these outcomes using the linear SAR or Tobit model is equivalent to estimating an ordered model with evenly spaced thresholds, potentially leading to misspecification. 
	
	The paper contributes to the literature on the identification of peer effects models \citep{bramoulle2009identification, de2010identification,blume2015linear}. Necessary conditions to avoid the reflection problem \citep{manski1993identification} are readily testable for linear-in-means models. However, for nonlinear models with rational expectations, identification relies on rank conditions that pose empirical testing challenges \citep{yang2017social}. I demonstrate that these models can be identified if the network includes friends' friends who are not direct friends. This identification condition is similar to the restriction for linear models.
	
	I also contribute to the literature on the heterogeneity of peer effects. Most papers allowing heterogeneity in peer effects consider a linear-in-means model with a continuous outcome \citep{peng2019heterogeneous, beugnot2019gender, comola2022heterogeneous}. The proposed model in this paper allows for the influence of friends on count outcomes to vary across both agents' and friends' observed groups.

	The remainder of the paper is organized as follows. Section \ref{sec_game} presents the microeconomic foundation of the model. Section \ref{sec_econ_model} addresses the identification and estimation of the model parameters. Section \ref{sec_mc} documents the Monte Carlo experiments. Section \ref{sec_appli} presents the empirical results and Section \ref{sec_conclu} concludes this paper. Proofs of propositions are provided in Appendix \ref{app_proof}.
	
	\section{Microeconomic Foundations}
	\label{sec_game}
	This section presents the model's microfoundations. Let $\mathcal P = \{1,\dots,n\}$ be a population of $n$ agents. The outcome of agent $i \in \mathcal P$ is denoted by $y_i$, representing an integer variable known as a \textit{count variable}. Let $\mathbb{N}_{R} = \{0,1,\dots,R\}$ be the support of $y_i$, where $R$ is a finite strictly positive integer.\footnote{The assumption of a finite $R$ implies a bounded outcome, which differs from outcomes in standard count data models such as the Poisson model. I make this assumption for ease of exposition and treat the case of an infinite $R$ in Online Appendix \ref{append_unbounded}. Specifically, when $R = \infty$, many summations in the paper involve infinitely many terms and care must be taken to ensure that the resulting sums converge.}

	\subsection{Incomplete Information Network Game}
	The model is based on a game of incomplete information. To introduce heterogeneity in peer effects, I partition $\mathcal P$ into $M$ observed groups (e.g., groups defined by gender, ethnicity, or the interaction of gender and ethnicity), denoted $\mathcal G_1$, \dots, $\mathcal G_M$, where $M$ is finite. Let $g_i$ be an observable variable indicating the group of individual $i$, with support $G = \{1,\dots,M\}$. For any $g \in G$, $g_i = g$ means that $i \in \mathcal G_g$. Agents act noncooperatively and interact through a directed network, in which peer influence depends on both their own and their peers’ group memberships.

	\textbf{Network}. The network is characterized by a set of $n \times n$ adjacency matrices,  $\mathcal{A} = \{\mathbf{A}^{gg^{\prime}}: g, g^{\prime} \in G\}$. For any $g, g^{\prime} \in G$,  $\mathbf{A}^{gg^{\prime}} = [a_{ij}^{gg^{\prime}}]$,  where $a_{ij}^{gg^{\prime}} = 1$ if $j$ is a friend of $i$, $g_i = g$, and  $g_j = g^{\prime}$, and $a_{ij}^{gg^{\prime}} = 0$ otherwise. For example, if  $\mathcal G_1$ and $\mathcal G_2$ denote the groups of males and females, then  $a_{ij}^{12} = 1$ if $j$ is a friend of $i$, $i$ is male, and $j$ is female.  Agents do not interact with themselves; i.e., $a_{ii}^{gg^{\prime}} = 0$ for all $i\in\mathcal P$. The network is directed, thus $a_{ij}^{gg^{\prime}}$ may not equal $a_{ji}^{gg^{\prime}}$.\footnote{Later, I assume that the network consists of many independent subnetworks (e.g., schools), such that two agents from different subnetworks have no connection \citep[see][]{calvo2009peer}. However, for notational simplicity, I present the microeconomic model within a single subnetwork.} I introduce heterogeneity in peer effects through agents' groups: agents are influenced by their peers depending on their own and their peers’ groups.

	\textbf{Individual characteristics}. Every agent $i$ is described by a vector of characteristics $(\phi_i,\varepsilon_i)^{\prime}$,  where $\phi_i \in \mathbb R$ is a characteristic observable to all agents and $\varepsilon_i \in \mathbb R$ is a private characteristic observable only to agent $i$. Agent $i$ observes  $\varepsilon_i$ and $\phi_j$ for all $j\in\mathcal P$, but does not observe $\varepsilon_j$ for $j\ne i$.  Following the standard approach in the literature, I assume that the private characteristics $\varepsilon_i$'s are independently and identically distributed (i.i.d.) and that this distribution is common knowledge to all agents \citep[see][]{brock2001discrete}.

	\textbf{Preferences}. Let $\mathbf{W}^{gg^{\prime}} = [w_{ij}^{gg^{\prime}}]$ be the matrix obtained by row-normalizing $\mathbf{A}^{gg^{\prime}}$; i.e., $w_{ij}^{gg^{\prime}} = a_{ij}^{gg^{\prime}}/\sum_{j = 1}^n a_{ij}^{gg^{\prime}}$ if  $a_{ij}^{gg^{\prime}} = 1$ and $w_{ij}^{gg^{\prime}} = 0$ otherwise.\footnote{An alternative way to normalize the network matrix is to define 
		$w_{ij}^{gg^{\prime}} = \frac{a_{ij}^{gg^{\prime}}}{\sum_{g^{\prime\prime}}\sum_{j=1}^n a_{ij}^{gg^{\prime\prime}}}$ 
		if $a_{ij}^{gg^{\prime}} = 1$ \citep[e.g., see][]{comola2022heterogeneous}. 
		However, this approach complicates the interpretation of the model because 
		$\mathbf{W}^{gg^{\prime}} \mathbf{y}$, where $\mathbf{y} = (y_1,\dots,y_n)^{\prime}$, 
		no longer corresponds to the average outcome among friends in group $g^{\prime}$. 
		Consequently, the peer effect parameter ($\alpha^{gg^{\prime}}$ below) cannot be interpreted as the average influence of peers in group $g^{\prime}$. 
		Moreover, having $\alpha^{gg^{\prime}} > \alpha^{gg^{\prime\prime}}$ does not necessarily indicate that peers in group $g^{\prime}$ are more influential, because the influence also depends on the number of friends in each group.} Since agent $i$ does not observe others' private characteristics, they also do not observe their choices. Consequently, agents' decisions depend on their beliefs about the choices of other players. However, given that the distribution of private characteristics is common knowledge, agents form rational expectations; i.e., their beliefs are consistent with the distribution of others' private characteristics \citep[see][]{brock2001discrete, blume2015linear}. Specifically, preferences are described by the following additive discrete payoff function:
	\begin{equation}
		\label{payoff_addex}
		U_i^e(y_i) = (\phi_i + \varepsilon_i)y_i - c_{g_i}(y_i) - \sum_{g^{\prime}\in G}\dfrac{\alpha^{g_ig^{\prime}}}{2}\mathbb{E}\left[(\textstyle y_i - \sum_{j= 1}^n w_{ij}^{g_ig^{\prime}} y_j)^2|\mathcal{A},\boldsymbol \phi\right].
	\end{equation}
	where $\mathbf{y}_{-i} = \left(y_1,y_2,\dots, y_{i-1},y_{i+1},\dots,y_n\right)$ and $\boldsymbol \phi = (\phi_1,\dots,\phi_n)^{\prime}$. Payoff \eqref{payoff_addex} is separable into two components. The first two terms denote the private subpayoff, where $(\phi_i + \varepsilon_i)y_i$ is the benefit and $c_{g_i}(y_i)$ is the private cost. As in \cite{blume2015linear}, the benefit is linear in the agent's own observable characteristic $\phi_i$ and private characteristic $\varepsilon_i$. In the literature, the private cost is generally defined as a quadratic function of $y_i$; i.e., $c_{g_i}(y_i) = \frac{1}{2}y_i^2$ \cite[see][]{ballester2006s, calvo2009peer, blume2015linear}. This is because the game results in an estimable econometric model. Leveraging the counting nature of the outcome, I show that one can relax this restriction. I model cost as a flexible nonparametric function of $y_i$ and agents' groups. I discuss in Remark \ref{rem_general} below that specifying \eqref{payoff_addex} with a quadratic private cost function implies a strong restriction on the econometric model. 
	
	The last term in payoff \eqref{payoff_addex} represents an expected social cost. The expectation is with respect to $\mathbf y_{-i}$, the choices of other players, because these choices are not observed by agent $i$. As the network matrix $\mathbf{W}^{gg^{\prime}}$ is row-normalized, $\sum_{j= 1}^n w_{ij}^{g_ig^{\prime}} y_j$ is the average outcome among friends in $\mathcal G_{g^{\prime}}$. The social cost depends on the gap between agents' choices and the average choices of their friends. The effects of this gap on the preferences can be measured by the peer effect parameter $\alpha^{g_ig^{\prime}}$. There are $M^2$ peer effect parameters: $\alpha^{gg^{\prime}}$ where $g,g^{\prime}\in G$. A higher value for $\alpha^{gg^{\prime}}$ indicates that agents in $\mathcal G_g$ are more responsive to the average choice of friends in $\mathcal G_{g^{\prime}}$. A positive $\alpha^{g_ig^{\prime}}$ suggests pure conformity between agent $i$ and friends in $\mathcal G_{g^{\prime}}$, whereas a negative $\alpha^{g_ig^{\prime}}$ implies substitutability.\footnote{This specification of social cost differs from that implying complementary preferences. The choice between conformist and complementary preferences depends on the outcome being studied. However, it can be shown that both types of preferences yield the same econometric model \citep[see][]{boucher2016some}.} 
	
	For any sequence $b(r)$, I denote by $\Delta b(r)$ its first difference defined by $\Delta b(r) = b(r) - b(r-1)$. I impose the inconsequential restrictions that $c_g(-1) = +\infty$ and $c_g(R + 1) = +\infty$. To show that there exists a unique choice $y_i$ that maximizes the expected payoff \eqref{payoff_addex}, I introduce the following restrictions.
	
	\begin{assumption} \label{ass_cost} The cost function $c_g(.)$ is a strictly convex and strictly increasing function on $\mathbb{N}_{R}$.
	\end{assumption}
	\begin{assumption}\label{ass_pe} For all $g \in G$, $\sum_{g^{\prime}\in G} \alpha^{gg^{\prime}} \geq 0$.
	\end{assumption}
	\begin{assumption}\label{ass_dist_e}
		The private characteristics $\varepsilon_i$'s are i.i.d. with a continuous symmetric distribution. Moreover, $\varepsilon_i$'s are independent of $\{\mathcal{A},\boldsymbol \phi,g_1, \dots,g_n\}$.
	\end{assumption}
	
	\noindent The strict convexity condition in Assumption \ref{ass_cost} implies strictly increasing differences in costs: $\Delta c_{g_i}(r+1) - \Delta c_{g_i}(r) > 0$. This nests the commonly used linear-quadratic payoff function specification. Assumption \ref{ass_pe} requires the overall effects of peers across all groups to be nonnegative. Although the cost function is strictly convex, large negative peer effects can lead to a non-concave payoff, making the game difficult to solve. Assumption \ref{ass_pe} still accommodates scenarios where friends in some groups exert negative effects, provided these effects are offset by positive effects from other groups. Assumption \ref{ass_dist_e} restricts correlation in agents' decisions to occur only through the network and observable characteristics, a standard condition in network models \citep[e.g.,][]{li2009binary, lee2014binary, yang2017social}. The concavity of the payoff and the continuity of the type distribution ensure that the payoff function is maximized at a single point almost surely (a.s.).

	\begin{proposition} \label{prop:xtrema}  Under Assumptions \ref{ass_cost}--\ref{ass_dist_e}, $U_i^e(.)$ has a unique maximizer, $y_i^{\ast} \in \mathbb{N}_R$, almost surely (a.s.). Moreover, $y_i^{\ast} = r$  if and only if 
		$U_i^e\left(r\right) > \max\left\{ U_i^e\left(r- 1\right), U_i^e\left(r + 1\right)\right\}$.
	\end{proposition}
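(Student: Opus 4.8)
The plan is to show that, from agent $i$'s perspective, the expected payoff $U_i^e(\cdot)$ is a strictly concave function on the integer lattice $\mathbb{N}_R$, and then to exploit discrete concavity to obtain both the almost-sure uniqueness of the maximizer and the local (beats-both-neighbors) characterization. First I would expand the expected social cost, being careful about what is random and what is fixed: because $y_i$ is agent $i$'s own decision and the conditioning information includes $\varepsilon_i$, the expectation in \eqref{payoff_addex} runs only over the unobserved choices $\mathbf{y}_{-i}$ of the other players. Writing $\bar{y}_i^{g^{\prime}} = \mathbb{E}[\sum_{j} w_{ij}^{g_ig^{\prime}}y_j \mid \mathcal{A},\boldsymbol\phi]$ and observing that the second moment $\mathbb{E}[(\sum_j w_{ij}^{g_ig^{\prime}}y_j)^2\mid \mathcal{A},\boldsymbol\phi]$ does not depend on $y_i$, expanding the square yields
\[
U_i^e(y_i) = \left(\phi_i + \varepsilon_i + \sum_{g^{\prime}\in G}\alpha^{g_ig^{\prime}}\bar{y}_i^{g^{\prime}}\right)y_i - c_{g_i}(y_i) - \frac{y_i^2}{2}\sum_{g^{\prime}\in G}\alpha^{g_ig^{\prime}} + \kappa_i,
\]
where $\kappa_i$ collects all terms independent of $y_i$ and is therefore irrelevant for the maximization.

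Next I would compute the first and second differences of $U_i^e$ viewed as a sequence indexed by $r$. Writing $\bar\alpha_i = \sum_{g^{\prime}\in G}\alpha^{g_ig^{\prime}}$, a direct calculation (using that the second difference of $r^2$ equals $2$ and that the linear term vanishes under differencing twice) gives
\[
U_i^e(r+1) - 2U_i^e(r) + U_i^e(r-1) = -\big(\Delta c_{g_i}(r+1) - \Delta c_{g_i}(r)\big) - \bar\alpha_i.
\]
By the strict convexity in Assumption \ref{ass_cost}, the bracketed term is strictly positive, and by Assumption \ref{ass_pe} we have $\bar\alpha_i \geq 0$; hence the second difference is strictly negative for every interior $r$. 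Equivalently, the first difference $\Delta U_i^e(r)$ is strictly decreasing in $r$, which is the discrete analogue of strict concavity. This is the conceptual crux of the argument, and it is precisely here that both the convex-cost and the nonnegative-aggregate-peer-effect assumptions are indispensable: relaxing either one could destroy single-peakedness and allow multiple local maxima.

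Finally I would conclude via single-peakedness. Since $\Delta U_i^e(r)$ is strictly decreasing, it changes sign at most once, so $U_i^e$ is first nondecreasing and then nonincreasing in $r$; consequently any $r$ that strictly dominates both $r-1$ and $r+1$ must strictly dominate every other point, which delivers the ``if'' direction of the characterization, while the ``only if'' direction is immediate from the definition of a maximizer. The boundary cases $r=0$ and $r=R$ are absorbed by the conventions $c_g(-1)=c_g(R+1)=+\infty$, which force $U_i^e(-1)=U_i^e(R+1)=-\infty$ so that the $\max$ over neighbors is well defined at the endpoints.

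The one genuinely delicate step, which I expect to be the main obstacle, is the almost-sure part. I would note that $\Delta U_i^e(r)$ is affine in $\varepsilon_i$ with unit coefficient, so a tie $U_i^e(r)=U_i^e(r+1)$ forces $\varepsilon_i$ to equal a single constant determined by $r$, the costs, and the (belief-dependent but $\varepsilon_i$-independent) quantities $\bar{y}_i^{g^{\prime}}$. Because $\varepsilon_i$ has a continuous distribution by Assumption \ref{ass_dist_e}, each such event has probability zero, and since there are only finitely many values of $r\in\{0,\dots,R-1\}$, their union is still null. Off this null set the first difference has a unique sign change, so a unique maximizer $y_i^{\ast}\in\mathbb{N}_R$ exists, completing the proof.
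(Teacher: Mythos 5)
Your proof is correct, and it rests on the same two pillars as the paper's: strict discrete concavity of $U_i^e$ (driven by Assumption \ref{ass_cost} plus Assumption \ref{ass_pe}) and the continuity of the type distribution (Assumption \ref{ass_dist_e}) to kill ties almost surely. The execution differs in a worthwhile way, however. The paper never expands the quadratic: it asserts abstractly that the expected social-cost term is convex under Assumption \ref{ass_pe}, concludes that $U_i^e$ is strictly concave, and then invokes a separate lemma (adapted from Murota's discrete convexity theory, proved by extending $U_i^e$ to a piecewise-linear concave function on an interval) to get the local-max-equals-global-max characterization; its a.s.-uniqueness step considers a tie between two \emph{arbitrary} maximizers and solves for $\varepsilon_i$. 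You instead expand $\mathbb{E}\left[(y_i - \sum_j w_{ij}^{g_ig^{\prime}}y_j)^2\,|\,\mathcal{A},\boldsymbol\phi\right]$ explicitly, reduce the problem to the one-dimensional function $\big(\phi_i+\varepsilon_i+\sum_{g^{\prime}}\alpha^{g_ig^{\prime}}\bar y_i^{g^{\prime}}\big)y_i - c_{g_i}(y_i) - \tfrac{\bar\alpha_i}{2}y_i^2$, and verify concavity by the second-difference computation $-\big(\Delta c_{g_i}(r+1)-\Delta c_{g_i}(r)\big)-\bar\alpha_i<0$, after which single-peakedness (one sign change of the first difference) delivers the characterization with no external lemma. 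Your tie argument is also slightly more economical: you only need the $R$ adjacent-tie events to be null, since strict concavity already precludes non-adjacent co-maximizers, whereas the paper rules out arbitrary pairs directly. What your route buys is transparency — one sees exactly where each assumption enters and the argument is fully self-contained; what the paper's route buys is a reusable discrete-optimization lemma and a shorter proposition-level proof. One presentational caveat: your $\kappa_i$ involves $\mathbb{E}[(\sum_j w_{ij}^{g_ig^{\prime}}y_j)^2|\mathcal{A},\boldsymbol\phi]$, and your tie-probability step, like the paper's, implicitly conditions on $(\mathcal{A},\boldsymbol\phi)$ so that the critical value of $\varepsilon_i$ is fixed and independence under Assumption \ref{ass_dist_e} applies; stating that conditioning explicitly would make the null-set claim airtight.
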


	\noindent The condition $U_i^e\left(r\right) > \max\left\{ U_i^e\left(r- 1\right), U_i^e\left(r + 1\right)\right\}$ is also equivalent to: $$
	\sum_{g^{\prime}\in G}\alpha^{g_ig^{\prime}} \boldsymbol{w}_{i}^{g_ig^{\prime}} \mathbb{E}(\mathbf{y}|\mathcal{A},\boldsymbol{\phi})+ \phi_i - \gamma_{g_i}(r+ 1) < -\varepsilon_i < \sum_{g^{\prime}\in G}\alpha^{g_ig^{\prime}} \boldsymbol{w}_{i}^{g_ig^{\prime}} \mathbb{E}(\mathbf{y}|\mathcal{A},\boldsymbol{\phi})+ \phi_{i} - \gamma_{g_i}(r),$$ where $\boldsymbol{w}_{i}^{g_ig^{\prime}} = (w_{i1}^{g_ig^{\prime}},\dots,w_{in}^{g_ig^{\prime}})$ is the $i$-th row of $\mathbf{W}^{g_ig^{\prime}}$,  $\mathbf{y} = (y_1,\dots,y_n)^{\prime}$, and $\textstyle \gamma_{g_i}(r) = \Delta c_{g_i}(r) +  \sum_{g^{\prime}\in G}(r - 1/2)\alpha^{g_ig^{\prime}}$. Given that $\varepsilon_i$ has a symmetric distribution, Proposition \ref{prop:xtrema} implies that the probability of $\{y_i = r\}$, conditional on  $\mathcal{A}$ and $\boldsymbol{\phi}$, can be expressed as: 
	\begin{align}
		\begin{split}
			\textstyle \mathbb{P}(y_i = r|\mathcal{A},\boldsymbol{\phi}) & \textstyle= F_{\varepsilon}\big(\sum_{g^{\prime}\in G}\alpha^{g_ig^{\prime}} \boldsymbol{w}_{i}^{g_ig^{\prime}} \mathbb{E}(\mathbf{y}|\mathcal{A},\boldsymbol{\phi})+ \phi_{i} - \gamma_{g_i}(r)\big) -  \\
			& \textstyle \quad~ F_{\varepsilon}\big(\sum_{g^{\prime}\in G}\alpha^{g_ig^{\prime}} \boldsymbol{w}_{i}^{g_ig^{\prime}} \mathbb{E}(\mathbf{y}|\mathcal{A},\boldsymbol{\phi})+ \phi_i - \gamma_{g_i}(r+ 1)\big),
		\end{split}\label{REE_nomatrix}
	\end{align}
	where $F_{\varepsilon}$ is the cumulative distribution function (cdf) of $\varepsilon_i$. Equation \eqref{REE_nomatrix} is similar to an ordered variable model with social interactions \citep[see][]{liu2017social}. The probability $\mathbb{P}(y_i = r|\mathcal{A},\boldsymbol{\phi})$ is a function of $\boldsymbol{w}_{i}^{g_ig^{\prime}} \mathbb{E}(\mathbf{y}|\mathcal{A},\boldsymbol{\phi})$ for all $g^{\prime}\in G$, which are expectations of the average outcomes among $i$'s friends in each group, conditional on $\mathcal{A}$ and $\boldsymbol{\phi}$. The distance between the thresholds (cutoff points), $\textstyle \gamma_{g_i}(r) = \Delta c_{g_i}(r) +  \sum_{g^{\prime}\in G}(r - 1/2)\alpha^{g_ig^{\prime}}$, depends closely on the cost function $c_{g_i}$. 
	A key distinction between count data models and ordered models is that, for count data, the magnitude of the outcome matters. The values taken by the outcome influence the choice probability \eqref{REE_nomatrix} through the term $\mathbb{E}(\mathbf{y}|\mathcal{A},\boldsymbol{\phi})+ \phi_{i} - \gamma_{g_i}(r)\big)$, whereas in ordered models, these values are irrelevant; the focus is instead on the relative ranking of the outcomes. 
	
	\begin{remark}\label{rem_general}
		As $\textstyle \gamma_{g_i}(r) = \Delta c_{g_i}(r) +  \sum_{g^{\prime}\in G}(r - 1/2)\alpha^{g_ig^{\prime}}$, a quadratic cost function implies that $\gamma_{g_i}(r + 1) - \gamma_{g_i}(r)$ is constant in $r$; that is, the cut points $\gamma_{g}(r)$ are evenly spaced over $r$. This restriction may yield inconsistent estimators if the cut points are not truly evenly spaced. The assumption of evenly spaced cut points can be tested using a likelihood ratio test by treating the model with evenly spaced cut points as the constrained model. Spatial autoregressive (SAR) and SAR-Tobit models are based on a game similar to that described by payoff \eqref{payoff_addex} with a quadratic cost function \citep[see][]{ballester2006s, xu2015maximum}. Heuristically, applying these models to count data would yield estimates similar to those from the proposed model with evenly spaced cut points.
	\end{remark}

	\subsection{Bayesian Nash Equilibrium}
	Under rational expectations, the belief of any agent $j \ne i$ about $\{y_i = r\}$ equals the \textit{true} conditional probability $\mathbb{P}(y_i = r | \mathcal{A}, \boldsymbol{\phi})$, defined as a function of the expected outcome in Equation \eqref{REE_nomatrix}. The conditional expected outcome of agent $j$ can be expressed as $\mathbb{E}(y_j|\mathcal{A},\boldsymbol{\phi}) =  \sum_{t=1}^R t \mathbb{P}(y_j = t|\mathcal{A},\boldsymbol{\phi})$. Substituting this into \eqref{REE_nomatrix} expresses $\mathbb{P}(y_i = r|\mathcal{A},\boldsymbol{\phi})$ as a function of $\mathbb{P}(y_j = t|\mathcal{A},\boldsymbol{\phi})$, where $j\ne i$ and $t = 1,\dots,R$. This yields a fixed-point equation in $\mathbf{p} =\big \{\mathbb{P}(y_i = t|\mathcal{A},\boldsymbol{\phi}): ~i\in\mathcal P, t = 1,\dots,R\big\}$, which is an $nR$-dimensional vector. This fixed-point equation may have no, one, or multiple solutions.\footnote{When $R = 1$, $\mathbb{E}(y_j|\mathcal{A},\boldsymbol{\phi}) = \mathbb{P}(y_j = 1|\mathcal{A},\boldsymbol{\phi})$ and Equation \eqref{REE_nomatrix} implies that $ \mathbb{P}(y_i = 1|\mathcal{A},\boldsymbol{\phi}) = F_{\varepsilon}\big(\sum_{g^{\prime}\in G}\alpha^{g_ig^{\prime}} \sum_{j = 1}^n w_{ij}^{g_ig^{\prime}} \mathbb{P}(y_j = 1|\mathcal{A},\boldsymbol{\phi})+ \phi_{i} - \gamma_{g_i}(r)\big)$, which generalizes the fixed-point equation of the binary model with a single group \citep[see][]{lee2014binary}.} In this section, I establish conditions ensuring the existence of a unique belief system (outcome distribution) coherent with Equation \eqref{REE_nomatrix}.
	
	The fixed-point equation resulting from \eqref{REE_nomatrix} is defined in $\mathbb R^{nR}$. Solving it may be challenging when $R$ is large. To address this issue, I focus on the conditional expected outcome $ \mathbb{E}(\mathbf{y}|\mathcal{A},\boldsymbol{\phi})$. Equation \eqref{REE_nomatrix} implies that the knowledge of $\mathbb{E}(\mathbf{y}|\mathcal{A},\boldsymbol{\phi})$  is sufficient to compute the underlying belief vector $\mathbf{p}$. Moreover, since $\mathbb{E}(y_i|\mathcal{A},\boldsymbol{\phi}) = \sum_{t = 1}^R t \mathbb{P}(y_i = t|\mathcal{A},\boldsymbol{\phi})$, the knowledge of $\mathbf{p}$ is  sufficient to compute  $\mathbb{E}(\mathbf{y}|\mathcal{A},\boldsymbol{\phi})$. 
	I show that the rational expected outcome also verifies a fixed-point equation.
	\begin{proposition} 
		\label{prop:expby}
		Let $\mathbf{p} =\big \{\mathbb{P}(y_i = t|\mathcal{A},\boldsymbol{\phi}): ~i\in\mathcal P, t = 1,\dots,R\big\}$ be a belief system and $\mathbb{E}(y_i|\mathcal{A},\boldsymbol{\phi}) = \sum_{t = 1}^R t \mathbb{P}(y_i = t|\mathcal{A},\boldsymbol{\phi})$ be the associated expected outcome. If $\mathbf{p}$ and $\mathbb{E}(\mathbf{y}|\mathcal{A},\boldsymbol{\phi})$ are rational; that is, if they satisfy Equation  \eqref{REE_nomatrix}, then 
		\begin{equation}
			\textstyle \mathbb{E}(y_i|\mathcal{A},\boldsymbol{\phi}) = \sum_{t = 1}^{R}F_{\varepsilon}\big(\sum_{g^{\prime}\in G}\alpha^{g_ig^{\prime}} \boldsymbol{w}_{i}^{g_ig^{\prime}} \mathbb{E}(\mathbf{y}|\mathcal{A},\boldsymbol{\phi})+ \phi_{i} - \gamma_{g_i}(t)\big). \label{REEY:nomatrix}
		\end{equation}
	\end{proposition}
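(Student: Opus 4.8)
The plan is purely computational: substitute the rational choice probabilities \eqref{REE_nomatrix} into the definition $\mathbb{E}(y_i|\mathcal{A},\boldsymbol{\phi}) = \sum_{t=1}^{R} t\,\mathbb{P}(y_i = t|\mathcal{A},\boldsymbol{\phi})$ and collapse the resulting telescoping sum by summation by parts. To lighten notation, write $A_i = \sum_{g^{\prime}\in G}\alpha^{g_ig^{\prime}}\boldsymbol{w}_i^{g_ig^{\prime}}\mathbb{E}(\mathbf{y}|\mathcal{A},\boldsymbol{\phi}) + \phi_i$ and $F_t = F_{\varepsilon}\big(A_i - \gamma_{g_i}(t)\big)$, treating $A_i$ as fixed throughout the per-$i$ manipulation. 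By hypothesis, \eqref{REE_nomatrix} then reads $\mathbb{P}(y_i = t|\mathcal{A},\boldsymbol{\phi}) = F_t - F_{t+1}$ for every $t \in \mathbb{N}_R$.

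The one input beyond algebra is the upper boundary value $F_{R+1} = 0$. I would obtain it from the cost convention: since $\gamma_{g_i}(r) = \Delta c_{g_i}(r) + \sum_{g^{\prime}\in G}(r-1/2)\alpha^{g_ig^{\prime}}$ and $c_{g}(R+1) = +\infty$, the first difference $\Delta c_{g_i}(R+1) = c_{g_i}(R+1) - c_{g_i}(R) = +\infty$, whence $\gamma_{g_i}(R+1) = +\infty$ and $F_{R+1} = F_{\varepsilon}(-\infty) = 0$ by properness of the cdf $F_{\varepsilon}$. The symmetric convention $c_g(-1) = +\infty$ gives $F_0 = 1$, which guarantees that the $F_t - F_{t+1}$ form a genuine distribution summing to $F_0 - F_{R+1} = 1$, but it is not otherwise needed here.

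It then remains to evaluate $\sum_{t=1}^{R} t\,(F_t - F_{t+1})$. Summation by parts, i.e.\ reindexing the second half of the sum by $t \mapsto t+1$ and canceling, yields $\sum_{t=1}^{R} F_t - R\,F_{R+1}$; the boundary value $F_{R+1} = 0$ removes the trailing term and leaves $\sum_{t=1}^{R} F_t$, which is exactly \eqref{REEY:nomatrix} once $F_t$ is unfolded. Equivalently, and perhaps more transparently, the same conclusion follows from the tail-sum identity $\mathbb{E}(y_i|\mathcal{A},\boldsymbol{\phi}) = \sum_{t=1}^{R}\mathbb{P}(y_i \geq t|\mathcal{A},\boldsymbol{\phi})$ for an outcome supported on $\{0,\dots,R\}$, combined with the telescoping $\mathbb{P}(y_i \geq t|\mathcal{A},\boldsymbol{\phi}) = \sum_{r=t}^{R}(F_r - F_{r+1}) = F_t - F_{R+1} = F_t$. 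Since the derivation is a one-line manipulation, the only genuine point of care---the main obstacle, such as it is---is making the boundary bookkeeping rigorous: confirming that the limiting conventions on $c_g$ at $-1$ and $R+1$ translate into $F_0 = 1$ and $F_{R+1} = 0$ through the continuity and properness of $F_{\varepsilon}$, so that the telescoping terminates cleanly and no tail mass is lost. Notably, neither the symmetry of $F_{\varepsilon}$ nor Assumptions \ref{ass_cost}--\ref{ass_pe} enter this step; they are used only to establish \eqref{REE_nomatrix} itself, which I am free to invoke.
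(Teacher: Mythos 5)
Your proposal is correct and follows essentially the same route as the paper: substitute the rational probabilities \eqref{REE_nomatrix} into $\sum_{t=1}^{R} t\,\mathbb{P}(y_i = t|\mathcal{A},\boldsymbol{\phi})$, collapse the sum by reindexing (summation by parts), and kill the boundary term via $\gamma_{g_i}(R+1) = +\infty$, hence $F_{\varepsilon}\big(A_i - \gamma_{g_i}(R+1)\big) = 0$. The tail-sum identity you mention as an alternative is just a repackaging of the same telescoping, and your bookkeeping of the boundary conventions matches the paper's.
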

	
	\noindent Equation \eqref{REEY:nomatrix} expresses agents’ expected outcomes as a function of the average expected outcomes of their peers. If $\alpha^{g_ig^{\prime}} > 0$, an agent’s expected outcome is increasing in the expected outcome of peers in $\mathcal G_{g^{\prime}}$. Conversely, $\alpha^{g_ig^{\prime}} < 0$ implies that the expected outcomes of agents decrease when those of peers in $\mathcal G_{g^{\prime}}$ increase. If $\alpha^{g_ig^{\prime}} = 0$, then agents are not responsive to peers in $\mathcal G_{g^{\prime}}$. 
	
	To demonstrate that a unique rational belief system is consistent with Equation \eqref{REE_nomatrix}, it is sufficient to show that a unique expected outcome, $\mathbb{E}(\mathbf{y}|\mathcal{A}, \boldsymbol{\phi})$, solves \eqref{REEY:nomatrix}. To achieve this result, I show that the right-hand side of Equation \eqref{REEY:nomatrix} forms a contraction mapping by imposing the following assumption.
	\begin{assumption}
		\label{ass_eqcond}
		For any $g\in G$, $\sum_{g^{\prime}\in G}\alpha^{gg^{\prime}}  <\big(\max_{u \in \mathbb{R}} \sum_{t = 1}^{R}  f_{\varepsilon}\left(u - \gamma_g(t)\right)\big)^{-1}$, where $f_{\varepsilon}$ is the probability density function (pdf) of $\varepsilon_i$.
	\end{assumption}
	\noindent  
	The problem of multiple rational expectations equilibria generally arises when peer effects are strong. Assumption \ref{ass_eqcond} is equivalent to assuming the overall marginal effect of friends' average expected outcome on agents' expected outcome is less than one. Indeed, from Equation \eqref{REEY:nomatrix}, the marginal effect of the average expected outcome among friends in $\mathcal G_{g^{\prime}}$ on the expected outcome of agent $i$ is: 
	$$\textstyle\dfrac{\partial \mathbb{E}(y_i|\mathcal{A},\boldsymbol{\phi})}{\partial  \Bar y_i^{e,g^{\prime}}} =  \alpha^{g_ig^{\prime}}\sum_{t = 1}^{R}f_{\varepsilon}\big(\sum_{\hat g\in G}\alpha^{g_i\hat g}  \Bar y_i^{e,\hat g}+ \phi_{i} - \gamma_{g_i}(t)\big),$$
	where $ \Bar y_i^{e,g^{\prime}} = \boldsymbol{w}_{i}^{g_ig^{\prime}} \mathbb{E}(\mathbf{y}|\mathcal{A},\boldsymbol{\phi})$. The total marginal effect of the average expected outcome among friends in all groups is thus $\sum_{g^{\prime}\in G}\alpha^{g_ig^{\prime}}\sum_{t = 1}^{R}f_{\varepsilon}\big(\sum_{\hat g\in G}\alpha^{g_i\hat g}  \Bar y_i^{e,\hat g}+ \phi_{i} - \gamma_{g_i}(t)\big)$, which is less than one by Assumption \ref{ass_eqcond}. The interpretation of Assumption \ref{ass_eqcond} is that agents do not increase their expected choice beyond the increase in the average expected choice of their peers, \textit{ceteris paribus}. This condition is standard in peer effect models and is verified in many applications \citep{lee2014binary, blume2015linear}. 
	
	\begin{proposition} 
		\label{prop:eunique} 
		Under Assumptions \ref{ass_cost}--\ref{ass_eqcond}, the game of incomplete information with payoff \eqref{payoff_addex} has a unique Bayesian Nash equilibrium (BNE) given by $\mathbf{y}^{\ast} = (y_1^{\ast}, \dots, y_n^{\ast})^{\prime}$ and a unique rational expected outcome $\mathbf y^{e\ast} = (y_1^{e\ast}, \dots, y_n^{e\ast})^{\prime}$, where $y_i^{\ast}$ is the maximizer of the expected payoff $U_i^e(.)$ and $\mathbf y^{e\ast}$ verifies Equation \eqref{REEY:nomatrix}.
	\end{proposition}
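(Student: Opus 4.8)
The plan is to collapse the whole equilibrium problem onto a single fixed-point equation in the expected-outcome vector and then invoke the Banach contraction theorem. By Proposition \ref{prop:expby}, any rational belief system and its associated expected outcome must satisfy \eqref{REEY:nomatrix}; conversely, once $\mathbb{E}(\mathbf{y}|\mathcal{A},\boldsymbol{\phi})$ is known, \eqref{REE_nomatrix} pins down the entire belief vector $\mathbf{p}$, and Proposition \ref{prop:xtrema} then pins down each agent's a.s.-unique best response $y_i^{\ast}$. Hence the first step is to record that it suffices to prove existence and uniqueness of a fixed point of the map $\mathbf{T}$ whose $i$-th coordinate is $T_i(\mathbf{m}) = \sum_{t=1}^{R} F_{\varepsilon}\big(\sum_{g'\in G}\alpha^{g_i g'}\boldsymbol{w}_{i}^{g_i g'}\mathbf{m} + \phi_i - \gamma_{g_i}(t)\big)$, since $\mathbf{y}^{e\ast}$ solves \eqref{REEY:nomatrix} if and only if it is a fixed point of $\mathbf{T}$.

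Second, I would verify that $\mathbf{T}$ is a self-map of the complete metric space $([0,R]^n,\|\cdot\|_\infty)$: each coordinate is a sum of $R$ cdf values lying in $[0,1]$, so $T_i(\mathbf{m})\in[0,R]$, mirroring the fact that expected outcomes lie in $[0,R]$ because $y_i\in\mathbb{N}_R$. This set is closed and bounded, hence complete, which is all the contraction theorem requires.

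Third — the heart of the argument — I would show $\mathbf{T}$ is a contraction in the sup-norm. Differentiating gives $\partial T_i/\partial m_j = \big(\sum_{t=1}^R f_\varepsilon(A_i(\mathbf{m}) - \gamma_{g_i}(t))\big)\sum_{g'\in G}\alpha^{g_i g'} w_{ij}^{g_i g'}$, where $A_i(\mathbf{m}) = \sum_{g'\in G}\alpha^{g_i g'}\boldsymbol{w}_{i}^{g_i g'}\mathbf{m} + \phi_i$. Three facts drive the bound: each neighbor $j$ belongs to exactly one group, so the inner sum over $g'$ reduces to the single term $g'=g_j$; row-normalization gives $\sum_j w_{ij}^{g_i g'}\le 1$; and, because the density is evaluated at the common argument $A_i(\mathbf{m})$, the factor $\sum_{t} f_\varepsilon(A_i(\mathbf{m}) - \gamma_{g_i}(t))$ is bounded by $\max_{u}\sum_{t} f_\varepsilon(u - \gamma_{g_i}(t))$. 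Aggregating via the mean-value inequality bounds the induced $\infty$-norm of the Jacobian, and hence the Lipschitz constant of $\mathbf{T}$, by $\max_{g} \big(\sum_{g'}\alpha^{gg'}\big)\,\max_u \sum_t f_\varepsilon(u-\gamma_g(t))$, which is strictly below $1$ exactly by Assumption \ref{ass_eqcond}. The Banach fixed-point theorem then delivers a unique $\mathbf{y}^{e\ast}\in[0,R]^n$; substituting into \eqref{REE_nomatrix} yields the unique rational belief system, and Proposition \ref{prop:xtrema} converts this into the unique profile of maximizers $\mathbf{y}^{\ast}$.

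I expect the main obstacle to be making the Lipschitz bound sharp rather than merely finite. A naive mean-value estimate produces density arguments that vary with the threshold $t$ and absolute values of the coefficients $\alpha^{g_i g'}$; matching the precise constant in Assumption \ref{ass_eqcond} requires the Jacobian form, in which the density sum is evaluated at the single argument $A_i(\mathbf{m})$, so that the $\max_u$ bound applies. The second delicate point is controlling the signed, heterogeneous coefficients: when all peer effects are nonnegative the Jacobian is a nonnegative matrix whose row sum equals $\sum_{g'}\alpha^{g_i g'}$, so Assumption \ref{ass_eqcond} applies verbatim, while Assumption \ref{ass_pe} guarantees this aggregated sensitivity is nonnegative and below one — this is the sense in which the \emph{total} marginal effect of peers, not any individual $\alpha^{g_i g'}$, must be dampened for the mapping to contract.
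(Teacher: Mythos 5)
Your proposal is correct and follows essentially the same route as the paper's own proof: reduce the problem via Proposition \ref{prop:expby} to the fixed-point equation \eqref{REEY:nomatrix}, bound the sup-norm of the Jacobian of the mapping using row-normalization of $\mathbf{W}^{gg^{\prime}}$ and the bound $\max_{u}\sum_{t}f_{\varepsilon}(u-\gamma_{g}(t))$ so that Assumption \ref{ass_eqcond} yields a contraction, and then recover the unique belief system through \eqref{REE_nomatrix} and the unique BNE through Proposition \ref{prop:xtrema}. The only cosmetic differences are that you work on the self-mapped complete set $[0,R]^n$ rather than $\mathbb{R}^n$ and make explicit the single-group-per-neighbor simplification of the Jacobian, neither of which changes the argument.
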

	
	\noindent The uniqueness of the BNE is a direct implication of Proposition \ref{prop:xtrema}. Moreover, by the contraction mapping theorem, Assumption \ref{ass_eqcond} ensures a unique expected outcome in Equation \eqref{REEY:nomatrix}, which in turn implies a unique rational belief system $\big\{\mathbb{P}(y_i = t | \mathcal{A}, \boldsymbol{\phi}): i \in \mathcal{P}, t = 1, \dots, R\big\}$ via Equation \eqref{REE_nomatrix}. 
	
	The key condition for the uniqueness is Assumption \ref{ass_eqcond}. This assumption may be violated under strong peer effects, leading to multiple equilibria. The literature offers various solutions to address this issue when the outcome is binary and the model includes a single peer effect parameter \citep[see][]{bajari2010estimating}. Most solutions require enumerating equilibria and using a selection rule for estimation. The problem becomes more complex when $R$ is large or unbounded, as the solution of \eqref{REEY:nomatrix} may diverge, as in the case of SAR models. The econometric approach presented in the next section focuses on the case where Assumption \ref{ass_eqcond} holds and the BNE is unique. I leave the issue of multiple rational expected equilibria for future research.

	\section{Econometric Model \label{sec_econ_model}}
	In this section, I present the econometric model, study parameter identification, and propose an estimation strategy. To handle dependence across agents' strategies, I assume, as is common, that the network consists of $S$ independent subnetworks (e.g., schools) that do not interact \cite[see][]{bramoulle2009identification}. This allows one to define observational equivalence at the subnetwork level (see Definition \ref{def_equivalence}), and therefore observing many subnetworks is theoretically useful \cite[see][]{rothenberg1971identification}. Let $\mathcal{P}_s$ denote the set of agents in the $s$-th subnetwork and $n_s$ the number of agents in $\mathcal{P}_s$. Asymptotically, $S$ goes to infinity, while $n_s$ is bounded.\footnote{Observational equivalence can also be defined at the sample level \citep[e.g.,][]{liu2017social, yang2017social}. However, additional conditions on within-network dependence are required in this case to ensure valid inference. Assuming many independent subnetworks provides a straightforward way to handle this dependence.} Throughout the paper, vectors or matrices defined at the sample level (e.g., $\mathbf y$, $\mathcal A$) carry the subscript $s$ when restricted to subnetwork $s$. However, the notation for individual variables (e.g., $y_i$, $g_i$, $\phi_i$) will not include the subscript $s$ unless necessary for clarity.

	\subsection{Econometric Model \label{sec_econ_model_spec}}
	Individual characteristics are commonly modeled as  $\phi_i = \beta_0 + \boldsymbol{x}_i^{\prime}\boldsymbol{\beta}_1 + \Bar{\boldsymbol{x}}_i^{\prime}\boldsymbol{\beta}_2$ for any $i\in \mathcal{P}_s$, where $\boldsymbol{x}_i \in \mathbb R^K$ is a random vector of observed characteristics (e.g., sex and age),  $\Bar{\boldsymbol{x}}_i = \sum_{j \in \mathcal{P}_s} w_{ij} \boldsymbol{x}_j$ is the average of these characteristics among peers, $\beta_0\in\mathbb R$, and $\boldsymbol{\beta}_1,\boldsymbol{\beta}_2\in\mathbb R^K$ \citep[see][]{calvo2009peer}. The parameter $\boldsymbol{\beta}_1$ captures the effects of own characteristics, whereas $\boldsymbol{\beta}_2$ measures the effects of average characteristics among friends (contextual effects). For simplicity, I do not introduce group-based heterogeneity in the contextual effects. Additionally, the specification of $\phi_i$ does not account for unobserved subnetwork heterogeneity because this poses non-identification problems when $S$ is large \citep[see][]{brock2007identification}.\footnote{Unless $n_s$ is sufficiently large, in which case the incidental-parameter problem does not lead to biased estimates \citep[e.g.,][]{lee2014binary, guerra2020multinomial}.}
	Let $\mathbf{X} = [\boldsymbol{x}_1 ~ \dots ~ \boldsymbol{x}_n]^{\prime}$, $\mathbf{1}_n = (1, \dots, 1)^{\prime}\in\mathbb{R}^n$, $\mathbf{Z} = \left[\mathbf{1}_n,\mathbf{X},\mathbf{W}\mathbf{X}\right]$ and $\boldsymbol{\beta} = \left(\beta_0,\boldsymbol{\beta}_1^{\prime},\boldsymbol{\beta}_2^{\prime}\right)^{\prime}$. For any $i\in\mathcal P_s$, Equation \eqref{REE_nomatrix} can be expressed as:
	\begingroup
	\allowdisplaybreaks
	\begin{align}
		\begin{split}
			\textstyle \mathbb{P}(y_i = r|\mathcal{A}_s,\mathbf Z_s) & \textstyle= F_{\varepsilon}\big(\sum_{g^{\prime}\in G}\alpha^{g_ig^{\prime}} \boldsymbol{w}_{s,i}^{g_ig^{\prime}} \mathbb{E}(\mathbf{y}_s|\mathcal{A}_s,\mathbf Z_s)+ \boldsymbol{z}_i^{\prime}\boldsymbol{\beta} - \gamma_{g_i}(r)\big) -  \\
			& \textstyle \quad~ F_{\varepsilon}\big(\sum_{g^{\prime}\in G}\alpha^{g_ig^{\prime}} \boldsymbol{w}_{s,i}^{g_ig^{\prime}} \mathbb{E}(\mathbf{y}_s|\mathcal{A}_s,\mathbf Z_s)+ \boldsymbol{z}_i^{\prime}\boldsymbol{\beta} - \gamma_{g_i}(r+ 1)\big),
		\end{split}\label{REE_nomatrix_recall}
	\end{align}
	\endgroup
	\noindent  where $\boldsymbol{w}_{s,i}$ is the row corresponding to agent $i$ in $\mathbf W_s$ and $\boldsymbol{z}_i = (1, \boldsymbol x_i^{\prime},\boldsymbol{w}_{s,i} \mathbf X_s)^{\prime}$. Similarly, the fixed-point equation \eqref{REEY:nomatrix} can also be written as:
	\begin{equation}
		\textstyle \mathbb{E}(y_i|\mathcal{A}_s,\mathbf Z_s) = \sum_{t = 1}^{R}F_{\varepsilon}\big(\sum_{g^{\prime}\in G}\alpha^{g_ig^{\prime}} \boldsymbol{w}_{s,i}^{g_ig^{\prime}} \mathbb{E}(\mathbf{y}_s|\mathcal{A}_s,\mathbf Z_s)+ \boldsymbol{z}_i^{\prime}\boldsymbol{\beta} - \gamma_{g_i}(t)\big),\label{REEY_nomatrix_recall}
	\end{equation}
	where $\textstyle \gamma_g(r) = \Delta c_g(r) +  \sum_{g^{\prime}\in G}(r - 1/2)\alpha^{gg^{\prime}}$,  for all $g$. Thus, $\gamma_g(0) = -\infty$, $\gamma_g(R + 1) = +\infty$, and $\gamma_g(r + 1) - \gamma_g(r) > \sum_{g^{\prime}\in G}\alpha^{gg^{\prime}}$.\footnote{In Online Appendix \ref{oa_comparison}, I show that the model does not impose equidispersion, as the Poisson model does, and is more flexible than the negative binomial model due to the large number of cut points.} 
	
	\begin{remark}\label{rem_semi}
		For certain applications, the distribution of the outcome may exhibit a long tail with a large (or unbounded) $R$, yielding a large number of parameters to be estimated. Yet, most observations of $y$ can have comparatively low values. See, for instance, my empirical application in Section \ref{sec_appli}, where $R = 33$ but 99\% of observations of $y_i$ fall below 10. In such cases, it could be challenging to estimate $\gamma_g(r)$ for large $r$. This issue can be addressed by using a semiparametric cost function in payoff \eqref{payoff_addex}. For some integer $\bar R_c\geq 1$, $c_g(r)$ can be defined as a nonparametric function if $r \leq \bar R_c$. However, when $r > \bar R_c$, a quadratic specification (or another strictly convex function) can be employed. Assuming that $c_g(r)$ is quadratic when $r > \bar R_c$ implies that $\gamma_g(r + 1) - \gamma_g(r) = \bar \gamma_g$ for all $r \geq \bar R_c$, where $\bar\gamma_g> \sum_{g^{\prime}\in G}\alpha^{gg^{\prime}}$ and does not depend on $r$ but varies across $g\in G$. This representation of the cost reduces the number of parameters to be estimated while not being overly restrictive if $\bar R_c$ is properly chosen. The model can be estimated over a range of values of $\bar R_c$ (e.g., using a grid of integers starting at $\bar R_c = 1$). The optimal $\bar R_c$ can be determined by minimizing information criteria such as the Bayesian Information Criterion (BIC). When $\bar R_c = 1$, the cost function becomes fully quadratic, resulting in estimates that are similar to those produced by a SAR-Tobit model.
	\end{remark}
	
	\subsection{Identification \label{sec_econ_model_ident}}
	The free objects to be identified in the econometric model are the peer effects $\alpha^{gg^{\prime}}: ~g,g^{\prime}\in G$, the control variable parameter $\boldsymbol\beta$, the thresholds $\gamma_g(t):~g\in G, t = 1,\dots, R$, and the distribution function of agents' private characteristic, $F_{\varepsilon}$. In this section, I assume that $F_{\varepsilon}$ is known and present a general analysis that addresses the identification of $F_{\varepsilon}$ in Online Appendix \ref{SM_Ident_F}. As the intercept $\beta_0$ and the thresholds enter Equations \eqref{REE_nomatrix_recall} only through their difference, they cannot be identified separately. I thus fix the first threshold, as in ordered models: $\gamma_g(1) = 0$ for all $g\in G$.
	
	Let $\boldsymbol{\alpha} = (\alpha^{gg^{\prime}}: ~g,g^{\prime}\in G)^{\prime}$ be the vector of peer effects, $\boldsymbol{\gamma} = (\gamma_g(t):~g\in G, t = 2,\dots,R)^{\prime}$ be the vector of thresholds to be estimated, and $\boldsymbol\theta = (\boldsymbol{\alpha}^{\prime}, \boldsymbol{\beta}^{\prime},\boldsymbol{\gamma}^{\prime})^{\prime}$ be the vector that comprises all parameters. The parameter $\boldsymbol\theta$ is identified if any $\tilde{\boldsymbol{\theta}} \ne \boldsymbol\theta$  is not observationally equivalent to $\boldsymbol\theta$.  Observational equivalence is defined as follows.
	\begin{definition}\label{def_equivalence}
		Two parameters $\boldsymbol\theta$  and $\tilde{\boldsymbol{\theta}}$ are observationally equivalent at $\mathcal{A}_s$ and $\mathbf Z_s$ if 
		$p(\mathbf y_s|\mathcal{A}_s,\mathbf Z_s) = \tilde p(\mathbf y_s|\mathcal{A}_s,\mathbf Z_s)$ for any $s$, where $p(\mathbf y_s|\mathcal{A}_s,\mathbf Z_s)$ and $\tilde p(\mathbf y_s|\mathcal{A}_s,\mathbf Z_s)$ are the distributions of $\mathbf y_s$ conditional on $\mathcal{A}_s$ and $\mathbf Z_s$  at $\boldsymbol\theta$ and $\tilde{\boldsymbol{\theta}}$, respectively.
	\end{definition}
	
	\noindent Let $\bar{\mathbf{Y}}_s = [\mathbf{W}^{gg^{\prime}}_s\mathbf y_s: ~g,g^{\prime}\in G]$  be the matrix of average peers' outcomes. For all $i\in\mathcal{P}_s$, let $\bar{\mathcal{Y}}_i$ be the row of agent $i$ in $\bar{\mathbf{Y}}_s$ and $\tilde{\boldsymbol z}_i = [\mathbb E(\bar{\mathcal{Y}}_i|\mathcal{A}_s,\mathbf Z_s),\boldsymbol{z}_i^{\prime}]^{\prime}$. The condition $p(\mathbf y_s|\mathcal{A}_s,\mathbf Z_s) = \tilde p(\mathbf y_s|\mathcal{A}_s,\mathbf Z_s)$ implies that both distributions $p(\mathbf y_s|\mathcal{A}_s,\mathbf Z_s)$ and $ \tilde p(\mathbf y_s|\mathcal{A}_s,\mathbf Z_s)$ yield the same conditional expected outcome $\mathbb E(\mathbf y_s|\mathcal{A}_s,\mathbf Z_s)$. Consequently, for $\tilde{\boldsymbol{\theta}} \ne \boldsymbol\theta$  not to be observationally equivalent to $\boldsymbol\theta$, it is necessary that the matrix of explanatory variables, $[\mathbb E(\bar{\mathbf{Y}}|\mathcal{A},\mathbf Z),\mathbf Z]$, be full rank. Specifically, I impose the following identification restrictions.
	
	\begin{assumption}\label{ass_fullrank}
		The matrix $\plim \left(\frac{1}{S}\sum_{i = 1}^n \tilde{\boldsymbol z}_i^{\prime} \tilde{\boldsymbol z}_i\right)$ is full rank.\footnote{The $\plim$ symbol denotes the limit in probability as $S$ grows to infinity.}
	\end{assumption}
	
	\begin{assumption} \label{ass_density_e} $\varepsilon_i$ has  a positive pdf almost everywhere on $\mathbb{R}$.
	\end{assumption}
	
	\noindent Assumption \ref{ass_fullrank} is classical in peer effect models with rational expectations \citep[e.g., see][]{brock2007identification, yang2017social,xu2018social}. However, since $\mathbb E(\bar{\mathbf{Y}}_s|\mathcal{A}_s,\mathbf Z_s)$ is not observable to the econometrician, testing Assumption \ref{ass_fullrank} in practice may pose challenges. To address this issue, I establish sufficient conditions for this assumption in Proposition \ref{prop_ident:suff} below. Assumption \ref{ass_density_e} ensures that there is a positive probability that the outcome takes any value in $\mathbb {N}_R$ \cite[see][]{manski1988identification}. Recall that Assumption \ref{ass_dist_e} imposes that $\varepsilon_i$'s are independent of $\{\mathcal{A}, \boldsymbol \phi, g_1, \dots,g_n\}$, which means that $\varepsilon_i$ and $\mathbf X_s$ are independent. This suggests that there is no omission of important regressors in $\mathbf{X}_s$ that can be captured by~$\varepsilon_i$. 
	
	\begin{proposition}
		\label{prop_ident}
		Under Assumptions \ref{ass_cost}--\ref{ass_density_e}, $\boldsymbol \theta$ is identified.
	\end{proposition}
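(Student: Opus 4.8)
The plan is to argue directly from Definition \ref{def_equivalence}: take any $\tilde{\boldsymbol\theta}$ that is observationally equivalent to $\boldsymbol\theta$ and show it must equal $\boldsymbol\theta$. The first and decisive step is to note that the equality of joint laws $p(\mathbf y_s|\mathcal A_s,\mathbf Z_s) = \tilde p(\mathbf y_s|\mathcal A_s,\mathbf Z_s)$ forces all marginal choice probabilities $\mathbb P(y_i = r|\mathcal A_s,\mathbf Z_s)$ to agree across the two parameterizations, hence also the conditional expected outcome $\mathbb E(\mathbf y_s|\mathcal A_s,\mathbf Z_s)$ and every peer average $\mathbb E(\bar{\mathcal Y}_i|\mathcal A_s,\mathbf Z_s)$. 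This is the crucial point, because the regressor vector $\tilde{\boldsymbol z}_i = [\mathbb E(\bar{\mathcal Y}_i|\mathcal A_s,\mathbf Z_s),\boldsymbol z_i']'$ is built from this common expected outcome, so it takes the \emph{same} value under $\boldsymbol\theta$ and under $\tilde{\boldsymbol\theta}$.

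Second, I would convert the point-mass probabilities in \eqref{REE_nomatrix_recall} into upper-tail probabilities. Telescoping the successive differences and using $\gamma_g(R+1)=+\infty$ gives, for each $i$ and each $r\in\{1,\dots,R\}$, the clean expression $\mathbb P(y_i \ge r|\mathcal A_s,\mathbf Z_s) = F_{\varepsilon}(\tilde{\boldsymbol z}_i'\boldsymbol\psi - \gamma_{g_i}(r))$, where $\boldsymbol\psi$ stacks $(\boldsymbol\alpha',\boldsymbol\beta')'$ and the single index $\tilde{\boldsymbol z}_i'\boldsymbol\psi$ reproduces $\sum_{g'}\alpha^{g_ig'}\boldsymbol w_{s,i}^{g_ig'}\mathbb E(\mathbf y_s) + \boldsymbol z_i'\boldsymbol\beta$; the off-group components of $\mathbb E(\bar{\mathcal Y}_i)$ drop out because $\mathbf W_s^{gg'}$ has nonzero rows only for agents in group $g$. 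Equating the two parameterizations and inverting $F_{\varepsilon}$, which is legitimate because Assumption \ref{ass_density_e} makes $F_{\varepsilon}$ strictly increasing, yields the linear identity $\tilde{\boldsymbol z}_i'\boldsymbol\psi - \gamma_{g_i}(r) = \tilde{\boldsymbol z}_i'\tilde{\boldsymbol\psi} - \tilde\gamma_{g_i}(r)$ for every $i$ and every $r$.

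Third, I would exploit the normalization $\gamma_g(1)=0$: evaluating the identity at $r=1$ eliminates the thresholds and leaves $\tilde{\boldsymbol z}_i'(\boldsymbol\psi - \tilde{\boldsymbol\psi}) = 0$ for every $i$. Forming the outer products, dividing by $S$, and passing to the limit, Assumption \ref{ass_fullrank}, the full rank of $\plim\frac{1}{S}\sum_i \tilde{\boldsymbol z}_i'\tilde{\boldsymbol z}_i$, forces $\boldsymbol\psi = \tilde{\boldsymbol\psi}$, i.e. $\boldsymbol\alpha=\tilde{\boldsymbol\alpha}$ and $\boldsymbol\beta=\tilde{\boldsymbol\beta}$. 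Substituting back into the identity for $r=2,\dots,R$ then gives $\gamma_{g_i}(r)=\tilde\gamma_{g_i}(r)$ for every observed group, so $\boldsymbol\gamma=\tilde{\boldsymbol\gamma}$. Hence $\tilde{\boldsymbol\theta}=\boldsymbol\theta$ and $\boldsymbol\theta$ is identified.

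The step I expect to be the main obstacle is the first one: justifying cleanly that observational equivalence in the joint law of $\mathbf y_s$ transmits to equality of the conditional expectations entering the regressors, so that $\tilde{\boldsymbol z}_i$ is genuinely common to both parameterizations. Without this, inverting $F_{\varepsilon}$ would compare two \emph{different} single-index arguments and the full-rank argument would collapse. This is where the rational-expectations structure does the real work, since expected outcomes are the actual means of the equilibrium distribution rather than free parameters, and Proposition \ref{prop:eunique} guarantees a well-defined unique equilibrium distribution under each parameter value. The remaining algebra, namely the telescoping, the inversion of the strictly monotone $F_{\varepsilon}$, and the rank argument, is routine.
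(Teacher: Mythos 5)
Your proposal is correct and follows essentially the same route as the paper's proof: both rest on the observation that observational equivalence forces the same conditional expected outcomes (hence a common regressor vector $\tilde{\boldsymbol z}_i$), then invert the strictly increasing $F_{\varepsilon}$ at the zero-outcome/first-threshold event (your tail probability at $r=1$ is just $1-p_i(0)$, which is what the paper uses), apply the full-rank condition of Assumption \ref{ass_fullrank} to conclude $\boldsymbol\alpha=\tilde{\boldsymbol\alpha}$ and $\boldsymbol\beta=\tilde{\boldsymbol\beta}$, and finally recover the thresholds from the remaining choice probabilities. Your telescoping into upper-tail probabilities is a slightly tidier packaging of the same argument, and your flagged ``main obstacle'' is exactly the step the paper also relies on.
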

	
	As Assumption \ref{ass_fullrank} is not easily testable, I find sufficient conditions for the cases where $M \leq 2$. These conditions do not involve the unobserved expected outcome and can then be easily tested in practice. Specifically, I demonstrate that the presence of friends' friends who are not directly friends in the network can help identify the model’s parameters. This argument has so far been used for addressing the reflection problem in linear models \citep[see][]{bramoulle2009identification}. I show that it can be applied to a large class of models, where a rank condition, such as Assumption \ref{ass_fullrank}, ensures identification.
	
	Let $\boldsymbol\pi_{i} = (\pi_{i}^{1},\dots,\pi_{i}^{M})^{\prime}$ be an $M$-vector, where $\pi_{i}^{g}$ is a dummy variable that equals one if the following two conditions hold: (1) agent $i$ has friends in group $\mathcal G_{g}$; (2) agent $i$  has friends' friends (in any group) who are not directly their friends. For models without heterogeneity in the peer effects ($M = 1$), $\boldsymbol\pi_{i}$ is simply a dummy variable that equals one if $i$ has friends' friends who are not friends.

	\begin{proposition}\label{prop_ident:suff} Assume that $M \leq 2$ and $\alpha^{gg^{\prime}} \geq 0$ for all $g,g^{\prime}\in G$. Assumption \ref{ass_fullrank} is satisfied if the following conditions hold:\\
		\begin{inparaenum}[\bfseries A.] 
			\item The matrix $\plim \left(\frac{1}{S}\sum_{i=1}^n \boldsymbol{z}_i\boldsymbol{z}_i^{\prime}\right)$ is full rank;  \label{cond_frank}\\
			\item The matrix $\plim \left(\frac{1}{S}\sum_{i=1}^n \boldsymbol\pi_{i}^{\prime} \boldsymbol\pi_{i}\right)$ is full rank;  \label{cond_ffriend}\\
			\item There is a variable $x_{i,\bar \kappa}$ in $\boldsymbol{x}_i$ with $\beta_{1,\bar \kappa} \beta_{2,\bar \kappa} \geq 0$ and $\beta_{2,\bar \kappa}\ne0$, where $\beta_{1,\bar \kappa}$ is the coefficient of $x_{i,\bar \kappa}$ and $\beta_{2,\bar \kappa}$ is the coefficient of $\bar x_{i,\bar \kappa}$, the contextual variable associated with $x_{i,\bar \kappa}$.\label{cond_contextual}
		\end{inparaenum}
	\end{proposition}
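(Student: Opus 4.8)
The plan is to argue by contradiction against failure of the rank condition. The matrix $\plim\left(\frac1S\sum_{i=1}^n\tilde{\boldsymbol z}_i'\tilde{\boldsymbol z}_i\right)$ is positive semidefinite, so it fails to be full rank only if there is a nonzero vector $\boldsymbol v=(\boldsymbol\lambda',\boldsymbol\mu')'$, conformable with $\tilde{\boldsymbol z}_i=[\mathbb E(\bar{\mathcal Y}_i|\mathcal A_s,\mathbf Z_s),\boldsymbol z_i']'$, such that $\mathbb E(\bar{\mathcal Y}_i|\mathcal A_s,\mathbf Z_s)\boldsymbol\lambda+\boldsymbol z_i'\boldsymbol\mu=0$ for almost every agent $i$ and almost every realization of $(\mathcal A_s,\mathbf Z_s)$; I would show that conditions A--C force $\boldsymbol v=0$. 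Writing $\boldsymbol m=\mathbb E(\mathbf y_s|\mathcal A_s,\mathbf Z_s)$ and recalling that $a_{ij}^{gg'}=0$ unless $g_i=g$, only the blocks of $\bar{\mathcal Y}_i$ with first index $g_i$ are nonzero, so for an agent $i$ in group $g$ the identity reads $\sum_{g'\in G}\lambda_{gg'}\,\boldsymbol w_{s,i}^{gg'}\boldsymbol m+\boldsymbol z_i'\boldsymbol\mu=0$.

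First I would make the equilibrium map differentiable in the data. Since $\varepsilon_i$ has a positive continuous density (Assumption \ref{ass_density_e}) and Assumption \ref{ass_eqcond} renders the right-hand side of \eqref{REEY_nomatrix_recall} a contraction, the implicit function theorem delivers $\boldsymbol m$ as a smooth function of $\mathbf X_s$. Writing $\partial_j:=\partial/\partial x_{j,\bar\kappa}$ for the special regressor of condition C, the index $\nu_i=\sum_{g'}\alpha^{g_ig'}\boldsymbol w_{s,i}^{g_ig'}\boldsymbol m+\boldsymbol z_i'\boldsymbol\beta$ satisfies $m_i=H_{g_i}(\nu_i)$ with $H_g(u)=\sum_{t=1}^R F_\varepsilon(u-\gamma_g(t))$, and differentiation yields $(\mathbf I-\mathbf D\mathbf A_\alpha)\,\partial_j\boldsymbol m=\mathbf D(\beta_{1,\bar\kappa}\boldsymbol e_j+\beta_{2,\bar\kappa}\mathbf w_{\cdot j})$, where $\mathbf D=\diag\big(H_{g_i}'(\nu_i)\big)$ is positive diagonal, $\mathbf A_\alpha$ is the effective peer matrix with $(i,k)$ entry $\sum_{g'}\alpha^{g_ig'}w_{s,ik}^{g_ig'}$, $\boldsymbol e_j$ is the $j$-th unit vector, and $\mathbf w_{\cdot j}$ is the $j$-th column of the weight matrix. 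Because $\alpha^{gg'}\geq0$ and the density is positive, $(\mathbf I-\mathbf D\mathbf A_\alpha)^{-1}=\sum_{\ell\geq0}(\mathbf D\mathbf A_\alpha)^{\ell}$ has nonnegative entries; combined with $\beta_{1,\bar\kappa}\beta_{2,\bar\kappa}\geq0$ and $\beta_{2,\bar\kappa}\neq0$, this makes $\partial_j\boldsymbol m$ sign-definite, with $\partial_j m_k\neq0$ exactly for the agents $k$ that $j$ reaches in the network and the lowest-order (largest) contribution carried by those $k$ at graph distance at most one from $j$.

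Next I would differentiate the identity of the first paragraph and localize it through friends of friends. For a target agent $i$ in group $g$ for which $j$ is a friend's friend but not a direct friend (and $i\neq j$), the $\boldsymbol\mu$-terms vanish since $w_{s,ij}=0$, leaving $\sum_{g'}\lambda_{gg'}\sum_k w_{s,ik}^{gg'}\partial_j m_k=0$; the leading term comes from the friend $k$ of $i$ adjacent to $j$, and its coefficient is proportional to a combination of the form $\alpha^{gg'}\beta_{1,\bar\kappa}+\beta_{2,\bar\kappa}$, which the sign restriction of condition C keeps away from zero. Since $M\leq2$, varying the group through which the length-two path passes---and using that condition B makes $\plim\frac1S\sum_i\boldsymbol\pi_i'\boldsymbol\pi_i$ full rank, hence supplies a positive mass of such targets with linearly independent group patterns $(\pi_i^1,\pi_i^2)$---separates the two channels and forces $\lambda_{gg'}=0$ for every $g,g'$. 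With $\boldsymbol\lambda=0$ the identity collapses to $\boldsymbol z_i'\boldsymbol\mu=0$ almost surely, whereupon condition A (full rank of $\plim\frac1S\sum_i\boldsymbol z_i\boldsymbol z_i'$) forces $\boldsymbol\mu=0$, contradicting $\boldsymbol v\neq0$; thus Assumption \ref{ass_fullrank} holds.

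The crux, and the step I expect to be hardest, is the nonlinear replacement for the linear-model fact that $\mathbf I$, $\mathbf W$, and $\mathbf W^2$ are linearly independent precisely when friends of friends who are not friends exist \citep{bramoulle2009identification}. Here $\boldsymbol m$ is not a finite polynomial in the weight matrices, so I rely instead on the monotone single-index structure $m_i=H_{g_i}(\nu_i)$ together with the nonnegativity of the $\alpha^{gg'}$ and the sign alignment of $\beta_{1,\bar\kappa},\beta_{2,\bar\kappa}$ to guarantee that $\partial_j\boldsymbol m$ is sign-definite and that its leading nonzero entries sit exactly at the neighbors of $j$. Verifying that this excludes any fortuitous cancellation among the $\lambda_{gg'}$-weighted terms, and that condition B indeed furnishes enough independent friend-of-friend configurations to disentangle the two group channels when $M=2$, is the delicate part of the argument.
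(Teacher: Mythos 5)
Your proposal is correct and follows essentially the same route as the paper's own proof: argue by contradiction from failure of the rank condition, implicitly differentiate the equilibrium fixed point \eqref{REEY_nomatrix_recall} to get the analogue of Equation \eqref{eq_dye}, use $\alpha^{gg^{\prime}}\geq 0$ together with $\beta_{1,\bar\kappa}\beta_{2,\bar\kappa}\geq 0$, $\beta_{2,\bar\kappa}\neq 0$ (via the nonnegative Neumann series of $(\mathbf{I}-\mathbf{D}\mathbf{A}_{\alpha})^{-1}$) to make the derivative sign-definite and nonzero at direct friends, localize the differentiated dependence at friends' friends who are not friends as in \eqref{eq_dyeconstraint}, separate the two group channels using Condition \ref{cond_ffriend} (the paper does this by first taking agents whose friends lie in a single group, then agents with friends in the other group), and close with Condition \ref{cond_frank}. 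The only differences are cosmetic---e.g., you describe the key nonvanishing coefficient as an $\alpha^{gg^{\prime}}\beta_{1,\bar\kappa}+\beta_{2,\bar\kappa}$-type combination, whereas the paper isolates the contextual term $\beta_{2,\bar\kappa}f^{\ast}w$ reinforced by same-signed feedback terms---and the ``delicate part'' you flag is exactly what the paper resolves with its sign argument and its observation that full rank of the $\boldsymbol\pi$ matrix with $M=2$ forces some friends-of-friends agents to have friends in only one group.
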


	\noindent Proposition \ref{prop_ident:suff} is established by assuming that the peer effect parameters $\alpha^{gg^{\prime}}$ are all nonnegative. I will later discuss cases of negative peer effects. Condition \ref{cond_frank} imposes that the matrix of observed regressors is of full rank. Condition \ref{cond_ffriend} implies that the network includes agents who have friends' friends who are not their friends. Moreover, when $M = 2$, certain of these agents should not have friends in all groups. For example, for gender-based groups, some males who have friends of friends who are not their own friends should have only male or only female friends. Condition \ref{cond_contextual} ensures that the expected outcome is influenced by the contextual variable $\bar x_{i,\bar \kappa}$. The direct effect of changes in friends' exogenous characteristics on an agent’s outcome is given by $\beta_{2,\bar \kappa}$, while the effect on friends’ outcomes is $\beta_{1,\bar \kappa}$. When these parameters share the same sign, the resulting indirect effects through peer interactions reinforce the direct effect, yielding a nonzero total effect on the agent’s outcome.\footnote{A similar restriction is also set by \cite{bramoulle2009identification} in linear models with a single peer effect parameter $\alpha$. In their case, the condition is $\alpha \beta_{1,\bar \kappa} + \beta_{2,\bar \kappa}  \ne 0$.}
	
	Formal proof of Proposition \ref{prop_ident:suff} is presented in Appendix \ref{append_Ident_Suf}. Below, I provide an intuition for the case of a single group. 
	I consider a network of three agents $i_1$, $i_2$, and $i_3$, where $i_3$ is a friend's friend of $i_1$ but not a direct friend (see Figure \ref{fig_ident}). If $\mathbb E(\bar{\mathbf{Y}}_s|\mathcal{A}_s,\mathbf Z_s)$  and $\mathbf{Z}_s$ are not linearly independent, then for all $i$ in any subpopulation $\mathcal P_s$, I can write:
	\begin{equation}\label{eq_barylin}
		\mathbb E(\bar{\mathcal{Y}}_i|\mathcal{A}_s,\mathbf Z_s) =  \check\beta_0 + \boldsymbol{x}_i^{\prime}\check{\boldsymbol{\beta}}_1 + \Bar{\boldsymbol{x}}_i^{\prime}\check{\boldsymbol{\beta}}_2,
	\end{equation}
	where $\check\beta_0\in\mathbb R$, $\check{\boldsymbol{\beta}}_1\in\mathbb R^K$, and $\check{\boldsymbol{\beta}}_2\in\mathbb R^K$ are constants and $\bar{\mathcal{Y}}_i = \boldsymbol w_{s,i} \mathbf{y}_s$ is the average outcome among friends, with $\boldsymbol w_{s,i}$ being the row corresponding to agent $i$ in $\mathbf W_s$. As $i_2$ is the only friend of $i_1$, $\bar{\mathcal{Y}}_{i_1} = y_{i_2}$ and $\Bar{\boldsymbol{x}}_{i_1} = \boldsymbol{x}_{i_2}$. For $i = i_1$,  Equation \eqref{eq_barylin} becomes $\mathbb E(y_{i_2}|\mathcal{A}_s,\mathbf Z_s) =  \check\beta_0 + \boldsymbol{x}_{i_1}^{\prime}\check{\boldsymbol{\beta}}_1 + \boldsymbol{x}_{i_2}^{\prime}\check{\boldsymbol{\beta}}_2$, which means that the expected outcome of $i_2$ is not influenced by $x_{i_3,\bar \kappa}$. 
	This condition cannot hold because $i_3$ is a friend of $i_2$. As implied by Condition \ref{cond_contextual}, the marginal effect of  $x_{i_3,\bar \kappa}$  on $\mathbb E(y_{i_2}|\mathcal{A}_s,\mathbf Z_s)$ cannot be zero. As a result, Equation \eqref{eq_barylin} cannot hold for all agents when many subnetworks include agents who have friends' friends who are not directly friends.
	
	\begin{figure}[!ht]
		\centering
		\footnotesize
		\begin{tikzpicture}[scale=0.6]
			\tikzstyle{node}=[circle,draw,fill=yellow!80,text=blue, thick,minimum height=0.1cm]
			\tikzstyle{link}=[->,>=latex, color=black!80, thick]
			
			\node[node] (i3) at (0,0) {$i_3$};
			\node[node] (i2) at (3,0) {$i_2$};
			\node[node] (i1) at (3,3) {$i_1$};
			
			\draw[link] (i1)--(i2);
			\draw[link] (i2)--(i3);
		\end{tikzpicture}
		\caption{Illustration of the identification}
		\label{fig_ident}
	\end{figure}
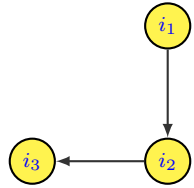
	
	The positive peer effects assumed in Proposition \ref{prop_ident:suff} ensure that the overall effect of $x_{i_3,\bar \kappa}$ on $\mathbb{E}(y_{i_2} | \mathcal{A}_s, \mathbf{Z}_s)$ is nonzero. Increasing $x_{i_3,\bar \kappa}$ generates direct contextual effects since $i_3$ is a friend of $i_2$. However, it may also produce indirect effects through $\mathbb{E}(y_{i_3} | \mathcal{A}_s, \mathbf{Z}_s)$ via peer interactions. If peer effects are negative, direct and indirect effects may have opposite signs, but the overall effect is unlikely to cancel out exactly. Thus, Proposition \ref{prop_ident:suff} extends to negative peer effects as long as direct and indirect effects do not perfectly offset. With many subnetworks, the perfect offset of both effects is highly improbable.

	\subsection{Estimation \label{sec_econ_model_estim}}
	This section presents an approach for estimating the model parameters. Given that the rational expected outcome depends on the cdf $F_{\varepsilon}$, estimating the parameters without specifying this cdf is challenging \citep[e.g., see][]{brock2001discrete, brock2002multinomial, lee2014binary, yang2017social, guerra2020multinomial}.\footnote{An exception where semi-parametric estimation without specifying $F_{\varepsilon}$ is possible is when there is a finite number of agents (e.g., 2 or 3 firms) that are observed multiple times. In this case, the asymptotic analysis focuses on the number of times the game is repeated rather than the number of agents \cite[e.g., see][]{aradillas2010semiparametric, wan2014semiparametric}.} I assume that $\varepsilon_i$ follows a standard normal distribution. The distribution's variance is set to 1 as an identification restriction, as is the case for ordered-response models.
	
	Given $\boldsymbol{\theta}$, Equations \eqref{REE_nomatrix_recall} and \eqref{REEY_nomatrix_recall} can be employed to compute $\mathbb E(\mathbf y_s|\mathcal{A}_s,\mathbf Z_s)$ and $\mathbb{P}(y_i = r|\mathcal{A}_s,\mathbf Z_s)$ for all $r\in\mathbb N_R$. This suggests using the maximum likelihood (ML) approach to estimate $\boldsymbol{\theta}$. However, the estimation approach may be computationally cumbersome for large samples, because one needs to solve the fixed-point equation \eqref{REEY_nomatrix_recall} in $\mathbb R^n$ for every guess of $\boldsymbol{\theta}$. To address this issue, I rely on the nested pseudo-likelihood (NPL) proposed by \cite{aguirregabiria2007sequential}. This method eliminates the need to solve a fixed-point problem.  For any $\boldsymbol u \in \mathbb R^n$, let us consider the pseudo-likelihood function defined as:
	\begin{equation}
		\textstyle\mathcal{L}(\boldsymbol{\theta},\boldsymbol u) = \dfrac{1}{S}\sum_{i = 1}^n\sum_{t = 0}^{R} \mathbbm{1}\{y_i = t\}\log\big(p_{it}(\boldsymbol{\theta},\boldsymbol u_{s(i)})\big), \label{pseudo:llh}
	\end{equation}
	where
	$ p_{it}(\boldsymbol{\theta},\boldsymbol u_{s(i)}) = F_{\varepsilon}\big(\sum_{g^{\prime}\in G}\alpha^{g_ig^{\prime}} \boldsymbol{w}_{s,i}^{g_ig^{\prime}} \boldsymbol u_{s(i)}+ \boldsymbol{z}_i^{\prime}\boldsymbol{\beta} - \gamma_{g_i}(t)\big) -  F_{\varepsilon}\big(\sum_{g^{\prime}\in G}\alpha^{g_ig^{\prime}} \boldsymbol{w}_{s,i}^{g_ig^{\prime}} \boldsymbol u_{s(i)}+ \boldsymbol{z}_i^{\prime}\boldsymbol{\beta} - \gamma_{g_i}(t+ 1)\big)$, $\boldsymbol{u}_{s(i)}$ denotes the vector containing the entries of $\boldsymbol{u}$ associated with the agents in the same subnetwork as $i$, and $\mathbbm{1}\{\cdot\}$ denotes the indicator function. The pseudo-likelihood function $\mathcal{L}(\boldsymbol{\theta},\boldsymbol u)$ depends on an arbitrary vector $\boldsymbol u$ and not on the true expected outcome $\mathbb E(\mathbf y_s|\mathcal{A}_s,\mathbf Z)$. 
	
	Let  $\ell_i(\boldsymbol{\theta},\boldsymbol u_{s(i)}) = \sum_{t = 1}^{R}F_{\varepsilon}\big(\sum_{g^{\prime}\in G}\alpha^{g_ig^{\prime}} \boldsymbol{w}_{s,i}^{g_ig^{\prime}} \boldsymbol u_{s(i)}+ \boldsymbol{z}_i^{\prime}\boldsymbol{\beta} - \gamma_{g_i}(r)\big)$ and $\mathbf L(\boldsymbol{\theta}, \boldsymbol u) = (\ell_1(\boldsymbol{\theta}, \boldsymbol u_{s(1)}),\break \dots, \ell_n(\boldsymbol{\theta}, \boldsymbol u_{s(n)}))^{\prime}$.
	To describe the NPL algorithm, I also define the operators $\textstyle \tilde{\boldsymbol{\theta}}(\boldsymbol u) = \arg \max_{\boldsymbol{\theta}} \mathcal{L}(\boldsymbol{\theta},\boldsymbol u)$ and $\boldsymbol{\phi}(\boldsymbol u) = \mathbf{L}(\tilde{\boldsymbol{\theta}}(\boldsymbol u), \boldsymbol u)$. The NPL algorithm starts with a guess $\boldsymbol u^{0}$ for $\boldsymbol u$ and constructs the sequence of estimators $\mathcal{Q}_k = \{\boldsymbol{\theta}^{k}, \boldsymbol u^{k}\}$, for $k = 1,2,\dots$, where $\boldsymbol{\theta}^{k} = \tilde{\boldsymbol{\theta}}(\boldsymbol u^{k-1})$ and $\boldsymbol u^{k} = \boldsymbol{\phi}(\boldsymbol u^{k-1})$. Specifically, given the guess $\boldsymbol u^{0}$, $\boldsymbol{\theta}^{1} = \tilde{\boldsymbol{\theta}}(\boldsymbol u^{0})$ and $\boldsymbol u^{1} = \boldsymbol{\phi}(\boldsymbol u^{0})$, then  $\boldsymbol{\theta}^{2} = \tilde{\boldsymbol{\theta}}(\boldsymbol u^{1})$ and  $\boldsymbol u^{2} = \boldsymbol{\phi}(\boldsymbol u^{1})$, and so forth.
	Note that each value of $\mathcal{Q}_k$ requires evaluating the mapping $\mathbf{L}$ only once. If $\mathcal{Q}_{k}$  converges, regardless of the initial guess $\boldsymbol u^0$, its limit, which is denoted by $\{\hat{\boldsymbol{\theta}},\hat{\mathbb{E}}(\mathbf{y}|\mathcal{A},\mathbf Z)\}$, is the NPL estimator. This limit satisfies the following two properties: $\hat{\boldsymbol{\theta}}$ maximizes the pseudo-likelihood $\mathcal{L}(\boldsymbol{\theta},\hat{\mathbb{E}}(\mathbf{y}|\mathcal{A},\mathbf Z))$; and $\hat{\mathbb{E}}(\mathbf{y}|\mathcal{A},\mathbf Z) = \mathbf{L}(\hat{\boldsymbol{\theta}},\hat{\mathbb{E}}(\mathbf{y}|\mathcal{A},\mathbf Z))$. 
	
	The NPL algorithm is based on an iterative process to determine both the values of $\boldsymbol{\theta}$ and $\boldsymbol u$ that maximize \eqref{pseudo:llh}. The resulting  estimator $\{\hat{\boldsymbol{\theta}},\hat{\mathbb{E}}(\mathbf{y}|\mathcal{A},\mathbf Z)\}$ is such that $\hat{\boldsymbol{\theta}}$ maximizes the pseudo-likelihood $\mathcal{L}(\boldsymbol{\theta},\hat{\mathbb{E}}(\mathbf{y}|\mathcal{A},\mathbf Z))$; and $\hat{\mathbb{E}}(\mathbf{y}|\mathcal{A},\mathbf Z) = \mathbf{L}(\hat{\boldsymbol{\theta}},\hat{\mathbb{E}}(\mathbf{y}|\mathcal{A},\mathbf Z))$. As demonstrated by \cite{kasahara2012sequential}, a key determinant of the convergence of the NPL algorithm is the contraction property of the fixed point mapping defined by the right-hand side of Equation \eqref{REEY:nomatrix}. Specifically, this mapping is defined for all $\boldsymbol u = (u_1, \dots, u_n)^{\prime} \in \mathbb{R}^{n}$ as $\mathbf{L}^{\ast}(\boldsymbol u) = (\ell_{1}^{\ast}(\boldsymbol u_{s(1)}),\dots,\ell_{n}^{\ast}(\boldsymbol u_{s(n)}))^{\prime}$, with
	$\textstyle\ell_{i}^{\ast}(\boldsymbol u_{s(i)}) = \sum_{t = 1}^{R}F_{\varepsilon}\big(\sum_{g^{\prime}\in G}\alpha^{g_ig^{\prime}} \boldsymbol{w}_{s,i}^{g_ig^{\prime}} \boldsymbol u_{s(i)}+ \phi_{i} - \gamma_{g_i}(t)\big)$. As shown in Proposition \ref{prop:eunique}, this mapping is contracting under Assumption \ref{ass_eqcond}, which guarantees that it has a unique fixed point.
	
	The following proposition establishes the consistency and asymptotic distribution of $\hat{\boldsymbol{\theta}}_n$.

	\begin{proposition}\label{prop:asymp}
		Assume that Assumptions \ref{ass_cost}--\ref{ass_density_e} and the regularity conditions of the NPL estimator hold \citep[see][]{aguirregabiria2007sequential}. The NPL estimator
		$\hat{\boldsymbol{\theta}}$  is consistent and $\sqrt{S}(\hat{\boldsymbol{\theta}}  - \boldsymbol{\theta}_0) \overset{d}{\to} \mathcal{N}\big(0,\mathbb{V}_0(\hat{\boldsymbol{\theta}})\big)$, where $\boldsymbol{\theta}_0$ is the true value of $\boldsymbol{\theta}$ and the asymptotic variance  $\mathbb{V}_0(\hat{\boldsymbol{\theta}})$ is given in Appendix \ref{append:asymptotic}.
	\end{proposition}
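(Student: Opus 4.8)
The plan is to recognize that the estimation problem is a special case of the NPL framework of \cite{aguirregabiria2007sequential} and then verify that the primitives of the present model satisfy their regularity conditions, so that their consistency and asymptotic-normality results apply directly. The correspondence is as follows: the ``policy'' (or belief) object is the conditional expected outcome $\mathbb{E}(\mathbf{y}_s|\mathcal{A}_s,\mathbf{Z}_s)$, collected across subnetworks into $\boldsymbol u$; the equilibrium best-response mapping is the operator $\mathbf{L}^{\ast}$ defined by the right-hand side of \eqref{REEY_nomatrix_recall}; and the pseudo-likelihood is $\mathcal{L}(\boldsymbol{\theta},\boldsymbol u)$ in \eqref{pseudo:llh}. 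The NPL fixed point $\{\hat{\boldsymbol{\theta}},\hat{\mathbb{E}}(\mathbf{y}|\mathcal{A},\mathbf{Z})\}$ is characterized by the two conditions stated in the text: $\hat{\boldsymbol{\theta}}$ maximizes $\mathcal{L}(\boldsymbol{\theta},\hat{\mathbb{E}}(\mathbf{y}|\mathcal{A},\mathbf{Z}))$ and $\hat{\mathbb{E}}(\mathbf{y}|\mathcal{A},\mathbf{Z})=\mathbf{L}(\hat{\boldsymbol{\theta}},\hat{\mathbb{E}}(\mathbf{y}|\mathcal{A},\mathbf{Z}))$. The assumption that the network decomposes into $S$ independent subnetworks of bounded size $n_s$, with $S\to\infty$, supplies the independent (though heterogeneous) sampling units needed for a law of large numbers and a central limit theorem; this is the feature that makes the sample-level asymptotics tractable.

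First I would verify the conditions needed for \emph{consistency}. The choice probabilities $p_{it}(\boldsymbol{\theta},\boldsymbol u_{s(i)})$ are smooth in $(\boldsymbol{\theta},\boldsymbol u)$ because $F_{\varepsilon}$ is the standard normal cdf, and Assumption \ref{ass_density_e} guarantees that each $p_{it}$ is strictly positive, so $\log p_{it}$ and its derivatives are well defined and bounded on compact parameter sets. By independence across subnetworks and boundedness of $n_s$, a uniform law of large numbers delivers a population limit $\mathcal{L}_0(\boldsymbol{\theta},\boldsymbol u)$ of \eqref{pseudo:llh}. At the truth $(\boldsymbol{\theta}_0,\boldsymbol u_0)$, with $\boldsymbol u_0=\mathbb{E}(\mathbf{y}|\mathcal{A},\mathbf{Z})$, both NPL conditions hold. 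Identification (Proposition \ref{prop_ident}, established under Assumptions \ref{ass_cost}--\ref{ass_density_e}) ensures the population criterion is uniquely maximized at $\boldsymbol{\theta}_0$ given the correct belief, while the contraction property of $\mathbf{L}^{\ast}$ (Assumption \ref{ass_eqcond} via Proposition \ref{prop:eunique}) ensures that $\boldsymbol u_0$ is the unique fixed point of the equilibrium mapping. Together these rule out any spurious NPL fixed point, yielding $\hat{\boldsymbol{\theta}}\overset{p}{\to}\boldsymbol{\theta}_0$.

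Next I would establish \emph{asymptotic normality} by linearizing the NPL first-order conditions. The estimator solves $\partial_{\boldsymbol{\theta}}\mathcal{L}(\hat{\boldsymbol{\theta}},\hat{\boldsymbol u})=0$ jointly with the fixed-point restriction $\hat{\boldsymbol u}=\mathbf{L}(\hat{\boldsymbol u},\hat{\boldsymbol{\theta}})$. A mean-value expansion of the score around $(\boldsymbol{\theta}_0,\boldsymbol u_0)$ produces terms in $\sqrt{S}(\hat{\boldsymbol{\theta}}-\boldsymbol{\theta}_0)$ and in $\sqrt{S}(\hat{\boldsymbol u}-\boldsymbol u_0)$. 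To eliminate the latter I would differentiate the fixed-point condition: because the contraction property makes $\mathbf{I}-\partial_{\boldsymbol u}\mathbf{L}$ invertible, the implicit function theorem expresses the induced sensitivity $\partial\hat{\boldsymbol u}/\partial\boldsymbol{\theta}$ through the Jacobians of $\mathbf{L}$. Substituting this back concentrates the expansion in $\sqrt{S}(\hat{\boldsymbol{\theta}}-\boldsymbol{\theta}_0)$, with an effective Jacobian that combines the direct score derivative with the indirect channel operating through the estimated belief. Applying a central limit theorem to the score---again summing over independent subnetworks and invoking a Lindeberg condition, which is immediate given bounded $n_s$---delivers $\sqrt{S}(\hat{\boldsymbol{\theta}}-\boldsymbol{\theta}_0)\overset{d}{\to}\mathcal{N}(0,\mathbb{V}_0(\hat{\boldsymbol{\theta}}))$, with the sandwich-form variance $\mathbb{V}_0(\hat{\boldsymbol{\theta}})$ computed in Appendix \ref{append:asymptotic}.

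The main obstacle is the derivation of the asymptotic variance, since the belief vector $\boldsymbol u$ is estimated rather than known and enters the pseudo-likelihood nonlinearly through every $p_{it}$. Correctly propagating this estimation error requires tracking the indirect effect of $\boldsymbol{\theta}$ on the score that runs through the fixed point, and establishing invertibility of the composite Jacobian matrix---a step that rests squarely on identification (Assumption \ref{ass_fullrank}, guaranteed by Proposition \ref{prop_ident:suff}) together with the strict positivity of $f_{\varepsilon}$ in Assumption \ref{ass_density_e}, which makes the information matrix nonsingular. The remaining verifications---smoothness, compactness, and the envelope-type simplifications of \cite{aguirregabiria2007sequential}---are routine once these structural properties are in place.
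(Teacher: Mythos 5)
Your proposal is correct and follows essentially the same route as the paper: consistency is obtained by casting the problem in the NPL framework of \cite{aguirregabiria2007sequential} (with independent, bounded subnetworks supplying the sampling units), and asymptotic normality follows from a mean-value expansion of the NPL first-order conditions in which the estimated belief vector is handled through the fixed-point constraint---your implicit-function-theorem step is precisely the paper's $\mathbf{H}_{2}$ term involving $\partial_{\boldsymbol{\theta}^{\prime}}\dot{\mathbf{y}}^{e\prime}$---followed by a (conditional) CLT on the score and a sandwich variance. The only cosmetic difference is that the paper states the CLT conditional on $\boldsymbol{\chi}=\{\mathcal{A},\mathbf{Z}\}$ to exploit independence of the scores across agents given the network, which is implicit in your appeal to independence across subnetworks.
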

	
	\section{Monte Carlo Experiments \label{sec_mc}}
	In this section, I conduct a Monte Carlo study to assess the finite sample performance of the model and the NPL estimator. I consider four data-generating processes (DGPs), denoted as DGPs A--D. For all DGPs, I assume that $\phi_i = 2 + 1.5 x_{1i} - 1.2 x_{2i} + 0.5 \bar{x}_{1i} - 0.9 \bar{x}_{2i}$, where $(x_{1i},x_{2i}) = \boldsymbol{x}_i'$ and $(\bar{x}_{1i},\bar{x}_{2i}) = \bar{\boldsymbol{x}}_i'$. I simulate $x_{1i}$ from a Uniform distribution over $[0,5]$ and $x_{2i}$ from a Poisson distribution with parameter 2. Following the network structure of my empirical application, I assume that each agent $i$ has $\bar n_i$ randomly selected friends, where $\bar n_i$ is chosen randomly between 0 and 10.

	I set $R = 100$, although most observed $y_i$ values are relatively low. Assuming a large $R$ allows for replicating outcomes with distributions exhibiting long tails, similar to those observed in survey data. The DGPs differ in how peer effects and cutoff points (cost function) are specified. For DGPs A and B, there is a single peer effect parameter $\alpha = 0.25$ (no heterogeneity in the peer effects). DGP A considers a quadratic cost function, implying that the cutoff points are equally spaced. I set $\gamma(r+1)-\gamma(r) = 0.55$ for $r \geq 1$. For DGP B, the cost function is semiparametric. I assume that $(\gamma(2), \dots, \gamma(13)) =  (2.050,1.250, 0.850,0.700,0.500,0.400,0.330,0.300,0.290,  ~0.280, 0.270,0.260)$ and $\gamma(r+1)-\gamma(r) = 0.255$ for $r \geq 13$, indicating that the cost is quadratic when $r \geq 13$.

	In DGPs C--D, I introduce heterogeneity in peer effects. I consider two groups: $G = \{1,2\}$, where $g_i = 1$ if $x_{1i} \leq 2.5$ and $g_i = 2$ otherwise. The peer effect parameters are defined as $\boldsymbol\alpha = (0.3,0.15,0.1,0.15)^{\prime}$ for DGP C and $\boldsymbol\alpha = (0.4,-0.1,0.2,  ~0.1)$ for DGP D. In DGP D, peers in $\mathcal G_2$ have negative effects on agents in $\mathcal G_1$. In both $\mathcal G_1$ and $\mathcal G_2$, I define the cost function as in DGP B.

	The parameter values are motivated by my empirical application so that the resulting outcome distributions resemble those observed in the application. Yet, the distribution is quite different for DGP A because the cost function in my application is likely non-quadratic. Figure \ref{mc_data_plot} displays examples of histograms of simulated data for $S=8$ subpopulations, where each subpopulation comprises $n_s = 250$ agents ($n=\text{2,000}$). DGP A exhibits a symmetric distribution left-truncated at zero, whereas the distributions of the other DGPs exhibit long tails.

	\begin{figure}[!ht]
		\centering
		\includegraphics[scale = .8]{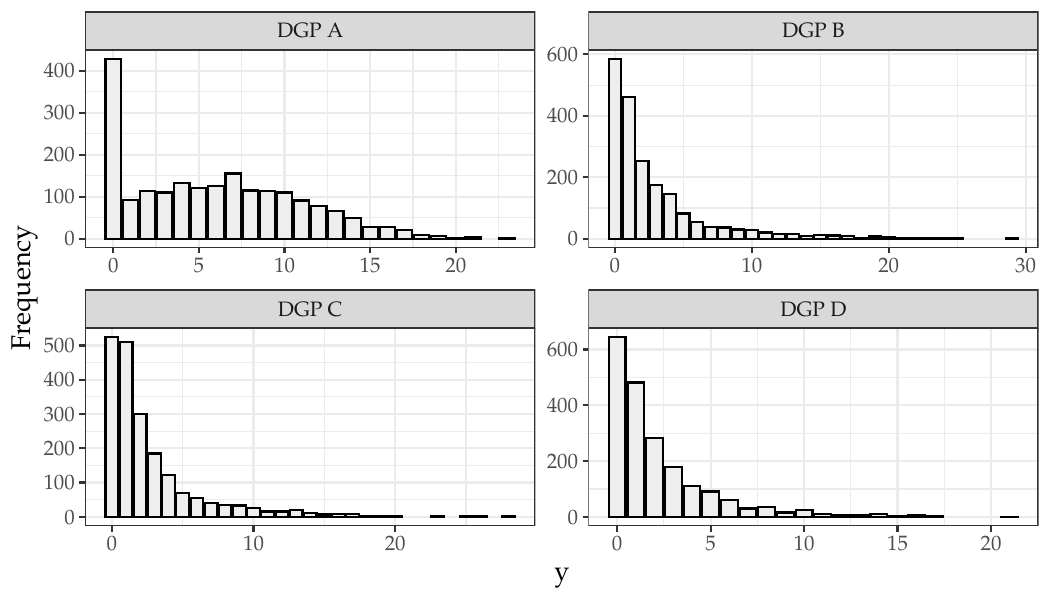}
		
		\vspace{-.4cm}
		\caption{Simulated data using the count data model with social interactions \label{mc_data_plot}}
	\end{figure}

	Table \ref{tab_mc} summarizes the simulation results for 1,000 replications. The number of subpopulations $S \in \{2, 8\}$ and each subpopulation comprises $n_s = 250$ agents; i.e., $n \in \{\text{500, 2,000}\}$. Since the values of the parameters cannot be directly interpreted, I focus on the average \textit{direct} marginal effects (DMEs).\footnote{In peer effects models, changes in exogenous characteristics directly affect the expected outcome while keeping the game equilibrium fixed, as shown in Equation \eqref{REEY_nomatrix_recall}. This is referred to as the direct marginal effect (DME). However, changes in exogenous characteristics also influence the contextual variable (since friends’ characteristics change) and the game’s equilibrium. Consequently, the expected outcome undergoes additional changes, referred to as indirect marginal effects (IME). In this section, I focus only on DMEs, while both effects are presented for the empirical application in Section \ref{sec_appli}.} For DGPs without heterogeneity in peer effects, I compare the estimated direct marginal effects from my model to those provided by a SAR-Tobit model. I estimate the models using a grid of values starting from $\bar R_c = 1$ and select the optimal $\bar R_c$ that minimizes the BIC (see Remark \ref{rem_semi}).

	For DGP A, both semiparametric and quadratic cost estimators perform well. This is because the cost is quadratic in the data. Additionally, the estimated marginal effects from a SAR-Tobit model are similar to those from the count data model. This result aligns with the discussion in Remark \ref{rem_general}, which emphasizes that estimating peer effects on count data using a SAR-Tobit model is similar to estimating an ordered model with equally spaced thresholds.
	
	In the case of DGP B, the quadratic-cost and SAR-Tobit approaches yield similar marginal effects. Yet, these effects are biased upward by an average of 40\% for the peer effect parameters.  This bias occurs because the cost is not quadratic for this DGP. In contrast, the semiparametric cost estimates are accurate. Increasing the sample size from 500 to 2,000 improves the precision of the marginal effect estimates, bringing them closer to the true marginal effects.
	
	Furthermore, the count data model with the semiparametric cost approach performs well when peer effects are heterogeneous. The estimated marginal effects are valid for DGPs C and D. Conversely, the quadratic cost approach biases the marginal effects for both peer effects and control variables. The bias is positive for peer effects on agents in $\mathcal G_1$ and negative for peer effects on agents in $\mathcal G_2$. This occurs because agents in $\mathcal G_1$ have a lower $x_{1i}$ and, thus, a lower expected outcome. The quadratic cost approach tends to overestimate peer effects on agents with lower expected outcomes and to underestimate them on agents with higher expected outcomes.

	\begin{table}[!ht]
		\centering 
		\footnotesize
		\caption{Monte Carlo simulations} 
		\label{tab_mc} 
		\resizebox{\columnwidth}{!}{\begin{threeparttable}
				\begin{tabular}{ld{3}d{3}d{3}d{3}ld{3}d{3}d{3}d{3}ld{3}d{3}d{3}d{3}}
					\toprule
					& \multicolumn{4}{c}{Semiparametric cost}                     &  & \multicolumn{4}{c}{Quadratic cost}                          &  & \multicolumn{4}{c}{Tobit}                                   \\
					& \multicolumn{2}{c}{n = 500} & \multicolumn{2}{c}{n = 2,000} &  & \multicolumn{2}{c}{n = 500} & \multicolumn{2}{c}{n = 2,000} &  & \multicolumn{2}{c}{n = 500} & \multicolumn{2}{c}{n = 2,000} \\
					True Effect               & \multicolumn{1}{c}{Mean}          & \multicolumn{1}{c}{Sd.}         & \multicolumn{1}{c}{Mean}          & \multicolumn{1}{c}{Sd.}          &  & \multicolumn{1}{c}{Mean}          & \multicolumn{1}{c}{Sd.}         & \multicolumn{1}{c}{Mean}          & \multicolumn{1}{c}{Sd.}          &  & \multicolumn{1}{c}{Mean}          & \multicolumn{1}{c}{Sd.}         & \multicolumn{1}{c}{Mean}          & \multicolumn{1}{c}{Sd.}          \\\midrule
					\textbf{DGP A}                  & \multicolumn{4}{l}{}                                        &  & \multicolumn{4}{l}{}                                        &  & \multicolumn{4}{l}{}                                        \\
					PE $= 0.431$                     & 0.431        & 0.041        & 0.430         & 0.020         &           & 0.431        & 0.041        & 0.430         & 0.020         &           & 0.432         & 0.041         & 0.431             & 0.020     \\
					$\boldsymbol x_1 = 2.584$        & 2.583        & 0.061        & 2.584         & 0.031         &           & 2.583        & 0.061        & 2.584         & 0.031         &           & 2.590         & 0.061         & 2.590             & 0.031     \\
					$\boldsymbol x_2 = -2.067$       & -2.066       & 0.061        & -2.067        & 0.030         &           & -2.066       & 0.061        & -2.067        & 0.030         &           & -2.076        & 0.061         & -2.076            & 0.030     \\
					$\bar{\boldsymbol x}_1 = 0.861$  & 0.864        & 0.189        & 0.868         & 0.093         &           & 0.864        & 0.189        & 0.868         & 0.093         &           & 0.866         & 0.190         & 0.870             & 0.093     \\
					$\bar{\boldsymbol x}_2 = -1.550$ & -1.551       & 0.120        & -1.555        & 0.057         &           & -1.551       & 0.120        & -1.555        & 0.057         &           & -1.552        & 0.121         & -1.555            & 0.058  \Bstrut\\\hline 
					\textbf{DGP B}                   & \multicolumn{4}{l}{}                                        &  & \multicolumn{4}{l}{}                                        &  & \multicolumn{4}{l}{}                                        \Tstrut \\
					PE $= 0.265$                     & 0.263        & 0.047        & 0.265         & 0.023         &           & 0.358        & 0.107        & 0.374         & 0.057         &           & 0.356         & 0.106         & 0.371             & 0.055     \\
					$\boldsymbol x_1 = 1.589$        & 1.586        & 0.138        & 1.588         & 0.068         &           & 1.393        & 0.113        & 1.394         & 0.054         &           & 1.402         & 0.113         & 1.404             & 0.062     \\
					$\boldsymbol x_2 = -1.271$       & -1.269       & 0.115        & -1.270        & 0.056         &           & -1.164       & 0.103        & -1.164        & 0.049         &           & -1.170        & 0.104         & -1.171            & 0.057     \\
					$\bar{\boldsymbol x}_1 = 0.530$  & 0.533        & 0.104        & 0.528         & 0.052         &           & 0.384        & 0.178        & 0.364         & 0.095         &           & 0.384         & 0.183         & 0.366             & 0.099     \\
					$\bar{\boldsymbol x}_2 = -0.953$ & -0.961       & 0.114        & -0.953        & 0.057         &           & -0.879       & 0.139        & -0.865        & 0.073         &           & -0.879        & 0.140         & -0.866            & 0.072   \Bstrut\\\hline
					\textbf{DGP C}                   & \multicolumn{4}{l}{}                                        &  & \multicolumn{4}{l}{}                                        &  & \multicolumn{4}{l}{}                                        \Tstrut \\
					PE$^{1,1} = 0.115$               & 0.112        & 0.020        & 0.114         & 0.013         &           & 0.144        & 0.026        & 0.146         & 0.012         &           &               &               &                   &           \\
					PE$^{1,2} = 0.058$               & 0.057        & 0.008        & 0.058         & 0.004         &           & 0.072        & 0.009        & 0.073         & 0.004         &           &               &               &                   &           \\
					PE$^{2,1} = 0.135$               & 0.120        & 0.047        & 0.127         & 0.032         &           & 0.009        & 0.080        & 0.013         & 0.040         &           &               &               &                   &           \\
					PE$^{2,2} = 0.202$               & 0.200        & 0.023        & 0.202         & 0.011         &           & 0.167        & 0.028        & 0.173         & 0.014         &           &               &               &                   &           \\
					$\boldsymbol x_1 = 2.596$        & 2.568        & 0.202        & 2.590         & 0.098         &           & 1.974        & 0.172        & 1.985         & 0.083         &           &               &               &                   &           \\
					$\boldsymbol x_2 = -2.077$       & -2.059       & 0.153        & -2.073        & 0.073         &           & -1.839       & 0.140        & -1.845        & 0.068         &           &               &               &                   &           \\
					$\bar{\boldsymbol x}_1 = 0.865$  & 0.863        & 0.134        & 0.866         & 0.068         &           & 0.779        & 0.161        & 0.771         & 0.078         &           &               &               &                   &           \\
					$\bar{\boldsymbol x}_2 = -1.557$ & -1.547       & 0.151        & -1.557        & 0.076         &           & -1.499       & 0.170        & -1.509        & 0.087         &           &               &               &                   &       \Bstrut\\\cline{1-10}
					\textbf{DGP D}                   & \multicolumn{4}{l}{}                                        &  & \multicolumn{4}{l}{}                                        &  & \multicolumn{4}{l}{}                                        \Tstrut \\
					PE$^{1,1} =0.111$                & 0.108        & 0.024        & 0.109         & 0.012         &           & 0.167        & 0.032        & 0.168         & 0.021         &           &               &               &                   &           \\
					PE$^{1,2} = -0.028$              & -0.028       & 0.007        & -0.028        & 0.003         &           & -0.011       & 0.009        & -0.012        & 0.006         &           &               &               &                   &           \\
					PE$^{2,1} = 0.201$               & 0.186        & 0.058        & 0.194         & 0.032         &           & 0.070        & 0.093        & 0.066         & 0.059         &           &               &               &                   &           \\
					PE$^{2,2} = 0.100$               & 0.098        & 0.021        & 0.099         & 0.011         &           & 0.052        & 0.027        & 0.053         & 0.015         &           &               &               &                   &           \\
					$\boldsymbol x_1 = 1.920$        & 1.915        & 0.136        & 1.918         & 0.064         &           & 1.428        & 0.109        & 1.427         & 0.056         &           &               &               &                   &           \\
					$\boldsymbol x_2 = -1.536$       & -1.532       & 0.099        & -1.532        & 0.049         &           & -1.378       & 0.092        & -1.371        & 0.044         &           &               &               &                   &           \\
					$\bar{\boldsymbol x}_1 =0.640$   & 0.648        & 0.098        & 0.644         & 0.049         &           & 0.624        & 0.126        & 0.623         & 0.064         &           &               &               &                   &           \\
					$\bar{\boldsymbol x}_2 = -1.152$ & -1.159       & 0.109        & -1.154        & 0.052         &           & -1.099       & 0.122        & -1.092        & 0.059         &           &               &               &                   &          \\\bottomrule
				\end{tabular}
				\begin{tablenotes}[para,flushleft]
					\footnotesize
					Notes: The column \textit{True Effect} indicates the average DME at the true parameter values. PE refers to peer effects, whereas PE$^{gg^{\prime}}$ is the effect of peers in group $\mathcal G_{g^{\prime}}$ on agents in group $\mathcal G_g$. The columns \textit{Semiparametric cost} and \textit{Quadratic cost} indicate the means and standard deviations (Sd.) of the estimated marginal effects corresponding to the optimal $\bar R_c$ and $\bar R_c = 1$, respectively. The column \textit{Tobit} refers to the estimates using the SAR-Tobit approach.
				\end{tablenotes}
		\end{threeparttable}}
	\end{table}

	\section{Empirical Application \label{sec_appli}}
	In this section, I present an empirical illustration of the model. I estimate peer effects on the number of extracurricular activities students are enrolled in using the National Longitudinal Study of Adolescent to Adult Health (Add Health) data.
	
	\subsection{Data \label{sec_appli_data}}
	The Add Health data provide nationally representative information on \nth{7}--\nth{12} graders in the United States (US). I use the Wave I in-school data, which were collected between September 1994 and April 1995. The surveyed sample comprises 80 high schools and 52 middle schools. The data provide information on students' social and demographic characteristics, including their friendship links (best friends, up to 5 females and up to 5 males), education level, and parents' occupation, etc. The network is restricted at the school level. Two students from different schools are not friends.

	After cleaning the dataset by removing observations with missing values, the sample used in this empirical study comprises $n = \text{72,291}$ students from $S = 120$ schools.\footnote{Many of the nominated friend identifiers in the dataset are missing. Numerous papers have developed methods for estimating peer effects using partial network data \citep[e.g.,][]{boucher2020estimating}. To focus on the main purpose of this paper, I do not address that issue here.} The largest school has 2,156 students, and about 50\% of the schools have more than 500 students. The average number of friends per student is 3.8 (1.8 male and 2.0 female friends). 
	
	The studied count variable is the number of extracurricular activities in which students are enrolled. Students were presented with a list of 33 clubs, organizations, and teams that are found in many schools. The students were asked to identify any of these activities in which they participated during the current school year or in which they planned to participate later in the school year ($R = 33$).\footnote{Since the cut points are group-dependent, estimating an ordered model here would require estimating $64$ additional parameters. Moreover, some cut points cannot be estimated because fewer than 1\% of observations exceed 10. In such a case, the semiparametric cost function is a suitable alternative.} I study the influence of social interactions on the number of extracurricular activities identified by the students. Figure \ref{fig_outcome}  displays a histogram of the number of extracurricular activities, which varies from 0 to 33. The distribution has a long tail as in DGPs B--D in the simulation study, and more than 99\% of observed values fall below 10. This feature suggests that the thresholds would not be evenly spaced. Therefore, peer effect estimates on these data using the SAR-Tobit model would be biased, as is the case for DGPs B, C, and D in the simulation study.
	
	\begin{figure}[!h]
		\centering
		\includegraphics[scale = 0.75]{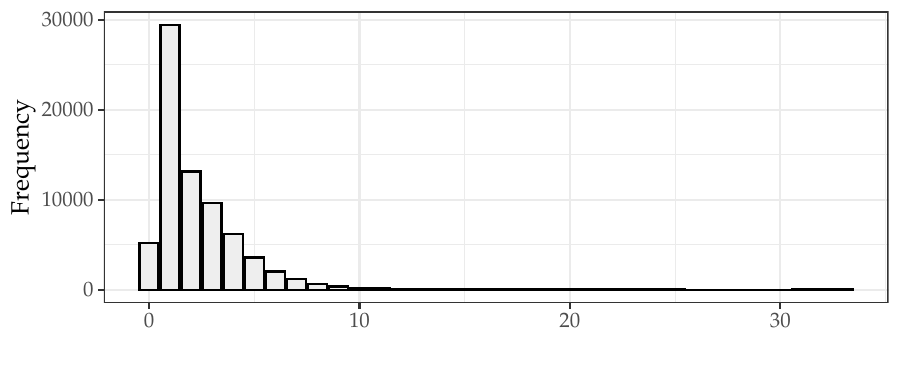}
		
		\vspace{-0.5cm}
		\caption{Distribution of the number of extracurricular activities \label{fig_outcome}}
	\end{figure}

	I introduce gender-based heterogeneity in the peer effects. Specifically, I estimate the influence of both male and female friends separately for male and female students. On average, students participate in 2.4 extracurricular activities, with females participating in 2.5 and males in 2.2. I control for several demographic and socioeconomic factors and contextual variables associated with these factors (see Table \ref{data_sum} in Online Appendix \ref{oa_data} for a data summary).

	\subsection{Empirical Results \label{sec_appli_model}}
	Table \ref{app_res} summarizes the estimated average marginal effects.\footnote{Comprehensive results, including all parameter estimates, are provided in Online Appendix \ref{oa_fullresults}.} Models 1--6 do not account for heterogeneity in peer effects. Model 1 is based on a semiparametric cost function, Model 2 imposes a quadratic cost function, and Model 3 is a SAR-Tobit model. Models 4--6 correspond to the school fixed-effect versions of Models 1--3, respectively. These fixed effects are included as dummy variables to control for unobserved school characteristics, such as pupil-teacher ratio and school climate.\footnote{Given the large sample size ($n = 72{,}291$) and relatively few schools $(S = 120)$, including school fixed effects as dummy variables does not raise an incidental parameter problem \citep[see][]{lee2014binary, guerra2020multinomial}.} Models 7–8 correspond to count data models with gender-dependent peer effects and a semiparametric cost function. Model 7 assumes a semiparametric cost function, whereas Model 8 considers a quadratic cost function.

	The estimation results for Model 1 indicate that a one-unit increase in the expected number of activities in which a student’s friends are enrolled increases the expected number of activities in which the student is enrolled by 0.082. However, the quadratic cost approach (Model 2) overestimates the marginal peer effect at 0.543. As in the simulation study, this model leads to the same estimates as the SAR-Tobit approach.

	Controlling for school fixed effects does not significantly change the peer effect estimate for the nonparametric cost approach (Model 4). Yet, the increase in the log-likelihood (for 119 additional regressors) indicates that unobserved school characteristics influence student participation. In contrast, controlling for school-fixed effects partially addresses the bias of the quadratic cost and SAR-Tobit approaches (Models 5 and 6). The marginal peer effects decrease by more than 30\%, but remain 4.6 times higher than the estimate from the nonparametric approach.\footnote{To assess model performance in replicating observed data, I generate extracurricular activities using the estimated coefficients as parameter values (as in the Monte Carlo simulations). The quadratic cost function (Model 5) fails to match the observed distribution, whereas predictions based on semiparametric cost functions provide a closer fit (see Online Appendix \ref{oa_replication}).}
	
	Furthermore, the estimation results reveal gender-based heterogeneity in peer effects, with same-sex students being less responsive to each other. In Model~7, the marginal effects of female peers on males are $0.071$, similar to the marginal effects of male peers on females ($0.069$). However, the marginal effects of female peers on females are estimated at $0.034$, while those of male peers on males are $-0.022$. The negative effects of male peers on males are perhaps surprising and suggest a decrease in males' participation when the participation of their male friends increases, holding the participation of female friends fixed. Moreover, the results show that females are more responsive and also exert a stronger influence on their friends. Indeed, the total marginal peer effects (from both males and females) on females are $0.091$, but are only half of those ($0.044$) on males. Similarly, the peer effects of females on their friends (males and females) are $0.096$, while those of males on their friends are $0.039$.

	In Model~8, the quadratic cost specification overestimates the peer effects in each group. The total marginal peer effects (from both males and females) on males are $0.187$, which is on average twice as large as the effects estimated using the semiparametric cost specification. Additionally, the total marginal peer effects (from both males and females) on females are $0.116$, which is more than twice the effects estimated using the semiparametric cost specification.

	\begin{table}[!ht]
		\centering 
		\footnotesize
		\caption{Empirical results (marginal effects)} 
		\label{app_res} 
		\resizebox{\columnwidth}{!}{\begin{threeparttable}
				\begin{tabular}{ld{7}d{7}cd{7}d{7}cd{7}d{7}c}
					\toprule
					& \multicolumn{6}{c}{\textbf{School fixed effects:   No, Heterogeneity: No}}                                                                                                       \\[.5ex]
					& \multicolumn{2}{c}{\textit{Model 1:   Semiparametric cost}} & \multicolumn{2}{c}{\textit{Model 2: Quadratic   cost}}   & \multicolumn{2}{c}{\textit{Model 3: Tobit}}             \\[.5ex]
					Endogenous Peer Effects                                   & 0.082                        & (0.026)                      & 0.543                      & (0.028)                     & 0.553                      & (0.017)                    \\
					Male (direct)                                             & -0.275                       & (0.018)                      & -0.195                     & (0.017)                     & -0.200                     & (0.010)                    \\
					Male (contextual direct)                                  & -0.025                       & (0.029)                      & 0.087                      & (0.030)                     & 0.086                      & (0.016)                    \\
					Male (indirect)                                           & -0.046                       & (0.025)                      & -0.034                     & (0.041)                     & -0.045                     & (0.026)                    \\
					Male (total)                                              & -0.321                       & (0.025)                      & -0.229                     & (0.037)                     & -0.245                     & (0.027)                    \\[.5ex]
					$\bar R_c$                                                & \multicolumn{2}{c}{13}                                      & \multicolumn{2}{c}{1}                                    & \multicolumn{2}{c}{}                                    \\
					Log likelihood                                            & \multicolumn{2}{c}{$-127,626$}                              & \multicolumn{2}{c}{$-159,924$}                           & \multicolumn{2}{c}{$-161,225$}                          \\ \midrule
					& \multicolumn{6}{c}{\textbf{School fixed effects: Yes, Heterogeneity: No}}                                                                                                        \\[.5ex]
					& \multicolumn{2}{c}{\textit{Model 4:   Semiparametric cost}} & \multicolumn{2}{c}{\textit{Model 5: Quadratic   cost}}   & \multicolumn{2}{c}{\textit{Model 6: Tobit}}             \\ [.5ex]
					Endogenous Peer Effects                                   & 0.079                        & (0.023)                      & 0.369                      & (0.029)                     & 0.362                      & (0.020)                    \\
					Male (direct)                                             & -0.296                       & (0.019)                      & -0.208                     & (0.017)                     & -0.213                     & (0.011)                    \\
					Male (contextual direct)                                  & -0.059                       & (0.028)                      & 0.029                      & (0.029)                     & 0.021                      & (0.020)                    \\
					Male (indirect)                                           & -0.080                       & (0.026)                      & -0.080                     & (0.026)                     & -0.071                     & (0.021)                    \\
					Male (total)                                              & -0.376                       & (0.030)                      & -0.376                     & (0.030)                     & -0.284                     & (0.021)                    \\[.5ex]
					$\bar R_c$                                                & \multicolumn{2}{c}{13}                                      & \multicolumn{2}{c}{1}                                    & \multicolumn{2}{c}{}                                    \\
					Log likelihood                                            & \multicolumn{2}{c}{$-126,111$}                              & \multicolumn{2}{c}{$-158,964$}                           & \multicolumn{2}{c}{$-160,258$}                          \\ \midrule
					& \multicolumn{6}{c}{\textbf{School fixed effects: Yes, Heterogeneity: Yes}}                                                                                \\[.5ex]
					& \multicolumn{2}{l}{\textit{Model 7:  Semiparametric cost}} & \multicolumn{2}{l}{\textit{Model 8: Quadratic cost}} &                      &                      \\
					Endogenous Peer Effects$^{\text{Male},\text{Male}}$     & -0.022                       & (0.007)                     & 0.033                    & (0.008)                   &                      &                      \\
					Endogenous Peer Effects$^{\text{Male},\text{Female}}$   & 0.066                        & (0.008)                     & 0.083                    & (0.007)                   &                      &                      \\
					Endogenous Peer Effects$^{\text{Female},\text{Male}}$   & 0.061                        & (0.006)                     & 0.066                    & (0.006)                   &                      &                      \\
					Endogenous Peer Effects$^{\text{Female},\text{Female}}$ & 0.030                        & (0.008)                     & 0.121                    & (0.008)                   &                      &                      \\
					Male (direct)                                           & -0.417                       & (0.042)                     & -0.224                   & (0.034)                   &                      &                      \\
					Male (indirect)                                         & -0.034                       & (0.044)                     & 0.089                    & (0.056)                   &                      &                      \\
					Male (total)                                            & -0.451                       & (0.061)                     & -0.135                   & (0.064)                   &                      &                      \\
					Male (contextual direct)                                & 0.011                        & (0.045)                     & 0.142                    & (0.053)                   &                      &                      \\[.5ex]
					$\bar R_c$                                              & \multicolumn{2}{c}{10}                                     & \multicolumn{2}{c}{1}                               &                      &                      \\
					Log likelihood                                          & \multicolumn{2}{c}{$-125,667$}                             & \multicolumn{2}{c}{$-157,908$}                       &                      &                                    \\\bottomrule
				\end{tabular} 
				\begin{tablenotes}[para,flushleft]
					\footnotesize
					Notes: Estimates without parentheses are marginal effects; standard errors are reported in parentheses. For semiparametric cost models, $\bar R_c$ is determined by BIC minimization. The estimates for "Male (direct)" and "Male (contextual direct)" indicate the direct marginal effects (and associated standard errors) of the variable “Male” and of “Share male friends,” respectively, while holding the game equilibrium fixed. However, changes in students’ sex also affect the game equilibrium and the share of male friends. The additional change in the expected outcome resulting from the indirect effect is reported as "Male (indirect)." "Male (total)" is the sum of the direct and indirect effects. 
				\end{tablenotes}
		\end{threeparttable}}
	\end{table}

	\subsection{Counterfactual Analysis}
	In this section, I conduct a counterfactual analysis to examine the effects of the proportion of male students on participation in extracurricular activities. Genders are randomly assigned to students to achieve target proportions of males. For each proportion, I compute the expected number of extracurricular activities by solving the fixed-point equation \eqref{REEY_nomatrix_recall}, while keeping other variables and the network constant. The parameters are fixed at their estimated values. Figure \ref{Ah_cf_plot} presents the average number of extracurricular activities as a function of the proportion of male students. The simulations are based on specifications that control for unobserved school heterogeneity, and the confidence intervals are obtained using the Delta method.

	In Panel~(A), the predictions are based on a model with a semiparametric cost function (Model~4), under the assumption of no social interactions ($a_{ij}^{gg^{\prime}} = 0$ for all $i$, $j$, $g$ and $g^{\prime}$). As the proportion of male students varies from $0\%$ to $100\%$, predicted participation declines from 2.330 to 2.056 activities. This decrease arises because male students participate in fewer activities than female students. Since there are no interactions in this scenario, there are no social multiplier effects. For example, when the proportion of males is $50\%$, which corresponds to the observed data, the average predicted outcome is 2.193 activities---lower than the observed sample mean of 2.353.

	Panel~(B) considers the model with a quadratic cost function but without heterogeneity in peer effects (Model~5). The results show that varying the proportion of males from $0\%$ to $100\%$ leads to a decline in participation from 2.670 to 2.364 activities. The larger decrease relative to Panel~(A) is driven by social multiplier effects. Indeed, increasing the number of females raises participation through both direct effects of students' sex and indirect peer effects. However, these estimates are potentially biased because the quadratic cost specification tends to overestimate social multiplier effects. For example, when the proportion of males is $50\%$, the average predicted outcome exceeds the observed sample mean. Panel~(C) is based on the model with a semiparametric cost function and without heterogeneity in peer effects (Model~4), thereby addressing the concern of biased estimates raised for Panel~(B). The results show that varying the proportion of males from $0\%$ to $100\%$ leads to a decrease in participation from 2.526 to 2.205 activities. 
	
	Panel~(D) incorporates heterogeneous peer effects (Model~7). Owing to this heterogeneity, the average expected outcome is not a monotone function of the proportion of males. The highest participation level is observed in coed schools with $32\%$ male representation, where students attend 2.431 extracurricular activities on average. Interestingly, although male presence generally reduces overall participation, the highest participation level is not observed in all-female schools. This arises because male students generate social multiplier effects by exerting stronger influence on female peers. However, when the proportion of males becomes too large, these effects are offset, as males tend to participate in fewer activities than females.

	\begin{figure}[!h]
		\centering
		\includegraphics[scale = .7]{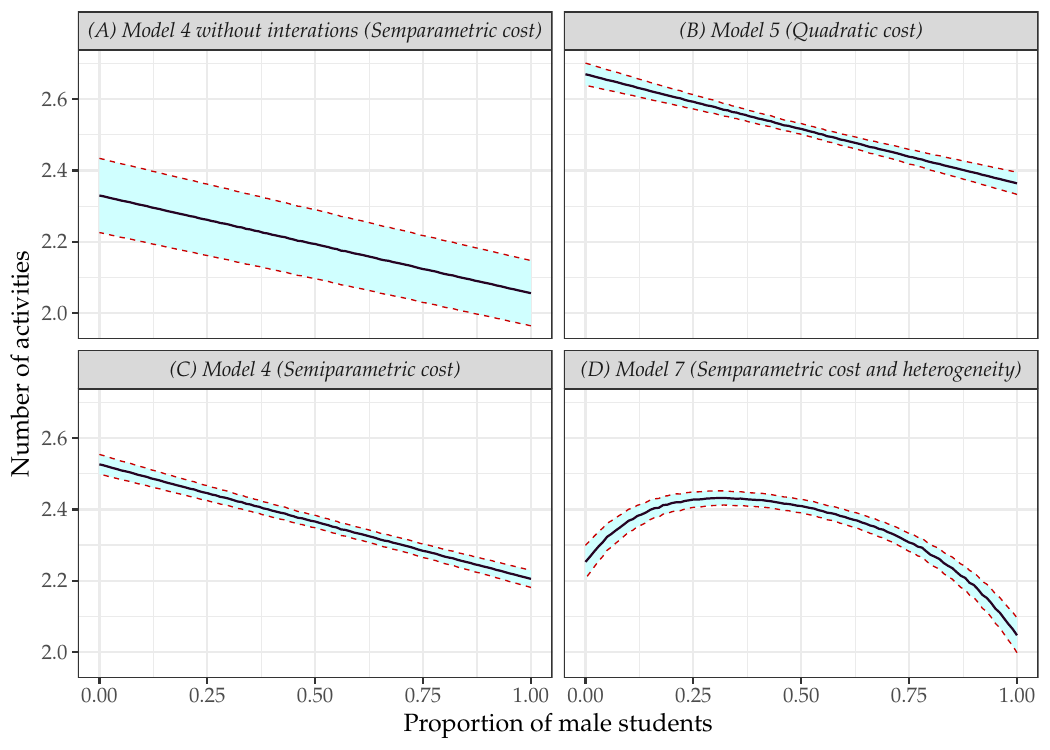}
		
		\vspace{-.2cm}
		\caption{Counterfactual analysis results \label{Ah_cf_plot}}
		
		\justify
		\footnotesize
		Notes: This figure presents the average expected number of extracurricular activities students attend. The confidence intervals are obtained using the Delta method.
	\end{figure}

	\section{Conclusion \label{sec_conclu}}
	This paper develops a peer effect model for count data using a static game of incomplete information. I introduce heterogeneity in peer effects through agents' groups (e.g., gender-based groups). This allows for peer effects to vary across different peer and agent groups. Moreover, the counting nature of the outcome offers flexibility in specifying payoffs. Unlike the conventional linear-quadratic payoff that is imposed in linear-in-means models, I opt for a semiparametric payoff.  I demonstrate that restricting to linear-quadratic payoffs leads to inconsistent estimators of peer effects on count outcomes. This result suggests that estimating peer effects on count outcomes using the linear-in-means SAR or Tobit model would be inconsistent, as these models assume linear-quadratic payoffs.
	
	Furthermore, I provide a general identification analysis. Specifically, I demonstrate that the argument supporting the use of friends' friends who are not directly friends to identify linear-in-means models extends to a large class of nonlinear models.

	The bias in peer effects estimated using the SAR-Tobit model is illustrated through a Monte Carlo study. In my empirical application, I find that the estimates are four times as high as those provided by the proposed approach. This result underscores a significant concern regarding the assumption of a linear-quadratic payoff. This assumption may be violated even for continuous outcomes, potentially leading to biased estimates of peer effects. The linear-quadratic payoff approach is adopted in the literature because it yields estimable econometric models. Introducing flexibility into the payoff structure can be challenging for continuous outcomes. The resulting econometric specification becomes semiparametric, thereby posing significant challenges for identification and estimation. Addressing these challenges is an active area of research in the current literature.

	\appendix
	\begin{center}
		{\Large\bf Appendices}
	\end{center}
	\renewcommand{\theequation}{\thesection.\arabic{equation}}
	\setcounter{equation}{0}
	\renewcommand{\thetable}{\thesection.\arabic{table}}
	\renewcommand{\thefigure}{\thesection.\arabic{figure}}
	\setcounter{table}{0}
	\setcounter{figure}{0}
	
	\vspace{-0.7cm}
	\section{Proofs}\label{app_proof}
	\subsection{Proof of Proposition \ref{prop:xtrema}\label{prop:xtrema:proof}}
	First, I state and prove the following lemma, which adapts \cite{murota1998discrete} to the case of univariate concave discrete functions.
	\begin{lemma} \label{lemma:conc}
		Let $\bar D$ be a convex subset of $\mathbb{R}$, and let $h$ be a discrete concave function defined on $D_h = \bar D \cap \mathbb{Z}$. Let also $r_0 \in D_h$, such that $r_0 - 1, r_0 + 1 \in D_h$.  Then, $h(r_0)$ is the global maximum of $h$ if and only if $h(r_0) \geq \max\left\{h(r_0 - 1), h(r_0 + 1)\right\}$.
	\end{lemma}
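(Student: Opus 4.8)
The plan is to reduce the global statement to a monotonicity property of the first differences $\Delta h(r) = h(r) - h(r-1)$, exploiting the discrete analogue of the fact that a critical point of a concave function is a global maximizer. The starting point is to unpack what it means for $h$ to be discrete concave on $D_h$: this is exactly the condition that $\Delta h(r)$ is nonincreasing in $r$, i.e. $h(r+1) - h(r) \le h(r) - h(r-1)$ whenever $r-1, r, r+1 \in D_h$. Since $\bar D$ is convex, $D_h = \bar D \cap \mathbb{Z}$ is a set of consecutive integers, so this one-step inequality telescopes into the global comparison $\Delta h(s) \le \Delta h(t)$ for all $s \ge t$ in $D_h$. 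I would record this monotonicity first, as it drives the entire argument.

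The forward implication is immediate: if $h(r_0)$ is a global maximum then in particular $h(r_0) \ge h(r_0-1)$ and $h(r_0) \ge h(r_0+1)$, which is the stated condition.

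For the converse, I would rewrite the hypothesis $h(r_0) \ge \max\{h(r_0-1), h(r_0+1)\}$ as the pair of sign conditions $\Delta h(r_0) \ge 0$ and $\Delta h(r_0+1) \le 0$, both well defined because $r_0 - 1, r_0 + 1 \in D_h$. Using monotonicity of the differences, for every integer $r > r_0$ in $D_h$ one has $\Delta h(r) \le \Delta h(r_0+1) \le 0$, so $h$ is nonincreasing to the right of $r_0$ and hence $h(r) \le h(r_0)$ there; symmetrically, for every $r \le r_0$ the bound $\Delta h(r) \ge \Delta h(r_0) \ge 0$ shows $h$ is nondecreasing up to $r_0$, giving $h(r) \le h(r_0)$ on the left. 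Combining the two sides yields $h(r_0) \ge h(r)$ for all $r \in D_h$, i.e. $h(r_0)$ is the global maximum.

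I expect the only delicate points to be bookkeeping rather than conceptual. I must verify that the intermediate integers used in the telescoping actually lie in $D_h$, which is exactly what convexity of $\bar D$ guarantees, and I should be explicit that the assumption $r_0 - 1, r_0 + 1 \in D_h$ is what makes both sign conditions meaningful. No genuine obstruction arises at the boundary of $D_h$, since every monotonicity inequality I invoke compares a difference at some point of $D_h$ with either $\Delta h(r_0)$ or $\Delta h(r_0+1)$, both of which exist by hypothesis; thus the argument goes through uniformly whether $D_h$ is finite or infinite.
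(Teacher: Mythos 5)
Your proof is correct, but it takes a genuinely different route from the paper's. The paper invokes the characterization (due to \cite{murota1998discrete}) that a discrete function is concave if and only if it extends to a continuous concave function on $\bar D$: it constructs the piecewise-linear extension $\bar h$, observes that the hypothesis $h(r_0) \geq \max\{h(r_0-1), h(r_0+1)\}$ makes $r_0$ a local maximizer of $\bar h$ on $[r_0-1, r_0+1]$, and then uses the fact that a local maximum of a concave function on a convex set is global. You instead stay entirely in the discrete world: you use the equivalent characterization of discrete concavity as nonincreasing first differences $\Delta h(r)$, note that convexity of $\bar D$ makes $D_h$ a set of consecutive integers so the one-step inequalities chain globally, and then propagate the two sign conditions $\Delta h(r_0) \geq 0$ and $\Delta h(r_0+1) \leq 0$ by telescoping to get monotonicity on each side of $r_0$. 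Your argument is more elementary and self-contained---it needs no extension theorem and no appeal to properties of continuous concave functions---at the cost of committing to the first-difference definition of discrete concavity (which, for univariate functions on integer intervals, is equivalent to the extension property the paper uses, so there is no gap). The paper's route is shorter on the page precisely because it outsources the work to the continuous theory; yours makes the mechanism---why a discrete local peak cannot be overtaken elsewhere---explicit.
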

	
	\begin{proof}
		If $h(r_0)$ is the global maximum of $h$, then  $h(r_0) \geq h(r_0 + 1)$ and $h(r_0) \geq h(r_0 - 1)$. Thus, $h(r_0) \geq \max\{h(r_0 - 1),h(r_0 + 1)\}$.
		Assume now that $h(r_0) \geq \max\{h(r_0 - 1),h(r_0 + 1)\}$.
		As pointed out by \cite{murota1998discrete}, a discrete function is concave if and only if it can be extended to a continuous concave function. Let $\bar h$ be a concave extension of $h$ on $\bar D$, where $\bar h(r) = h(r)$ for all $r\in D_h$. One can construct $\bar h$ by linearly joining  $h(r_0 -  1)$ to $h(r_0)$ and then $h(r_0)$ to $h(r_0 + 1)$. This implies that $\bar h(r_0)$ is a local maximum of $\bar h$  on $[r_0-1,r_0+1]$. As $\bar h$ is concave,  $\bar h(r_0)$ is also the global maximum of $h$.
	\end{proof}
	
	\noindent \textbf{Proof of Proposition \ref{prop:xtrema}}\\
	\noindent The expected payoff is $U_i^e(y_i) = (\phi_i + \varepsilon_i)y_i - c_{g_i}(y_i) - \sum_{g^{\prime}\in G}\dfrac{\alpha^{g_ig^{\prime}}}{2}\mathbb{E}\left[(\textstyle y_i - \bar y_i^{g^{\prime}})^2|\mathcal{A},\boldsymbol{\phi}\right]$, where $\bar y_i^{g^{\prime}} = \sum_{j= 1}^n w_{ij}^{g_ig^{\prime}} y_j$. First, $U_i^e(.)$ has a global maximum as it is defined on the bounded discrete space $\mathbb{N}_R$.\footnote{Note that even if $R = \infty$, the global maximum exists because $U_i^e(y_i)$ tends to $-\infty$ as $y_i$ grows to $\infty$.}  The global maximum is reached at a single point almost surely (a.s.). Indeed, if it were reached at two points $y_i^{\ast}$ and $y_i^{\ast\prime}$, with $y_i^{\ast}\ne y_i^{\ast\prime}$, then the condition $U_i^e\left(y_i^{\ast}\right) =  U_i^e\left(y_i^{\ast\prime}\right)$ would imply: 
	$$\textstyle\varepsilon_i = \dfrac{c_{g_i}(y_i^{\ast}) - c_{g_i}(y_i^{\ast\prime})}{y_i^{\ast} - y_i^{\ast\prime}} - \phi_i + \sum_{g^{\prime}\in G}\dfrac{\alpha^{g_ig^{\prime}}}{2}(y_i^{\ast} + y_i^{\ast\prime})  -  \sum_{g^{\prime}\in G}\alpha^{g_ig^{\prime}} \mathbb{E}(\Bar y_i^{g^{\prime}}|\mathcal{A},\boldsymbol{\phi}).$$ 
	This equation has zero probability because $\varepsilon_i$ is continuous and the quantity on the right side of the equation is deterministic. As a result, 
	$U_i^e(.)$ has a unique maximizer a.s. 
	
	The second part of Proposition \ref{prop:xtrema} is given by Lemma \ref{lemma:conc}. The cost function $c_{g_i}(y_i)$ is strictly convex in $y_i$ (Assumption \ref{ass_cost}) and $(\phi_i + \varepsilon_i) y_i$ is linear in $y_i$. Moreover, Assumption \ref{ass_pe} implies that $\textstyle\sum_{g^{\prime}\in G}\dfrac{\alpha^{g_ig^{\prime}}}{2}\mathbb{E}\left[(\textstyle y_i - \bar y_i^{g^{\prime}})^2|\mathcal{A},\boldsymbol{\phi}\right]$ is convex. Thus, $U_i^e(.)$ is strictly concave. As a result, the global maximum is reached at $y_i^{\ast}$ if and only if $U_i(y_i^{\ast}) \geq \max\left\{ U_i(y_i^{\ast} - 1),~ U_i(y_i^{\ast} + 1)\right\}$.

	\subsection{Proof of Proposition \ref{prop:expby}\label{append_expby}}
	\noindent
	If $\mathbf{p}$ and $\mathbb{E}(\mathbf{y}|\mathcal{A},\boldsymbol{\phi})$ satisfy Equation  \eqref{REE_nomatrix}, then: $\mathbb{P}(y_i = r|\mathcal{A},\boldsymbol{\phi}) = F_{\varepsilon}(\bar y_i^e + \phi_{i} - \gamma_{g_i}(r)) -  
	F_{\varepsilon}(\bar y_i^e + \phi_i - \gamma_{g_i}(r+ 1))$, where $y_i^e  = \sum_{g^{\prime}\in G}\alpha^{g_ig^{\prime}} \boldsymbol{w}_{i}^{g_ig^{\prime}} \mathbb{E}(\mathbf{y}|\mathcal{A},\boldsymbol{\phi})$. As $\mathbb{E}(y_i|\mathcal{A},\boldsymbol{\phi}) = \sum_{t = 1}^R t \mathbb{P}(y_i = t|\mathcal{A},\boldsymbol{\phi})$, it follows that:
	\begingroup
	\allowdisplaybreaks
	\begin{align*}
		\mathbb{E}(y_i|\mathcal{A},\boldsymbol{\phi}) &\textstyle= \sum_{t = 1}^R t F_{\varepsilon}(\bar y_i^e + \phi_{i} - \gamma_{g_i}(t)) - \sum_{t = 1}^R t F_{\varepsilon}(\bar y_i^e + \phi_i - \gamma_{g_i}(t+ 1)),\\
		\mathbb{E}(y_i|\mathcal{A},\boldsymbol{\phi}) &\textstyle= F_{\varepsilon}(\bar y_i^e + \phi_{i} - \gamma_{g_i}(1)) +  \sum_{t = 2}^R t F_{\varepsilon}(\bar y_i^e + \phi_{i} - \gamma_{g_i}(t)) \\
		& \quad \quad \textstyle -\sum_{t = 1}^R (t + 1) F_{\varepsilon}(\bar y_i^e + \phi_i - \gamma_{g_i}(t+ 1)) + \sum_{t = 1}^R  F_{\varepsilon}(\bar y_i^e + \phi_i - \gamma_{g_i}(t+ 1)),\\
		\mathbb{E}(y_i|\mathcal{A},\boldsymbol{\phi}) &\textstyle= F_{\varepsilon}(\bar y_i^e + \phi_{i} - \gamma_{g_i}(1)) + \sum_{t = 2}^{R + 1}  F_{\varepsilon}(\bar y_i^e + \phi_i - \gamma_{g_i}(t)) \\
		& \quad \quad \textstyle  +\sum_{t = 2}^R t F_{\varepsilon}(\bar y_i^e + \phi_{i} - \gamma_{g_i}(t)) -   \sum_{t = 2}^{R+1} t F_{\varepsilon}(\bar y_i^e + \phi_i - \gamma_{g_i}(t)).
	\end{align*}
	\endgroup
	In the last equation, both $F_{\varepsilon}(\bar y_i^e + \phi_i - \gamma_{g_i}(t))$ and $t F_{\varepsilon}(\bar y_i^e + \phi_i - \gamma_{g_i}(t))$ are equal to zero if $t = R+1$ because $\gamma_{g_i}(R+1) = \infty$.\footnote{Even when $R = \infty$, the same conclusion holds in the limit as $t \to \infty$, since $\sum_{t=2}^{\infty} t F_{\varepsilon}(\bar y_i^e + \phi_i - \gamma_{g_i}(t)) < \infty$ (see Lemma~\ref{lemma_finite} in Online Appendix~\ref{append_unbounded}).} Thus
	\begin{align*}
		\mathbb{E}(y_i|\mathcal{A},\boldsymbol{\phi}) &\textstyle= \sum_{t = 1}^{R}  F_{\varepsilon}(\bar y_i^e + \phi_i - \gamma_{g_i}(t)) +\sum_{t = 2}^R t F_{\varepsilon}(\bar y_i^e + \phi_{i} - \gamma_{g_i}(t))\\
		& \quad \quad \textstyle   -   \sum_{t = 2}^{R} t F_{\varepsilon}(\bar y_i^e + \phi_i - \gamma_{g_i}(t)).
	\end{align*}
	As a result, $\mathbb{E}(y_i|\mathcal{A},\boldsymbol{\phi}) \textstyle= \sum_{t = 1}^R  F_{\varepsilon}(\bar y_i^e + \phi_i - \gamma_{g_i}(t))$.
	
	\subsection{Proof of Proposition \ref{prop:eunique}}
	\label{append:prop:eunique}
	
	\noindent 
	Proposition \ref{prop:xtrema} guarantees the uniqueness of the BNE  since the expected outcome has a unique maximizer. I now show that there is a unique rational belief system. 
	If $\mathbf{p}$ is a rational belief system, Proposition \ref{prop:expby} states that its associated expected outcome, $\mathbf{y}^e = \mathbb{E}(\mathbf{y}|\mathcal{A},\boldsymbol{\phi})$, satisfies $\mathbf{y}^e = \mathbf{L}(\mathbf{y}^e)$, where the mapping $\mathbf L$ is defined for all $\boldsymbol u = (u_1, \dots, u_n) \in \mathbb{R}^n$ as $\mathbf{L}(\boldsymbol u) = (\ell_1(\boldsymbol u),\dots,\ell_n(\boldsymbol u))^{\prime}$, with
	$$\textstyle\ell_i(\boldsymbol u) = \sum_{t = 1}^{R}F_{\varepsilon}\big(\sum_{g^{\prime}\in G}\alpha^{g_ig^{\prime}} \boldsymbol{w}_{i}^{g_ig^{\prime}} \boldsymbol u+ \phi_{i} - \gamma_{g_i}(t)\big).$$
	I show that $\mathbf{L}$ is contracting, i.e., $\forall~\boldsymbol u \in \mathbb{R}^n$, 
	$\lVert \partial \mathbf{L}(\boldsymbol u)/\partial \boldsymbol u^{\prime} \rVert_{\infty} \leq \Bar{\kappa}_c$ for some $\Bar{\kappa}_c < 1$ not depending on $\boldsymbol u$.\footnote{For any $n\times n$-matrix $\textstyle\mathbf{Q} = (q_{ij})_{\substack{ij}}$, $\textstyle\lVert \mathbf{Q} \rVert_{\infty} = \max_i \sum_{i = 1}^{\infty} \lvert q_{ij} \rvert$.}  
	For all $i,j\in \mathcal P$, we have $\textstyle\dfrac{\partial \ell_i(\boldsymbol u)}{\partial u_j} = \sum_{g^{\prime}\in G}\alpha^{g_ig^{\prime}} w_{ij}^{g_ig^{\prime}} f_i(\boldsymbol u)$, where $\textstyle f_i(\boldsymbol u) = \sum_{t = 1}^{R} f_{\varepsilon}\big(\sum_{g^{\prime}\in G}\alpha^{g_ig^{\prime}} \boldsymbol{w}_{i}^{g_ig^{\prime}} \boldsymbol u+ \phi_{i} - \gamma_{g_i}(t)\big)$. \\Thus, $\textstyle\lVert \partial \mathbf{L}(\boldsymbol u)/\partial \boldsymbol u^{\prime} \rVert_{\infty} = \displaystyle\max_{i\in\mathcal P} \textstyle\big\{\sum_{g^{\prime}\in G}\alpha^{g_ig^{\prime}} f_i(\boldsymbol u) \sum_{j = 1}^n w_{ij}^{g_ig^{\prime}}\big\}$. As the matrix $\mathbf{W}^{g_ig^{\prime}}$ is row-normalized, $\sum_{j = 1}^n w_{ij}^{g_ig^{\prime}} \leq 1$. Therefore, $\textstyle\lVert \partial \mathbf{L}(\boldsymbol u)/\partial\boldsymbol u^{\prime} \rVert_{\infty}\leq \max_{i\in \mathcal P}\big\{\sum_{g^{\prime}\in G}\alpha^{g_ig^{\prime}} f_i(\boldsymbol u)\big\}$.
	Moreover, $\displaystyle f_i(\boldsymbol u)  \leq \max_{u \in \mathbb{R}} \textstyle\sum_{t = 1}^{R}  f_{\varepsilon}\left(u -\gamma_{g_i}(t)\right)$. \\Thus, $\sum_{g^{\prime}\in G}\alpha^{g_ig^{\prime}} f_i(\boldsymbol u) \leq \sum_{g^{\prime}\in G}\alpha^{g_ig^{\prime}} \max_{u \in \mathbb{R}} \sum_{t = 1}^{R}  f_{\varepsilon}\left(u -\gamma_{g_i}(t)\right)$, which is strictly lower than one by Assumption \ref{ass_eqcond}. It turns out that $\max_{i\in \mathcal P}\big\{\sum_{g^{\prime}\in G}\alpha^{g_ig^{\prime}} f_i(\boldsymbol u)\big\} < 1$. As a result, there is a unique rational expected outcome that solves the equation $\mathbf{y}^e = \mathbf{L}(\mathbf{y}^e)$. The unique rational expected outcome yields a unique belief system via Equation \eqref{REE_nomatrix}.

	\subsection{Proof of Proposition \ref{prop_ident}\label{append_Ident:Fknown}}
	Let $\tilde{\boldsymbol{\theta}} = (\tilde{\boldsymbol{\alpha}}^{\prime}, \tilde{\boldsymbol{\beta}}^{\prime},\tilde{\boldsymbol{\gamma}}^{\prime})^{\prime}$, where  $\tilde{\boldsymbol{\alpha}} = (\tilde\alpha^{gg^{\prime}}: ~g,g^{\prime}\in G)^{\prime}$ and $\tilde{\boldsymbol{\gamma}} = (\tilde\gamma_g(t):~g\in G, t = 2,\dots,R)^{\prime}$. If $\boldsymbol{\theta}$ and $\tilde{\boldsymbol{\theta}}$ are observationally equivalent, then $p(\mathbf y_s|\mathcal{A}_s,\mathbf Z_s) = \tilde p(\mathbf y_s|\mathcal{A}_s,\mathbf Z_s)$, where $p(\mathbf y_s|\mathcal{A}_s,\mathbf Z_s)$ and $\tilde p(\mathbf y_s|\mathcal{A}_s,\mathbf Z_s)$ are the conditional distributions of $\mathbf y_s$ at $\boldsymbol\theta$ and $\tilde{\boldsymbol{\theta}}$, respectively. 
	
	The equivalence of the distributions also implies the same expected outcome $\mathbb E(\mathbf y_s|\mathcal{A}_s,\mathbf Z_s)$. 
	The condition $p(\mathbf y_s|\mathcal{A}_s,\mathbf Z_s) = \tilde p(\mathbf y_s|\mathcal{A}_s,\mathbf Z_s)$ implies that $p_i(0|\mathcal{A}_s,\mathbf Z_s) = \tilde p_i(0|\mathcal{A}_s,\mathbf Z_s)$ for all $i$ in any $\mathcal P_s$, where $p_i(.|\mathcal{A}_s,\mathbf Z_s)$ and $\tilde p_i(.|\mathcal{A}_s,\mathbf Z_s)$ are the conditional probability mass functions of $y_i$ given $\mathcal{A}_s$ and $\mathbf Z_s$ at $\boldsymbol\theta$ and $\tilde{\boldsymbol{\theta}}$, respectively. It follows that, 
	$F_{\varepsilon}\big(\bar{\mathcal{Y}}_i \boldsymbol\alpha + \boldsymbol{z}_i^{\prime}\boldsymbol{\beta}\big) = F_{\varepsilon}\big(\bar{\mathcal{Y}}_i \tilde{\boldsymbol\alpha}+ \boldsymbol{z}_i^{\prime}\tilde{\boldsymbol{\beta}}\big)$, which means that $\tilde{\boldsymbol{z}}_i^{\prime} \boldsymbol{\Gamma} = \tilde{\boldsymbol{z}}_i^{\prime}\tilde{\boldsymbol{\Gamma}}$, where $\boldsymbol{\Gamma} = (\boldsymbol\alpha^{\prime},\boldsymbol{\beta}^{\prime})^{\prime}$ and $\tilde{\boldsymbol{\Gamma}} = (\tilde{\boldsymbol\alpha}^{\prime},\tilde{\boldsymbol{\beta}}^{\prime})^{\prime}$. Thus, $\frac{1}{S}\sum_{i = 1}^n \tilde{\boldsymbol{z}}_i \tilde{\boldsymbol{z}}_i^{\prime} \boldsymbol{\Gamma} = \frac{1}{S}\sum_{i = 1}^n \tilde{\boldsymbol{z}}_i \tilde{\boldsymbol{z}}_i^{\prime}\tilde{\boldsymbol{\Gamma}}$, which implies that $\plim \left(\frac{1}{S}\sum_{i = 1}^n \tilde{\boldsymbol{z}}_i \tilde{\boldsymbol{z}}_i^{\prime}\right) \boldsymbol{\Gamma} =  \plim \left(\frac{1}{S}\sum_{i = 1}^n \tilde{\boldsymbol{z}}_i \tilde{\boldsymbol{z}}_i^{\prime}\right)\tilde{\boldsymbol{\Gamma}}$. 
	
	As $\plim \left(\frac{1}{S}\sum_{i = 1}^n \tilde{\boldsymbol{z}}_i \tilde{\boldsymbol{z}}_i^{\prime}\right)$ is full rank, $\boldsymbol\alpha = \tilde{\boldsymbol\alpha}$ and $\boldsymbol{\beta} = \tilde{\boldsymbol{\beta}}$. The identification of the thresholds also follows from the condition $p_i(r|\mathcal{A}_s,\mathbf Z_s) = \tilde p_i(r|\mathcal{A}_s,\mathbf Z_s)$ for $r \geq 1$.
	
	\subsection{Proof of Proposition \ref{prop_ident:suff}}\label{append_Ident_Suf}
	Let  $\mathbf y^e_s = \mathbb E(\mathbf y_s|\mathcal{A}_s,\mathbf Z_s)$, $y_i^e = \mathbb E(y_i|\mathcal{A}_s,\mathbf Z_s)$, and $f^{\ast}_i = \sum_{t = 1}^{R}f_{\varepsilon}\big(\sum_{g^{\prime}\in G}\alpha^{g_ig^{\prime}} \boldsymbol{w}_{s,i}^{g_ig^{\prime}} \mathbf y^e_s+ \boldsymbol{z}_i^{\prime}\boldsymbol{\beta} - \gamma_{g_i}(t)\big)$. Let $\mathbf{D}_s = \diag(f^{\ast}_1,\dots,f^{\ast}_{n_s})$ be a diagonal matrix with the diagonal elements being $f^{\ast}_1,\dots,f^{\ast}_{n_s}$. The subscripts $1$, \dots, $n_s$ stand for the agents in $\mathcal{P}_s$. Let $\mathbf{I}_{n_s}$ be the $n_s$-dimensional identity matrix. 
	
	For any $i,j\in\mathcal{P}_s$, the total differential of $y_i^e$ with respect to (w.r.t.) $x_{j,\bar\kappa}$ can be derived from Equation \eqref{REEY_nomatrix_recall} as follows:
	\begingroup
	\allowdisplaybreaks
	\begin{align}
		\textstyle\dfrac{d(y_i^e)}{d(x_{j,\bar\kappa})} &\textstyle= f^{\ast}_i \sum_{g^{\prime}\in G}\alpha^{g_ig^{\prime}} \boldsymbol{w}_{s,i}^{g_ig^{\prime}} \dfrac{d(\mathbf y^e_s)}{d(x_{j,\bar\kappa})} + \beta_{2,\bar\kappa}f^{\ast}_iw_{ij},\nonumber\\
		\textstyle\dfrac{d(y_i^e)}{d(x_{j,\bar\kappa})} &\textstyle=  f^{\ast}_i\sum_{g^{\prime}\in G}\alpha^{g_ig^{\prime}} \boldsymbol{w}_{s,i}^{g_ig^{\prime}} (\mathbf{I}_{n_s} - \mathbf{D}_s\sum_{\hat g, \hat g^{\prime}\in G}\alpha^{\hat g \hat g^{\prime}} \mathbf{W}_s^{\hat g \hat g^{\prime}})^{-1}  \mathbf e_j  + \beta_{2,\bar\kappa}f^{\ast}_iw_{ij}, \label{eq_dye}
	\end{align}
	\endgroup
	where $\mathbf e_j = (e_{j,1},\dots, e_{j,n_s})^{\prime}$ is an $n_s$-vector such that $e_{j,j} = \beta_{1,\bar\kappa}f^{\ast}_j$ and $e_{j,q} = \beta_{2,\bar\kappa} w_{j,q}f^{\ast}_{q}$ if $q \ne j$.

	I employ a proof by contradiction. If Assumption \ref{ass_fullrank} is not satisfied, then for all $g \in G$ and for all $s$, there exists constants $\check \alpha^{g1}$, \dots, $\check \alpha^{gM}$, $\check\beta_0$, $\textstyle\check{\boldsymbol{\beta}}_1$, and $\check{\boldsymbol{\beta}}_2$ such that, for all $i\in\mathcal G_{g} \cap \mathcal{P}_s$:
	\begin{equation}\label{eq_barylin_rec}
		\textstyle\sum_{g^{\prime}\in G}\check\alpha^{gg^{\prime}} \boldsymbol{w}_{s,i}^{g g^{\prime}} \mathbf y^e_s =  \check\beta_0 + \boldsymbol{x}_i^{\prime}\check{\boldsymbol{\beta}}_1 + \Bar{\boldsymbol{x}}_i^{\prime}\check{\boldsymbol{\beta}}_2.
	\end{equation}
	Let $i_1 \in \mathcal G_{g}\cap \mathcal{P}_s$ and $i_3 \in\mathcal{P}_s$ such that $i_3$ is a friend's friend of $i_1$ but $i_3$ is not a direct friend of $i_1$. By writing \eqref{eq_barylin_rec} for $i = i_1$, the right-hand side terms do not depend on $i_3$ because $i_3 \ne i_1$ and $i_3$ is not a direct friend of $i_1$. Thus  the total variation of $\sum_{g^{\prime}\in G}\check\alpha^{gg^{\prime}} \boldsymbol{w}_{s,i_1}^{g g^{\prime}} \mathbf y^e_s$  w.r.t. $x_{i_3,\bar\kappa}$ is zero:
	\begin{equation}
		\dfrac{d(\sum_{g^{\prime}\in G}\check\alpha^{gg^{\prime}} \boldsymbol{w}_{s,i_1}^{g g^{\prime}} \mathbf y^e_s)}{d(x_{i_3,\bar\kappa})} = \sum_{g^{\prime}\in G}\check\alpha^{gg^{\prime}} \boldsymbol{w}_{s,i_1}^{g g^{\prime}} \dfrac{d(\mathbf y^e_s)}{d(x_{i_3,\bar\kappa})} = 0. \label{eq_dyeconstraint}
	\end{equation}
	
	As  $\beta_{1,\bar \kappa}\beta_{2,\bar \kappa}  \geq 0$ and $\alpha^{g g^{\prime}}\geq 0$, \eqref{eq_dye} implies that $\dfrac{d(y_{i_2}^e)}{d(x_{i_3,\bar\kappa})}$ has the same sign or is zero for all $i_2 \in \mathcal P_s$. Specifically, if $i_3$ is a friend of $i_2$, then $\dfrac{d(y_{i_2}^e)}{d(x_{i_3,\bar\kappa})} \ne 0$, with the same sign for all $i_3$.  Since $\boldsymbol \pi = (\boldsymbol\pi_{i}^{\prime}, \dots, \boldsymbol\pi_{n})^{\prime}$ is a full-rank matrix and $M \leq 2$, there exists $i_1 \in \mathcal G_{g}\cap \mathcal{P}_s$ (for certain $\mathcal{P}_s$)  who has their friends in only one group, $\mathcal G_{g_1}$.\footnote{This is trivial when there is only one group. If $M = 2$ and all agents have friends in all groups, then the rows of $\boldsymbol \pi$ will comprise either only zeros or only ones, and the full rank condition will not hold.} For these agents $i_1$, $\boldsymbol{w}_{s,i_1}^{g g^{\prime}} = 0$ if $g^{\prime} \ne g_1$.   By writing \eqref{eq_dyeconstraint} for these $i_1$, I obtain  $\check\alpha^{g g_1}\boldsymbol{w}_{s,i_1}^{gg_1} \dfrac{d(\mathbf y^e_s)}{d(x_{i_3,\bar\kappa})} = 0$, which implies that $\check\alpha^{g g_1} = 0$. Moreover, there exists $i_1 \in \mathcal G_{g}\cap \mathcal{P}_s$ who has friends in the second group, $\mathcal G_{g_2}$ (these agents  could also have friends in $\mathcal G_{g_1}$). For these $i_1$, Equation \eqref{eq_dyeconstraint} implies $\check\alpha^{g g_2}\boldsymbol{w}_{s,i_1}^{gg_2} \dfrac{d(\mathbf y^e_s)}{d(x_{i_3,\bar\kappa})} = 0$, which means that $\check\alpha^{g g_2} = 0$. As a result, \eqref{eq_barylin_rec} becomes  $\check\beta_0 + \boldsymbol{x}_i^{\prime}\check{\boldsymbol{\beta}}_1 + \Bar{\boldsymbol{x}}_i^{\prime}\check{\boldsymbol{\beta}}_2 = 0$ for all $i\in\mathcal P$, which is not possible because $\frac{1}{n}\sum_{i \in 1}^n \boldsymbol{z}_i\boldsymbol{z}_i^{\prime}$ is a full-rank matrix.

	\subsection{Proof of Proposition \ref{prop:asymp}\label{append:asymptotic}} 
	Proposition~\ref{prop:asymp} is a direct implication of \cite{aguirregabiria2007sequential}. The network consists of $S$ subnetworks, and each subnetwork has a bounded number of nodes. Under the regularity assumptions of the NPL method, $\hat{\boldsymbol{\theta}}$ converges in probability to $\check{\boldsymbol{\theta}}_{0}$ as $S$ tends to infinity.\footnote{See also \cite{guerra2020multinomial}, who apply the NPL approach to peer-effects models.}
	
	I now establish the asymptotic normality.\\
	Let $\displaystyle\mathcal{L}_{i}(\boldsymbol{\theta}, \boldsymbol u_{s(i)}) =\sum_{t = 0}^{R}d_{it}\log\left( p_{it}(\boldsymbol{\theta}, \boldsymbol u_{s(i)}) \right)$, where $d_{it} = \mathbbm{1}\{y_i = t\}$, and $p_{it}(\boldsymbol{\theta}, \boldsymbol u_{s(i)}) = F_{\varepsilon}\big(\sum_{g^{\prime}\in G}\alpha^{g_ig^{\prime}} \boldsymbol{w}_{s,i}^{g_ig^{\prime}} \boldsymbol u_{s(i)}+ \boldsymbol{z}_i^{\prime}\boldsymbol{\beta} - \gamma_{g_i}(t)\big) -  F_{\varepsilon}\big(\sum_{g^{\prime}\in G}\alpha^{g_ig^{\prime}} \boldsymbol{w}_{s,i}^{g_ig^{\prime}} \boldsymbol u_{s(i)}+ \boldsymbol{z}_i^{\prime}\boldsymbol{\beta} - \gamma_{g_i}(t+ 1)\big)$. The pseudo-likelihood is $\mathcal{L}(\boldsymbol{\theta}, \boldsymbol u) = \displaystyle\dfrac{1}{S}\sum_{i = 1}^n\mathcal{L}_{i}.$ 
	
	Let $\boldsymbol{\Theta}$ be the space of $\boldsymbol{\theta}$, $\boldsymbol{\chi} = \{\mathcal{A}, \mathbf Z\}$, and $\mathbf{y}^e = \mathbb E(\mathbf y|\boldsymbol{\chi})$. The NPL estimator verifies:  $\partial_{\boldsymbol{\theta}} \mathcal{L}(\hat{\boldsymbol{\theta}} , \hat{\mathbf{y}}^e) = 0$ and $\hat{\mathbf{y}}^e = \mathbf{L}(\hat{\boldsymbol{\theta}}, \hat{\mathbf{y}}^e)$, where $\hat{\mathbf{y}}^e = \hat{\mathbb{E}}(\mathbf{y}|\boldsymbol{\chi})$ and $\partial_{\boldsymbol{\theta}}$ denotes the partial derivative with respect to $\boldsymbol \theta$.  By applying the mean value theorem to $\partial_{\boldsymbol{\theta}} \mathcal{L}(\hat{\boldsymbol{\theta}} , \hat{\mathbf{y}}^e)$ between $\hat{\boldsymbol{\theta}} $ and $\boldsymbol{\theta}_0 $, I obtain:
	$$\sqrt{S}(\hat{\boldsymbol{\theta}}  - \boldsymbol{\theta}_0 )  = -\left(\mathbf{H}_{1} + \mathbf{H}_{2}\right)^{-1}\dfrac{ \sum_{i = 1}^n \partial_{\boldsymbol{\theta}} \mathcal{L}_{i}(\boldsymbol{\theta}_0 , \mathbf{y}^e_s)}{\sqrt{S}},$$
	
	\noindent  For some $\dot{\boldsymbol{\theta}}$ lying between $\hat{\boldsymbol{\theta}} $ and $\boldsymbol{\theta}_0 $, where $\mathbf{H}_{1} = \partial_{\boldsymbol{\theta} \boldsymbol{\theta}^{\prime}} \mathcal{L}(\dot{\boldsymbol{\theta}}, \dot{\mathbf{y}}^e)$, \\
	$\mathbf{H}_{2} = \partial_{\boldsymbol{\theta} \boldsymbol u^{\prime}} \mathcal{L}(\dot{\boldsymbol{\theta}}, \dot{\mathbf{y}}^e) \partial_{\boldsymbol{\theta}^{\prime}}\dot{\mathbf{y}}^{e\prime}$, and $\dot{\mathbf{y}}^e = \mathbf{L}(\dot{\boldsymbol{\theta}}, \dot{\mathbf{y}}^e)$. The symbols $\partial_{\boldsymbol{\theta} \boldsymbol{\theta}^{\prime}}$ and $\partial_{\boldsymbol{\theta} \boldsymbol u^{\prime}}$ represent the second-order partial derivatives with respect to $\boldsymbol{\theta} \boldsymbol{\theta}^{\prime}$ and $\boldsymbol{\theta} \boldsymbol u^{\prime}$, respectively. 
	Since $\boldsymbol{\theta}_0  = \displaystyle \arg\max_{\boldsymbol{\theta} \in \boldsymbol{\Theta}} \mathcal{L}_0(\boldsymbol{\theta}, \mathbf{y}^e)$ and $\partial_{\boldsymbol{\theta}} \mathcal{L}_{i}(\boldsymbol{\theta}_0 , \mathbf{y}^e_s)$ is bounded, it follows that $\mathbb{E}(\partial_{\boldsymbol{\theta}} \mathcal{L}_{i}(\boldsymbol{\theta}_0 , \mathbf{y}^e_s)|\boldsymbol{\chi}) = 0$ and $\mathbb{V}(\partial_{\boldsymbol{\theta}} \mathcal{L}_{i}(\boldsymbol{\theta}_0 , \mathbf{y}^e_s)|\boldsymbol{\chi})$ is bounded. Consequently, I can apply the central limit theorem (CLT) to $\dfrac{ \sum_{i = 1}^n \partial_{\boldsymbol{\theta}} \mathcal{L}_{i}(\boldsymbol{\theta}_0, \mathbf{y}^e_s)}{\sqrt{S}}$, conditional on $\boldsymbol{\chi}$.\footnote{I use a conditional CLT because $\partial_{\boldsymbol{\theta}} \mathcal{L}_{i}(\boldsymbol{\theta}_0, \hat{\mathbf{y}}^e_s)$ is independent across $i$, conditional on $\boldsymbol{\chi}$. See examples of conditional CLT in \cite{van2000asymptotic} and \cite{houndetoungan2024inference}. Note that Lindeberg's condition is satisfied because $\partial_{\boldsymbol{\theta}} \mathcal{L}_{i}(\boldsymbol{\theta}_0 , \mathbf{y}^e_s)$ is bounded.} Let $\boldsymbol{\Sigma} = \dfrac{ \sum_{i = 1}^n \mathbb{V}(\partial_{\boldsymbol{\theta}} \mathcal{L}_{i}(\boldsymbol{\theta}_0 , \hat{\mathbf{y}}^e_s)|\boldsymbol{\chi})}{S}$. By assuming that $\plim\boldsymbol{\Sigma} = \boldsymbol{\Sigma}_0$, $\plim \mathbf{H}_{1} = \mathbf{H}_{1,0}$, and $\plim \mathbf{H}_{2} = \mathbf{H}_{2,0}$ exist, where $\plim$ is the limit in probability as $S$ grows to infinity, I obtain: 
	$$\sqrt{S}(\hat{\boldsymbol{\theta}}  - \boldsymbol{\theta}_0 ) \overset{d}{\to} \mathcal{N}\left(0, (\mathbf{H}_{1,0} + \mathbf{H}_{2,0})^{-1}\boldsymbol{\Sigma}_0 (\mathbf{H}_{1,0}^{\prime} + \mathbf{H}_{2,0}^{\prime})^{-1}\right).$$

	{\fontsize{11}{10}\selectfont
		\bibliography{References}
		\bibliographystyle{ecta}}
	
	\clearpage
	\setcounter{page}{1}
	\begin{center}
		\Large{Online Appendix to "\TITLE"}\\
		\normalsize Aristide Houndetoungan
	\end{center}
	
	\bigskip
	
	\section{Unbounded Count Variables}\label{append_unbounded}
	In the main text, I assume that $y_i \in \mathbb{N}_{R} = \{0,1,\dots,R\}$, where $R$ is a finite strictly positive integer. In this section, I allow $R$ to be infinite, as is common in many count data models with a Poisson or Negative Binomial distribution. The main complication when $R = \infty$ is that many summations in the paper run over infinitely many terms, so one must ensure that they are convergent. An example is the expected outcome in Proposition \ref{prop:expby} and the uniqueness condition in Assumption \ref{ass_eqcond}.  
	
	To ensure that these sums are convergent, I impose the following restrictions.
	\begin{assumption}\label{ass_unbounded} \hfill
		\begin{enumerate}[label=(\alph*), ref=(\alph*), nosep, leftmargin=2\parindent]
			\item The probability density function (pdf) of the agent type distribution satisfies $f_{\varepsilon}(x) = o\left(\lvert x\rvert^{-\kappa}\right)$ at $\infty$, for some $\kappa > 4$.\label{ass_unbounded:e}
			\item  For any $g \in G$, $\displaystyle\inf_{r \in \mathbb N_{\infty}} \left(\Delta c_{g}(r+1) - \Delta c_{g}(r)\right) > 0$.\label{ass_unbounded_cost}
		\end{enumerate}
	\end{assumption}
	\noindent The condition $f_{\varepsilon}(x) = o\left(\lvert x\rvert^{-\kappa}\right)$ at $\infty$ for some $\kappa > 4$ in Statement \ref{ass_unbounded:e} ensures that the probability that $y_i$ takes the value $r$ converges to zero at a certain rate as $r$ tends to infinity, as is typical in most count data models. Most common distributions with finite moments of order greater than four satisfy this condition (e.g., normal, logistic, and Student’s $t$ distributions). Condition \ref{ass_unbounded_cost} guarantees that the difference in the cost function remains strictly increasing as $r$ approaches infinity. It serves as the asymptotic counterpart of Assumption~\ref{ass_cost}. Since $\textstyle \gamma_{g_i}(r) = \Delta c_{g_i}(r) +  \sum_{g^{\prime}\in G}(r - 1/2)\alpha^{g_ig^{\prime}}$, it follows that $\gamma_{g_i}(r)$ grows to infinity as $r$ tends to infinity.

	\begin{lemma} \label{lemma_finite}
		I establish the following results.
		\begin{enumerate}[label=(\alph*), ref=(\alph*), nosep, leftmargin=2\parindent]
			\item For any $g\in G$, $b \in [0, 2]$, and $u\in\mathbb R$, $\displaystyle h_1(u) = \sum_{t=0}^{+\infty} t^{b}F_{\varepsilon}(u - \gamma_g(t))$ is finite.\label{lemma_finite_finite}
			\item For any $g\in G$ and $u \in \mathbb R$, let $h_2$ be the function defined by $\displaystyle h_2(u) = \sum_{t=0}^{+\infty} f_{\varepsilon}(u - \gamma_g(t))$. Then $h_2$ is bounded on $\mathbb{R}$. \label{lemma_finite_bound}
		\end{enumerate}
	\end{lemma}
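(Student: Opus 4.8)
The plan is to exploit two structural facts: that the thresholds $\gamma_g(t)$ grow at least linearly in $t$, and that $f_\varepsilon$ has polynomial tails of order $\kappa>4$. First I would establish the linear growth. Set $\delta_g = \inf_{r}\big(\Delta c_g(r+1)-\Delta c_g(r)\big)$, which is strictly positive by Assumption~\ref{ass_unbounded_cost}. Since $\gamma_g(r+1)-\gamma_g(r) = \big(\Delta c_g(r+1)-\Delta c_g(r)\big) + \sum_{g'\in G}\alpha^{gg'}$ and $\sum_{g'\in G}\alpha^{gg'}\ge 0$ by Assumption~\ref{ass_pe}, the consecutive gaps satisfy $\gamma_g(r+1)-\gamma_g(r)\ge\delta_g$. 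Telescoping from $t=1$ gives $\gamma_g(t)\ge \gamma_g(1)+(t-1)\delta_g$, so $\gamma_g(t)\to+\infty$ at a linear rate; in particular, for fixed $u$ there is a $t_0$ with $\lvert u-\gamma_g(t)\rvert\ge \tfrac{\delta_g}{2}t$ for all $t\ge t_0$.

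Next I would record the tail behaviour of $f_\varepsilon$ and $F_\varepsilon$. By Assumption~\ref{ass_unbounded:e} and the symmetry of $f_\varepsilon$ (Assumption~\ref{ass_dist_e}), there exist constants $C,A>0$ with $f_\varepsilon(x)\le C\lvert x\rvert^{-\kappa}$ for $\lvert x\rvert\ge A$; since $f_\varepsilon$ is continuous it is bounded on $[-A,A]$, so one can choose $C'>0$ such that the even function $g(x)=C'(1+\lvert x\rvert^{\kappa})^{-1}$, which is nonincreasing in $\lvert x\rvert$, dominates $f_\varepsilon$ everywhere: $f_\varepsilon(x)\le g(x)$ for all $x$. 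Integrating the tail bound yields $F_\varepsilon(x)=O\big(\lvert x\rvert^{-(\kappa-1)}\big)$ as $x\to-\infty$.

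For part~\ref{lemma_finite_finite}, combining the two displays gives, for $t$ large, $F_\varepsilon(u-\gamma_g(t))=O\big(t^{-(\kappa-1)}\big)$, so the summand is $O\big(t^{\,b-\kappa+1}\big)$. Because $b\le 2$ and $\kappa>4$, the exponent satisfies $b-\kappa+1\le 3-\kappa<-1$, so $\sum_t t^{b} F_\varepsilon(u-\gamma_g(t))$ converges by comparison with a convergent $p$-series, and the finitely many initial terms are finite; hence $h_1(u)<\infty$.

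Part~\ref{lemma_finite_bound} is the delicate step, since $h_2$ must be bounded \emph{uniformly} over $u\in\mathbb R$, and here the $\delta_g$-separation of the thresholds is essential: the points $\{u-\gamma_g(t)\}_{t\ge1}$ form a strictly decreasing sequence whose consecutive gaps are at least $\delta_g$, so any interval of length $\delta_g$ contains at most one of them. Splitting these points into the nonnegative ones $a_1<a_2<\cdots$ and the negative ones ordered by magnitude $\lvert b_1\rvert<\lvert b_2\rvert<\cdots$, the separation forces $a_j\ge(j-1)\delta_g$ and $\lvert b_j\rvert\ge(j-1)\delta_g$; since $g$ is nonincreasing in $\lvert x\rvert$, the domination $f_\varepsilon\le g$ gives
\[
h_2(u)\le \sum_{t\ge1} g(u-\gamma_g(t))\le 2g(0)+2\sum_{k\ge1}g(k\delta_g)\le 2C'+\frac{2C'}{\delta_g^{\kappa}}\sum_{k\ge1}k^{-\kappa}.
\]
The right-hand side is finite (as $\kappa>4>1$) and independent of $u$, establishing the uniform bound. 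The main obstacle is precisely this uniformity: for each fixed $u$ the convergence is routine, but controlling the ``peak'' contribution as $u$ slides across the threshold grid requires the comparison with a uniform grid, which is made possible only by the strictly positive lower bound $\delta_g$ on the threshold gaps coming from Assumptions~\ref{ass_pe} and~\ref{ass_unbounded_cost}.
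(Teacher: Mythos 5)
Your proposal is correct and follows essentially the same route as the paper: part \ref{lemma_finite_finite} uses the identical combination of the integrated tail bound $F_{\varepsilon}(x)=O\left(\lvert x\rvert^{-\kappa+1}\right)$ with the at-least-linear growth of $\gamma_g(t)$ (via the gap lower bound from Assumptions \ref{ass_pe} and \ref{ass_unbounded_cost}) and a $p$-series comparison, and part \ref{lemma_finite_bound} uses the same two ingredients the paper does---a polynomial envelope dominating $f_{\varepsilon}$ everywhere and the $\delta_g$-separation of the points $u-\gamma_g(t)$. The only difference is presentational: your comparison of the separated points against the uniform grid $\{k\delta_g\}$ delivers a single bound valid for all $u$ at once, which is a slightly cleaner packaging of the paper's case analysis ($u\le 0$ by monotonicity, $u$ large and positive by splitting at the crossing index $t_1$) and dispenses with the paper's framing in terms of limits at $\pm\infty$.
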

	\noindent Before presenting the proof of Lemma \ref{lemma_finite}, I first discuss how it ensures that the sums over infinitely many terms in the paper do not diverge. Statement \ref{lemma_finite_bound} guarantees that the upper bound $\big(\max_{u \in \mathbb{R}} \sum_{t = 1}^{R} f_{\varepsilon}\left(u - \gamma_g(t)\right)\big)^{-1}$ in Assumption \ref{ass_eqcond} is finite. Statement \ref{lemma_finite_finite} ensures the convergence of other sums involving infinitely many terms. For example, the proof of the expected outcome in Appendix \ref{append_expby} relies on this statement with $b = 1$, while the derivation of the conditional variance of the outcome in Online Appendix \ref{oa_comparison} applies the statement with $b = 2$. Consequently, all results in the paper extend to the case where $R$ is unbounded under Assumption~\ref{ass_unbounded}.

	As discussed in Remark \ref{rem_semi}, when $R$ is large or unbounded, the model involves many cut points and cannot be estimated without additional restrictions. The approach in Remark \ref{rem_semi} can be used to reduce the number of parameters. The idea is to specify a semiparametric cost function that switches from a fully nonparametric form to a quadratic cost function. This implies that the cut points are evenly spaced from this threshold. The threshold can be selected using information criteria such as BIC.

	\paragraph{Proof of Lemma \ref{lemma_finite}} \hfill\\
	I first show Statement \ref{lemma_finite_finite}.

	\noindent Since $f_{\varepsilon}(x) = o(\lvert x\rvert^{-\kappa})$ as $\lvert x\rvert\to\infty$ and $f_{\varepsilon}$ is symmetric, for large negative $x$ we have $f_{\varepsilon}(x) < \lvert x\rvert^{-\kappa}$. Integrating $f_{\varepsilon}$ over $(-\infty, x]$ then yields $F_{\varepsilon}(x) < \dfrac{\lvert x\rvert^{-\kappa+1}}{\kappa-1}$. As $\gamma_g(t)$ is strictly increasing to $\infty$, for $t$ sufficiently large $u-\gamma_g(t)$ becomes large and negative, and hence
	\[
	F_{\varepsilon}(u-\gamma_g(t)) < \frac{\lvert u-\gamma_g(t)\rvert^{-\kappa+1}}{\kappa-1}.
	\]
	Therefore, to show that $h_1(u)<\infty$, it suffices to prove that
	\[
	\sum_{t=t_0}^{\infty} t^{b}\lvert u-\gamma_g(t)\rvert^{-\kappa+1}<\infty,
	\]
	where $t_0$ is a large integer such that $\lvert u-\gamma_g(t)\rvert>0$ for all $t\ge t_0$.
	
	Let $\gamma^{\ast}=\inf_{t\in\mathbb N}(\gamma_g(t+1)-\gamma_g(t))$. By Assumption \ref{ass_unbounded}, $\gamma^{\ast}>0$. Thus, for all $t\ge t_0$, and since $-\kappa+1<0$, we have
	\begingroup
	\allowdisplaybreaks
	\begin{align}
		&\gamma_g(t) \ge (t-t_0)\gamma^{\ast}+\gamma_g(t_0), \nonumber \\
		&\lvert u-\gamma_g(t)\rvert = \gamma_g(t)-u \ge (t-t_0)\gamma^{\ast}+\gamma_g(t_0)-u > 0, \nonumber \\
		&\lvert u-\gamma_g(t)\rvert^{-\kappa+1} \le \big((t-t_0)\gamma^{\ast}+\gamma_g(t_0)-u\big)^{-\kappa+1}, \nonumber \\
		&\lvert u-\gamma_g(t)\rvert^{-\kappa+1} = O(t^{-\kappa+1}), \nonumber \\
		&t^{b}\lvert u-\gamma_g(t)\rvert^{-\kappa+1} = O(t^{-\kappa+b+1}). \nonumber
	\end{align}
	\endgroup
	
	Consequently, $\sum_{t=t_0}^{\infty} t^{b}(\lvert u-\gamma_g(t)\rvert+1)^{-\kappa+1}<\infty$ whenever $\kappa-b-1>1$, which holds since $\kappa>4$ and $b\in[0,2]$. Therefore, for any $g\in G$, $b\in[0,2]$, and $u\in\mathbb R$, $h_1(u)<\infty$.

	\bigskip
	\noindent I now show Statement \ref{lemma_finite_bound}.\\
	Since $f_{\varepsilon}(x) = o(\lvert x\rvert^{-\kappa})$ as $\lvert x\rvert\to\infty$, we also have $f_{\varepsilon}(x) = o((\lvert x\rvert + 1)^{-\kappa})$ as $\lvert x\rvert\to\infty$. Thus, $f_{\varepsilon}(x) \leq (\lvert x\rvert + 1)^{-\kappa}$ whenever $\lvert x\rvert$ is sufficiently large. In fact, since $\lvert x\rvert + 1 > 0$ for all $\lvert x\rvert$, there thus exists $M > 0$ such that $f_{\varepsilon}(x) \leq M(\lvert x\rvert + 1)^{-\kappa}$ for all $x$. Consequently, to show that $h_2$ is bounded on $\mathbb{R}$, it is sufficient to show that the function $f^{\ast}$ defined by
	$$f^{\ast}(u) =  \sum_{t = 0}^{\infty} \left(\lvert u-\gamma_g(t)\rvert+1\right)^{-\kappa}$$
	is bounded on $\mathbb R$. 
	
	Using a similar argument as in the proof of Statement \ref{lemma_finite_finite}, we have $\left(\lvert u-\gamma_g(t)\rvert+1\right)^{-\kappa} = O(t^{-\kappa})$. Thus, $f^{\ast}(u) < \infty$ for all $u$. However, this is not sufficient to conclude that $f^{\ast}$ is bounded. I must also show that $\displaystyle\lim_{u\to -\infty} f^{\ast}(u) < \infty$ and that $\displaystyle\lim_{u\to +\infty} f^{\ast}(u) < \infty$.
	
	If $u \leq 0$, then $\left(\lvert u-\gamma_g(t)\rvert+1\right)^{-\kappa} = \left(\gamma_g(t) - u +1\right)^{-\kappa} \leq \left(\gamma_g(t) +1\right)^{-\kappa}$. Thus, $f^{\ast}(u) \leq f^{\ast}(0)$. Since $f^{\ast}$ is a positive function, this means that $\displaystyle\lim_{u\to -\infty} f^{\ast}(u) < \infty$.
	
	For the second limit, assume that $u$ is positive and sufficiently large. Then, there exists $t_1\in\mathbb{N}$ such that $\forall~t \leq t_1 - 1$, $u > \gamma_g(t)$, and $\forall~t \geq t_1$, $u \leq \gamma_g(t)$. Thus, $f^{\ast}(u)$ can be decomposed as
	\begingroup
	\allowdisplaybreaks
	\begin{align*}
		f^{\ast}(u) &= \sum_{t = 0}^{t_1 - 1}\left(\lvert u-\gamma_g(t)\rvert+1\right)^{-\kappa} + \sum_{t = t_1}^{\infty}\left(\lvert u-\gamma_g(t)\rvert+1\right)^{-\kappa},\\
		f^{\ast}(u) &\leq \sum_{t = 0}^{t_1 - 1}\left(u-\gamma_g(t)+1\right)^{-\kappa} + \sum_{t = t_1}^{\infty}\left(\gamma_g(t)-u+1\right)^{-\kappa},\\
		f^{\ast}(u) &\leq \sum_{t = 0}^{t_1 - 1}\left(\gamma_g(t_1 - 1)-\gamma_g(t)+1\right)^{-\kappa} + \sum_{t = t_1}^{\infty}\left(\gamma_g(t)-\gamma_g(t_1)+1\right)^{-\kappa}.
	\end{align*}
	\endgroup
	The last equation holds because $u > \gamma_g(t_1 - 1)$ and $u \leq \gamma_g(t_1)$.
	
	\noindent Let $\gamma^{\ast}=\inf_{t\in\mathbb N}(\gamma_g(t+1)-\gamma_g(t))$. By Assumption \ref{ass_unbounded}, $\gamma^{\ast}>0$. Thus, if $t \leq t_1 -1$, then $\gamma_g(t_1 - 1)-\gamma_g(t) \geq(t_1 - 1 - t)\gamma^{\ast}$, and hence $(\gamma_g(t_1 - 1)-\gamma_g(t) + 1)^{-\kappa} \leq \left((t_1 - 1 - t)\gamma^{\ast} + 1\right)^{-\kappa}$. Similarly, if $t \geq t_1$, then $(\gamma_g(t) - \gamma_g(t_1) + 1)^{-\kappa} \leq \left((t - t_1)\gamma^{\ast} + 1\right)^{-\kappa}$. Consequently,
	\begingroup
	\allowdisplaybreaks
	\begin{align*}
		f^{\ast}(u) &\leq \sum_{t = 0}^{t_1 - 1}\left((t_1 - 1 - t)\gamma^{\ast} + 1\right)^{-\kappa} + \sum_{t = t_1}^{\infty}\left((t - t_1)\gamma^{\ast} + 1\right)^{-\kappa},\\
		f^{\ast}(u) &\leq \sum_{t = 0}^{t_1 - 1}\left(t\gamma^{\ast} + 1\right)^{-\kappa} + \sum_{t = 0}^{\infty}\left(t\gamma^{\ast} + 1\right)^{-\kappa},\\
		f^{\ast}(u) &\leq 2\sum_{t = 0}^{\infty}\left(t\gamma^{\ast} + 1\right)^{-\kappa}.
	\end{align*}
	\endgroup
	
	\noindent The quantity $\displaystyle 2\sum_{t = 0}^{\infty}\left(t\gamma^{\ast} + 1\right)^{-\kappa}$ does not depend on $u$ and is finite because $\kappa > 1$. Hence, $\displaystyle\lim_{u\to +\infty} f^{\ast}(u) < \infty$. As a result, $f^{\ast}$ is bounded.  This completes the proof of the lemma.

	\section{Supplemental Results on the Econometric Model}
	\subsection{Comparison with Other Count Data Models} \label{oa_comparison}
	In this section, I compare the model with the Poisson and Negative Binomial (NB) models, first by examining the outcome's dispersion. It is well known that the Poisson model implies equality between the expectation and the variance of the outcome, whereas the NB model includes a dispersion parameter and is therefore more flexible. 
	
	Analyzing the dispersion is challenging because the expected outcome does not have a closed form, and consequently, neither does the variance. Using a similar argument as in Proposition \ref{prop:expby} to compute the expected outcome, we have:
	\begingroup
	\allowdisplaybreaks
	\begin{align*}
		\mathbb{E}(y_i^2|\mathcal{A}, \boldsymbol{\phi}) &\textstyle= \sum_{t = 1}^R t^2 F_{\varepsilon}(\bar y_i^e + \phi_{i} - \gamma_{g_i}(t)) - \sum_{t = 1}^R t^2 F_{\varepsilon}(\bar y_i^e + \phi_i - \gamma_{g_i}(t+ 1)),\\
		\mathbb{E}(y_i^2|\mathcal{A}, \boldsymbol{\phi}) &\textstyle= \sum_{t = 1}^R t^2 F_{\varepsilon}(\bar y_i^e + \phi_{i} - \gamma_{g_i}(t)) - \sum_{t = 1}^R (t + 1)^2  F_{\varepsilon}(\bar y_i^e + \phi_i - \gamma_{g_i}(t+ 1))\\
		& \quad \quad \textstyle  + \sum_{t = 1}^R (2t + 1) F_{\varepsilon}(\bar y_i^e + \phi_i - \gamma_{g_i}(t+ 1))\\
		\mathbb{E}(y_i^2|\mathcal{A}, \boldsymbol{\phi}) &\textstyle= F_{\varepsilon}(\bar y_i^e + \phi_{i} - \gamma_{g_i}(1)) + \sum_{t = 1}^R (2t + 1) F_{\varepsilon}(\bar y_i^e + \phi_i - \gamma_{g_i}(t + 1)),\\
		\mathbb{E}(y_i^2|\mathcal{A}, \boldsymbol{\phi}) &\textstyle= \sum_{t = 1}^R (2t - 1) F_{\varepsilon}(\bar y_i^e + \phi_i - \gamma_{g_i}(t)).
	\end{align*}
	\endgroup
	
	The dispersion nature thus depends on the relationship between 
	$SF_i := \sum_{t = 1}^R (2t - 1) F_{\varepsilon}(\bar y_i^e + \phi_i - \gamma_{g_i}(t))$ 
	and $\bar y_i^e  = \mathbb{E}(y_i|\mathcal{A}, \boldsymbol{\phi})$. Specifically, if $SF_i = (\bar y_i^e)^2$, then the outcome is equidispersed; if $SF_i > (\bar y_i^e)^2$, it is overdispersed; and if $SF_i < (\bar y_i^e)^2$, it is underdispersed. Since $SF_i$ cannot be further simplified, this relationship can only be assessed numerically. 
	
	A simple simulation exercise shows that the model imposes no strong restrictions on the dispersion of the outcome. I consider a single network with three individuals from the same group. I set $\bar R = 1$ (evenly spaced cut points), $\lambda = 0.4c$, $\bar \delta = c$, and $\phi_i/c \sim \mathcal{U}(-5, 5)$, where $c$ is a tuning parameter that influence both the expectation and the variance of the outcome. Agent type is assumed to follow a standard normal distribution. Table \ref{tab_dispersion} indicates that the outcome is underdispersed when $c = 1$, equidispersed for agent 1 when $c = 1.83$, and overdispersed when $c = 5$.
	
	\begin{table}[h!]
		\centering
		\caption{Dispersion}
		\label{tab_dispersion}
		\small
		\begin{tabular}{lcccccc}
			\toprule
			& \multicolumn{2}{c}{c = 1} & \multicolumn{2}{c}{c = 1.83} & \multicolumn{2}{c}{c = 5} \\
			& $\mathbb{E}(y_i|\mathcal{A}, \boldsymbol{\phi})$       & $\mathbb{V}(y_i|\mathcal{A}, \boldsymbol{\phi})$         & 
			$\mathbb{E}(y_i|\mathcal{A}, \boldsymbol{\phi})$       & $\mathbb{V}(y_i|\mathcal{A}, \boldsymbol{\phi})$    & 
			$\mathbb{E}(y_i|\mathcal{A}, \boldsymbol{\phi})$       & $\mathbb{V}(y_i|\mathcal{A}, \boldsymbol{\phi})$          \\ \midrule
			Agent 1 & 1.1290      & 0.5449      & 1.2720        & 1.2733       & 2.2550      & 6.4624      \\
			Agent 2 & 3.2960      & 0.5629      & 3.3100        & 1.6976       & 3.4240      & 8.5888      \\
			Agent 3 & 1.4520      & 0.5803      & 1.5240        & 1.4589       & 2.4430      & 7.1439      \\\bottomrule
		\end{tabular}
	\end{table}
	
	The results in Table \ref{tab_dispersion} are obtained under the assumption of evenly spaced cut points. Allowing the spacing between cut points to vary enhances the model's flexibility. For instance, the probability of the outcome taking a specific value can be made large by increasing the gap between the cut points around that value. Such flexibility is not possible with the Poisson or NB models, where the outcome distribution is determined only by a location parameter in the Poisson case and by location and dispersion parameters in the NB case. The proposed model provides additional flexibility by incorporating parameters that can shape the probability at specific values. The cost of this flexibility is the larger number of parameters to estimate. The semiparametric assumption on the cost function reduces the number of parameters while preserving flexibility.
	
	Because of this flexibility, it is not necessary in practice to assume a richer distribution, such as the Student distribution with additional degrees of freedom, to obtain a flexible model. For example, instead of assuming that $\varepsilon_i$ follows a $t$-distribution, which allows $\varepsilon_i$ to take large values, one can use a normal distribution with tightly spaced cut points to generate outcomes with large values as well. This feature is important because inference with the Student distribution can be challenging due to the complicated form of its pdf.
	
	\subsection{Identification of the Distribution of Agents' Private Characteristics\label{SM_Ident_F}}
	This section delves into the identifiability of the distribution function of agents' private characteristics $F_{\varepsilon}$. I impose the following assumptions. 
	\begin{assumption}\label{ass_indlin1}
		There is a variable $x_{i,\hat \kappa}$ in $\boldsymbol{x}_i$ taking any value in $\mathbb{R}$ given every value of $\boldsymbol{x}_{i,-\hat\kappa} = (x_{i,1}, \dots, x_{i,\hat\kappa-1}, x_{i,\hat\kappa+1}, \dots, x_{i,K})^{\prime}$, with $\beta_{1,\hat \kappa} \beta_{2,\hat \kappa} \geq 0$ and $\beta_{1,\hat \kappa}\ne0$, where $\beta_{1,\hat \kappa}$ is the coefficient of $x_{i,\hat \kappa}$ and $\beta_{2,\hat \kappa}$ is the coefficient of $\bar x_{i,\hat \kappa}$, the contextual variable associated with $x_{i,\hat\kappa}$.
	\end{assumption}
	\begin{assumption}\label{ass_eps}
		The location and scale parameters of $F_{\varepsilon}$ are set fixed such that $\mathbb E(\varepsilon_i) = 0$ and  $\mathbb V(\varepsilon_i) = 1$.
	\end{assumption}

	\noindent Assumption \ref{ass_indlin1} adapts Condition X3 of \citeoa{manski1988identificationoa} to my framework. It implies that $\beta_{1,\hat \kappa} x_{i,\hat \kappa} + \beta_{2,\hat \kappa} \bar{x}_{i,\hat \kappa}$ can take any value in $\mathbb R$ given $\boldsymbol{x}_{i,-\hat\kappa}$. Assumption \ref{ass_eps} is a standard restriction imposed in discrete models. As the intercept in $\mathbf X$ can absorb the mean of $\varepsilon_i$, the assumption of zero mean for $\varepsilon_i$ is inconsequential. Additionally, imposing a unit variance for $\varepsilon_i$ is not overly restrictive. The model remains unchanged when both $\varepsilon_i$ and $\boldsymbol \theta$ are multiplied by any positive scalar.
	
	\begin{proposition}\label{prop_ident:general}
		Under Assumptions \ref{ass_cost}--\ref{ass_density_e}, \ref{ass_indlin1}, and \ref{ass_eps}, $(\boldsymbol \theta, F_{\varepsilon})$ is identified.
	\end{proposition}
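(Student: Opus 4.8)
The plan is to use Proposition \ref{prop_ident} as a skeleton and graft the identification of $F_\varepsilon$ onto it by collapsing the ordered structure to a binary single-index model to which the semiparametric argument of \citeoa{manski1988identificationoa} applies. First I would reduce to a binary response. Using the normalization $\gamma_g(1) = 0$ together with $\gamma_g(0) = -\infty$, Equation \eqref{REE_nomatrix_recall} at $r = 0$ gives $\mathbb P(y_i = 0|\mathcal A_s,\mathbf Z_s) = 1 - F_\varepsilon(v_i)$, where $v_i = \tilde{\boldsymbol z}_i'\boldsymbol\Gamma$ with $\tilde{\boldsymbol z}_i = [\mathbb E(\bar{\mathcal Y}_i|\mathcal A_s,\mathbf Z_s),\boldsymbol z_i']'$ and $\boldsymbol\Gamma = (\boldsymbol\alpha',\boldsymbol\beta')'$. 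Hence the event $\{y_i \geq 1\}$ obeys a binary single-index model $\mathbb P(y_i \geq 1|\mathcal A_s,\mathbf Z_s) = F_\varepsilon(\tilde{\boldsymbol z}_i'\boldsymbol\Gamma)$ with unknown index coefficients $\boldsymbol\Gamma$ and unknown link $F_\varepsilon$. As in the proof of Proposition \ref{prop_ident}, observational equivalence of $(\boldsymbol\theta, F_\varepsilon)$ and $(\tilde{\boldsymbol\theta}, \tilde F_\varepsilon)$ forces the two conditional distributions of $\mathbf y_s$, and therefore the conditional expected outcome $\mathbb E(\mathbf y_s|\mathcal A_s,\mathbf Z_s)$, to coincide; consequently $\tilde{\boldsymbol z}_i$ is pinned down by the data and is common to both candidate parameters. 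This is the step that lets me treat the endogenous regressor $\mathbb E(\bar{\mathcal Y}_i|\mathcal A_s,\mathbf Z_s)$ as an observed covariate in the single-index problem.

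Next I would verify the regularity conditions that make the single-index model identified, the key one being that $v_i$ have support equal to $\mathbb R$ conditional on the remaining regressors. I would obtain this from Assumption \ref{ass_indlin1}: the special regressor $x_{i,\hat\kappa}$ enters $\boldsymbol z_i'\boldsymbol\beta$ with nonzero coefficient $\beta_{1,\hat\kappa}$ and has full conditional support, while its own value does not appear in the contextual term $\bar{\boldsymbol x}_i'\boldsymbol\beta_2$; moreover the endogenous component $\mathbb E(\bar{\mathcal Y}_i|\mathcal A_s,\mathbf Z_s)\boldsymbol\alpha$ stays bounded because $y_j^e \in [0,R]$ and the weights are row-normalized. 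Thus, holding $\boldsymbol x_{i,-\hat\kappa}$ and the other agents' characteristics fixed, $v_i$ diverges to $\pm\infty$ as $x_{i,\hat\kappa}\to\pm\infty$, so $F_\varepsilon$ is traced over all of $\mathbb R$; the sign condition $\beta_{1,\hat\kappa}\beta_{2,\hat\kappa}\geq 0$ guarantees that the direct effect and the peer-mediated indirect effect of $x_{i,\hat\kappa}$ on the index reinforce rather than offset, so the movement of $v_i$ is monotone and never stalls. With this, any observationally equivalent pair must satisfy $F_\varepsilon(\tilde{\boldsymbol z}_i'\boldsymbol\Gamma) = \tilde F_\varepsilon(\tilde{\boldsymbol z}_i'\tilde{\boldsymbol\Gamma})$ for all admissible $\tilde{\boldsymbol z}_i$; the argument of \citeoa{manski1988identificationoa} then yields $\boldsymbol\Gamma$ up to a positive scale and recovers the link, and the location–scale normalization $\mathbb E(\varepsilon_i) = 0$, $\mathbb V(\varepsilon_i) = 1$ of Assumption \ref{ass_eps} removes the remaining freedom, delivering $\boldsymbol\alpha = \tilde{\boldsymbol\alpha}$, $\boldsymbol\beta = \tilde{\boldsymbol\beta}$, and $F_\varepsilon = \tilde F_\varepsilon$.

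Finally, with $\boldsymbol\alpha$, $\boldsymbol\beta$, and $F_\varepsilon$ identified, I would recover the remaining thresholds exactly as at the end of the proof of Proposition \ref{prop_ident}: equality of the conditional probabilities $\mathbb P(y_i = r|\mathcal A_s,\mathbf Z_s)$ for $r\geq 1$, combined with the strict monotonicity of $F_\varepsilon$ guaranteed by Assumption \ref{ass_density_e}, identifies each $\gamma_g(t)$ for $t = 2,\dots,R$. The step I expect to be the main obstacle is the support argument: unlike the textbook Manski setting, one component of the index is an equilibrium object, so I must argue carefully that varying the special regressor drives the \emph{whole} index across $\mathbb R$ without the endogenous term cancelling the exogenous variation. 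This is precisely where the boundedness of $\mathbb E(\bar{\mathcal Y}_i|\mathcal A_s,\mathbf Z_s)$ (a consequence of finite $R$) and the sign restriction in Assumption \ref{ass_indlin1} do the work.
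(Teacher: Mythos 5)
Your proposal is correct and follows essentially the same route as the paper's proof: both collapse the model to the $r=0$ cell so that $\mathbb{P}(y_i \geq 1|\mathcal A_s,\mathbf Z_s) = F_{\varepsilon}(\tilde{\boldsymbol z}_i^{\prime}\boldsymbol\Gamma)$ with $\tilde{\boldsymbol z}_i$ pinned down by observational equivalence, invoke the identification argument of \citeoa{manski1988identificationoa} (using the full-support special regressor of Assumption \ref{ass_indlin1} and the median-zero property of $F_{\varepsilon}$) to obtain $\boldsymbol\Gamma$ and the link up to a positive scale, remove the scale with Assumption \ref{ass_eps}, and then recover the thresholds cell by cell as in Proposition \ref{prop_ident}. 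Your explicit use of the boundedness of the equilibrium regressor in the support argument is a slightly more detailed justification than the paper gives, but it is the same argument in substance.
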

	\begin{proof}
		\noindent Assume that $(\boldsymbol \theta, F_{\varepsilon})$ and $(\tilde{\boldsymbol \theta}, \tilde F_{\varepsilon})$ are observationally equivalent. Following Definition \ref{def_equivalence},  this means that 
		$p (\mathbf y_s|\mathcal{A}_s, \mathbf Z_s) = \tilde p(\mathbf y_s|\mathcal{A}_s, \mathbf Z_s)$, where $p (\mathbf y_s|\mathcal{A}_s, \mathbf Z_s)$ and $\tilde p (\mathbf y_s|\mathcal{A}_s, \mathbf Z_s)$ are the distributions of $\mathbf y_s$ conditional on $\mathcal{A}_s$ and $\mathbf Z_s$  at $(\boldsymbol \theta, F_{\varepsilon})$ and $(\tilde{\boldsymbol \theta}, \tilde F_{\varepsilon})$, respectively. Both distributions also yield the same expected outcome $\mathbb E(\mathbf y_s|\mathcal{A}_s, \mathbf Z_s)$.
		
		The condition $p(\mathbf y_s|\mathcal{A}_s, \mathbf Z_s) = \tilde p(\mathbf y_s|\mathcal{A}_s, \mathbf Z_s)$ implies that $p_i(0|\mathcal{A}_s, \mathbf Z_s) = \tilde p_i(0|\mathcal{A}_s, \mathbf Z_s)$ for all $i$, where $p_i(.|\mathcal{A}_s, \mathbf Z_s)$ and $\tilde p_i(.|\mathcal{A}_s, \mathbf Z_s)$ are the conditional probability mass functions of $y_i$ given $\mathcal{A}_s$ and $\mathbf Z_s$ at $\boldsymbol\theta$ and $\tilde{\boldsymbol{\theta}}$, respectively. Thus,
		$F_{\varepsilon}\big(\bar{\mathcal{Y}}_i \boldsymbol\alpha + \boldsymbol{z}_i^{\prime}\boldsymbol{\beta}\big) = \tilde F_{\varepsilon}\big(\bar{\mathcal{Y}}_i \tilde{\boldsymbol\alpha}+ \boldsymbol{z}_i^{\prime}\tilde{\boldsymbol{\beta}}\big)$. As $\frac{1}{n}\sum_{i = 1}^n \tilde{\boldsymbol z}_i^{\prime} \tilde{\boldsymbol z}_i$ is a full-rank matrix and $F_{\varepsilon}(0) = \tilde F_{\varepsilon}(0) = 0.5$, $\boldsymbol \theta$ can be identified relative to $\tilde{\boldsymbol \theta}$. This is a direct implication of Proposition 2 of \citeoa{manski1988identificationoa}. Indeed, assume that $(\bar \tau \boldsymbol \alpha,  \bar \tau \boldsymbol \beta) \ne (\tilde{\boldsymbol\alpha},  \tilde{\boldsymbol{\beta}})$ for any $\bar \tau > 0$. Assumption \ref{ass_indlin1} implies that the following condition holds for some agents in sufficiently large samples:
		\begin{equation}\label{eq_setQb}
			\bar{\mathcal{Y}}_i\boldsymbol\alpha + \boldsymbol{z}_i^{\prime}\boldsymbol{\beta} < 0 \leq \bar{\mathcal{Y}}_i \tilde{\boldsymbol\alpha}  + \boldsymbol{z}_i^{\prime}\tilde{\boldsymbol{\beta}} \quad \text{or} \quad \bar{\mathcal{Y}}_i \tilde{\boldsymbol\alpha} +  \boldsymbol{z}_i^{\prime}\tilde{\boldsymbol{\beta}} < 0 \leq \bar{\mathcal{Y}}_i \boldsymbol\alpha + \boldsymbol{z}_i^{\prime}\boldsymbol{\beta}.
		\end{equation} 
		This is because $\bar{\mathcal{Y}}_i\boldsymbol\alpha + \boldsymbol{z}_i^{\prime}\boldsymbol{\beta}$ can take any value in $\mathbb R$ (by Assumption \ref{ass_indlin1}) and is not proportional to $\bar{\mathcal{Y}}_i \tilde{\boldsymbol\alpha}  + \boldsymbol{z}_i^{\prime}\tilde{\boldsymbol{\beta}}$. However, since $F_{\varepsilon}(0) = \tilde F_{\varepsilon}(0) = 0.5$ and $F_{\varepsilon}$ is strictly increasing (Assumption \ref{ass_density_e}), Equation \eqref{eq_setQb} implies that $F_{\varepsilon}\big(\bar{\mathcal{Y}}_i \boldsymbol\alpha + \boldsymbol{z}_i^{\prime}\boldsymbol{\beta}\big) \ne \tilde F_{\varepsilon}\big(\bar{\mathcal{Y}}_i \tilde{\boldsymbol\alpha}+ \boldsymbol{z}_i^{\prime}\tilde{\boldsymbol{\beta}}\big)$. Consequently, the condition $(\bar \tau \boldsymbol \alpha,  \bar \tau \boldsymbol \beta) \ne (\tilde{\boldsymbol\alpha},  \tilde{\boldsymbol{\beta}})$ for any $\bar\tau > 0$ cannot hold. There exists $\bar \tau > 0$ such that $(\bar \tau \boldsymbol \alpha,  \bar \tau \boldsymbol \beta) = (\tilde{\boldsymbol\alpha},  \tilde{\boldsymbol{\beta}})$, which means that $F_{\varepsilon}(u) = \tilde F_{\varepsilon}(\bar\tau u)$ for all $u\in\mathbb R$. As a result, $F_{\varepsilon}$ can be identified relative to $\tilde F_{\varepsilon}$ up to scale. As the scale parameter is set to one, it follows that $\tau = 1$, $(\boldsymbol \alpha,  \boldsymbol \beta) = (\tilde{\boldsymbol\alpha},  \tilde{\boldsymbol{\beta}})$, and  $F_{\varepsilon} = \tilde F_{\varepsilon}$. Hence, $(\boldsymbol \alpha,  \boldsymbol \beta, F_{\varepsilon})$ is identified.  The threshold parameters are identified as in Proposition \ref{prop_ident}.
	\end{proof}

	\section{Supplemental Results on the Empirical Application}
	\subsection{Data Summary}\label{oa_data}
	
	\noindent Table \ref{data_sum} presents a data summary.  On average, students participate in 2.4 extracurricular activities, with females participating in 2.5 and males in 2.2.
	
	\begin{table}[!ht] 
		\footnotesize
		\centering 
		\caption{Data Summary \label{data_sum}} 
		\begin{threeparttable}
			\begin{tabular}{@{\extracolsep{5pt}} lld{3}d{3}rrrrr} 
				\toprule
				\multicolumn{2}{l}{Variable} & \multicolumn{1}{c}{Mean} & \multicolumn{1}{c}{Sd.} & Min & Max \\ 
				\midrule
				\multicolumn{2}{l}{\textbf{Number of activities}}  & 2.353 & 2.406 & 0 & 33 \\ 
				& For males & 2.240 & 2.596 & 0 & 33 \\ 
				& For females & 2.465 & 2.197 & 0 & 33 \\ [2ex]
				\multicolumn{2}{l}{\textbf{Control variables}}\\
				\multicolumn{2}{l}{Age}  & 15.010 & 1.709 & 10  & 19 \\ 
				\multicolumn{2}{l}{Sex} &  &  &  &    \\ 
				&\textit{Female}  & 0.503& 0.500& 0& 1  \\ 
				&Male  & 0.497& 0.500& 1  \\ 
				\multicolumn{2}{l}{Hispanic}  & 0.168 & 0.374 & 0 & 1 \\ 
				\multicolumn{2}{l}{Race} &  &  &  &   \\ 
				& \textit{White} & 0.625 & 0.484 & 0  & 1 \\ 
				& Black & 0.185 & 0.388 & 0 & 1 \\ 
				& Asian & 0.071 & 0.256 & 0 & 1 \\ 
				& Other & 0.097 & 0.296 & 0 & 1 \\ 
				\multicolumn{2}{l}{Years at school}  & 2.490 & 1.413 & 1 & 6 \\ 
				\multicolumn{2}{l}{Living with parents} & 0.727 & 0.445 & 0 & 1 \\ 
				\multicolumn{2}{l}{Mother education}  &  &  &   &  \\ 
				& \textit{High} & 0.175 & 0.380 & 0 & 1 \\ 
				& <High & 0.302 & 0.459 & 0 & 1 \\ 
				& >High & 0.406 & 0.491 & 0 & 1 \\ 
				& Missing & 0.117 & 0.322 & 0 & 1 \\ 
				\multicolumn{2}{l}{Mother job}  &  &  &  &  \\ 
				& \textit{Stay at home} & 0.204 & 0.403 & 0 & 1 \\ 
				& Professional & 0.199 & 0.400 & 0 & 1 \\ 
				& Other & 0.425 & 0.494 & 0 & 1 \\ 
				& Missing & 0.172 & 0.377 & 0 & 1 \\ 
				\bottomrule
			\end{tabular} 
			\begin{tablenotes}[para,flushleft]
				\footnotesize
				Note: For the categorical explanatory variables, the level in italics is set as the reference level in the econometric models.
			\end{tablenotes}
		\end{threeparttable}
	\end{table}
	
	\subsection{Data Replications Using the Estimated Models} \label{oa_replication}
	To assess the model's performance in replicating the observed data, I generate the number of extracurricular activities using the estimated coefficients as parameter values (as in the Monte Carlo simulations). The matrix of control variables $\mathbf{X}$ and the network are kept fixed at their observed values in the sample. Figure \ref{AH_data_plot} displays the distribution of the observed count outcome alongside simulated outcomes.

	\begin{figure}[!htbp]
		\centering
		\includegraphics[scale = .8]{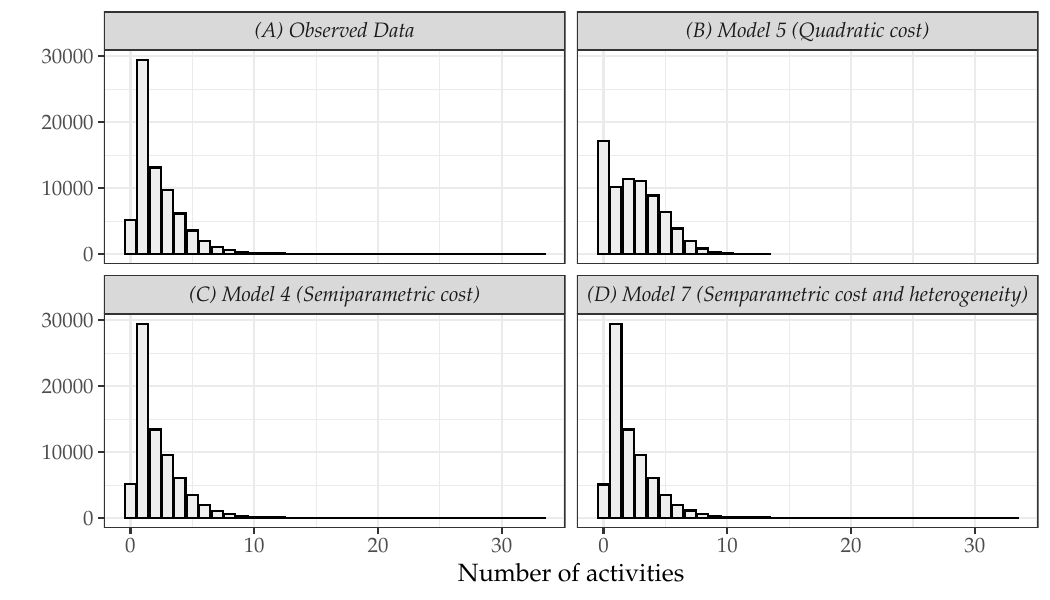}
		
		\vspace{-.2cm}
		\caption{Histograms of the observed and simulated dependent variables \label{AH_data_plot}}
		\justify
		\footnotesize
		Notes: This figure presents distributions of the number of extracurricular activities students attend, based on real data and simulations from estimated models. The simulations are obtained using the estimated coefficients from various models and by fixing the network and the control variables in $\mathbf{X}$ to their observed values from the empirical sample.
		
	\end{figure}
	
	\noindent Panel (A) represents the distribution of the observed data. The predictions are all based on the specifications controlling for unobserved school heterogeneity. Panel (B) shows the predicted data from the specification with a quadratic cost function; i.e., evenly spaced thresholds (Model 5). As in DGP A of the Monte Carlo study, the simulated data exhibit a left-truncated, symmetric distribution with a thin tail. This specification fails to replicate the observed data. Panel (C) presents the predicted data from the specification with a semiparametric cost function (Model 4), while Panel (D) displays the prediction using the specification with a semiparametric cost function and heterogeneity in peer effects. Both predictions are similar and appear to replicate the observed data because they rely on semiparametric costs.
	
	\subsection{Full Estimation Results}\label{oa_fullresults}
	This section presents the full results of the empirical application. Models 1--3 (Table \ref{app_noFE_full}) do not account for heterogeneity in the peer effects. Models 4--6 (Table \ref{app_FE_full}) are the fixed-effect versions of Models 1--3, respectively. I introduce heterogeneity in the peer effects in Models 7--8 (Table \ref{app_Het_full}).

	\begin{table}[!ht]
		\centering 
		\footnotesize
		\caption{Empirical results -- Homogeneous peer effects without fixed effects} 
		\label{app_noFE_full} 
		\resizebox{\columnwidth}{!}{\begin{threeparttable}
				\begin{tabular}{lld{3}d{3}cd{3}d{3}cd{3}d{3}c}
					\toprule
					&                     & \multicolumn{3}{c}{Model 1: Semiparametric cost} & \multicolumn{3}{c}{Model 2: Quadratic cost}             & \multicolumn{3}{c}{Model 3: Tobit}                       \\
					&                     & \multicolumn{1}{c}{Coef.}   & \multicolumn{2}{c}{Direct Marg. Eff.}   & \multicolumn{1}{c}{Coef.}      & \multicolumn{2}{c}{Direct Marg. Eff.}        & \multicolumn{1}{c}{Coef.}      & \multicolumn{2}{c}{Direct Marg. Eff.}         \\\midrule
					\multicolumn{2}{l}{Peer Effects}             & 0.043           & 0.082            & (0.026)           & 0.273         & 0.543             & (0.028)          & 0.681     & 0.553            & (0.017)           \\[1.5ex]
					\multicolumn{11}{l}{\textbf{Own effects}}                                                                                                                                                                                \\
					\multicolumn{2}{l}{Age}                      & -0.056          & -0.105           & (0.005)           & -0.008        & -0.017            & (0.007)          & -0.018    & -0.015           & (0.004)           \\
					\multicolumn{2}{l}{Male}                     & -0.147          & -0.275           & (0.018)           & -0.098        & -0.195            & (0.017)          & -0.246    & -0.200           & (0.010)           \\
					\multicolumn{2}{l}{Hispanic}                 & -0.062          & -0.116           & (0.024)           & 0.015         & 0.029             & (0.026)          & 0.025     & 0.020            & (0.017)           \\
					\multicolumn{2}{l}{Race}                     &                 &                  &                   &               &                   &                  &           &                  &                   \\
					& Black                      & 0.074           & 0.142            & (0.027)           & 0.103         & 0.206             & (0.032)          & 0.241     & 0.197            & (0.022)           \\
					& Asian                      & 0.192           & 0.386            & (0.036)           & 0.276         & 0.568             & (0.035)          & 0.668     & 0.561            & (0.025)           \\
					& Other                      & 0.054           & 0.103            & (0.029)           & 0.087         & 0.175             & (0.032)          & 0.211     & 0.173            & (0.019)           \\
					\multicolumn{2}{l}{Year in school}           & 0.050           & 0.094            & (0.007)           & 0.051         & 0.101             & (0.008)          & 0.123     & 0.100            & (0.005)           \\
					\multicolumn{2}{l}{Live with both   parents} & 0.085           & 0.157            & (0.015)           & 0.066         & 0.131             & (0.020)          & 0.161     & 0.130            & (0.011)           \\
					\multicolumn{2}{l}{Mother   education}       &                 &                  &                   &               &                   &                  &           &                  &                   \\
					& $<$ High                   & -0.077          & -0.141           & (0.018)           & -0.027        & -0.054            & (0.021)          & -0.068    & -0.055           & (0.014)           \\
					& $>$ High                   & 0.203           & 0.385            & (0.020)           & 0.155         & 0.311             & (0.023)          & 0.381     & 0.311            & (0.012)           \\
					& Missing                    & 0.027           & 0.051            & (0.029)           & 0.091         & 0.184             & (0.030)          & 0.211     & 0.173            & (0.022)           \\
					\multicolumn{2}{l}{Mother job}               &                 &                  &                   &               &                   &                  &           &                  &                   \\
					& Professional               & 0.125           & 0.242            & (0.025)           & 0.087         & 0.175             & (0.027)          & 0.216     & 0.177            & (0.015)           \\
					& Other                      & 0.037           & 0.069            & (0.020)           & 0.024         & 0.047             & (0.020)          & 0.061     & 0.049            & (0.013)           \\
					& Missing                    & -0.053          & -0.099           & (0.025)           & -0.034        & -0.067            & (0.028)          & -0.085    & -0.069           & (0.019)           \\[1.5ex]
					\multicolumn{11}{l}{\textbf{Contextual effects}}                                                                                                                                                                         \\
					\multicolumn{2}{l}{Age}                      & -0.006          & -0.011           & (0.004)           & -0.032        & -0.063            & (0.005)          & -0.077    & -0.062           & (0.003)           \\
					\multicolumn{2}{l}{Male}                     & -0.013          & -0.025           & (0.029)           & 0.044         & 0.087             & (0.030)          & 0.106     & 0.086            & (0.016)           \\
					\multicolumn{2}{l}{Hispanic}                 & -0.139          & -0.262           & (0.039)           & -0.064        & -0.127            & (0.038)          & -0.148    & -0.120           & (0.026)           \\
					\multicolumn{2}{l}{Race}                     &                 &                  &                   &               &                   &                  &           &                  &                   \\
					& Black                      & -0.019          & -0.036           & (0.031)           & -0.069        & -0.137            & (0.039)          & -0.161    & -0.131           & (0.025)           \\
					& Asian                      & -0.118          & -0.222           & (0.049)           & -0.242        & -0.482            & (0.047)          & -0.587    & -0.477           & (0.033)           \\
					& Other                      & -0.071          & -0.135           & (0.048)           & -0.115        & -0.229            & (0.050)          & -0.279    & -0.226           & (0.034)           \\
					\multicolumn{2}{l}{Year in school}           & 0.039           & 0.074            & (0.009)           & -0.011        & -0.022            & (0.009)          & -0.029    & -0.024           & (0.007)           \\
					\multicolumn{2}{l}{Live with both   parents} & 0.178           & 0.335            & (0.035)           & 0.030         & 0.059             & (0.038)          & 0.069     & 0.056            & (0.023)           \\
					\multicolumn{2}{l}{Mother   education}       &                 &                  &                   &               &                   &                  &           &                  &                   \\
					& $<$ High                   & -0.154          & -0.291           & (0.047)           & -0.092        & -0.182            & (0.037)          & -0.226    & -0.183           & (0.029)           \\
					& $>$ High                   & 0.195           & 0.368            & (0.037)           & 0.010         & 0.020             & (0.038)          & 0.015     & 0.012            & (0.024)           \\
					& Missing                    & -0.055          & -0.103           & (0.057)           & -0.101        & -0.201            & (0.058)          & -0.253    & -0.206           & (0.041)           \\
					\multicolumn{2}{l}{Mother job}               &                 &                  &                   &               &                   &                  &           &                  &                   \\
					& Professional               & 0.195           & 0.367            & (0.046)           & 0.040         & 0.080             & (0.048)          & 0.092     & 0.074            & (0.030)           \\
					& Other                      & 0.071           & 0.134            & (0.037)           & -0.002        & -0.003            & (0.037)          & -0.007    & -0.006           & (0.021)           \\
					& Missing                    & -0.012          & -0.023           & (0.053)           & -0.012        & -0.024            & (0.051)          & -0.026    & -0.021           & (0.036)           \\\midrule
					\multicolumn{2}{l}{$\bar R_c$}               & \multicolumn{3}{c}{13}                           & \multicolumn{3}{c}{1}                                    &            &                      &                       \\
					\multicolumn{2}{l}{Log likelihood}           & \multicolumn{3}{c}{$-127,626$}                      & \multicolumn{3}{c}{$-159,924$}                              & \multicolumn{3}{c}{$-161,225$}                               \\\bottomrule
				\end{tabular} 
				\begin{tablenotes}[para,flushleft]
					\footnotesize
					Notes: Columns \textit{Coef.} indicate the estimates for the coefficients, whereas Columns \textit{Direct Marg. Eff.} report the direct marginal effects and their corresponding standard errors in parentheses. Indirect and total marginal effects are presented in Table \ref{app_noFE_ME}. $\bar R_c$ indicates the value at which the cost function switches from nonparametric to quadratic. For a fully quadratic cost function, $\bar R_c = 1$. For the semiparametric cost approach, the best $\bar R_c$ is determined by minimizing the BIC. The sample used in this empirical application consists of $n = \text{72,291}$ students from $S = 120$ schools.
				\end{tablenotes}
		\end{threeparttable}}
	\end{table}

	\begin{table}[!ht]
		\centering 
		\footnotesize
		\caption{Empirical results -- Homogeneous peer effects without fixed effects (indirect and total marginal effects)} 
		\label{app_noFE_ME} 
		\begin{threeparttable}
			\begin{tabular}{lld{5}d{5}d{5}d{5}d{5}d{5}}
				\toprule
				&                              & \multicolumn{2}{c}{Model 1} & \multicolumn{2}{c}{Model 2} & \multicolumn{2}{c}{Model 3} \\\midrule
				\multicolumn{2}{l}{\textbf{Indirect Marginal   Effects}} &              &              &              &              &              &              \\
				\multicolumn{2}{l}{Age}                                  & -0.019       & (0.006)      & -0.121       & (0.008)      & -0.121       & (0.006)      \\
				\multicolumn{2}{l}{Male}                                 & -0.046       & (0.025)      & -0.034       & (0.041)      & -0.045       & (0.026)      \\
				\multicolumn{2}{l}{Hispanic}                             & -0.233       & (0.033)      & -0.181       & (0.049)      & -0.182       & (0.035)      \\
				\multicolumn{2}{l}{Race}                                 &              &              &              &              &              &              \\
				& Black                            & -0.022       & (0.028)      & -0.040       & (0.042)      & -0.034       & (0.028)      \\
				& Asian                            & -0.196       & (0.047)      & -0.306       & (0.059)      & -0.300       & (0.043)      \\
				& Other                            & -0.120       & (0.044)      & -0.223       & (0.073)      & -0.222       & (0.055)      \\
				\multicolumn{2}{l}{Year in school}                       & 0.077        & (0.008)      & 0.056        & (0.010)      & 0.056        & (0.008)      \\
				\multicolumn{2}{l}{Live with both   parents}             & 0.301        & (0.028)      & 0.211        & (0.050)      & 0.212        & (0.035)      \\
				\multicolumn{2}{l}{Mother   education}                   &              &              &              &              &              &              \\
				& $<$ High                         & -0.257       & (0.040)      & -0.335       & (0.056)      & -0.344       & (0.043)      \\
				& $>$ High                         & 0.386        & (0.038)      & 0.325        & (0.044)      & 0.326        & (0.038)      \\
				& Missing                          & -0.092       & (0.053)      & -0.169       & (0.090)      & -0.187       & (0.066)      \\
				\multicolumn{2}{l}{Mother job}                           &              &              &              &              &              &              \\
				& Professional                     & 0.409        & (0.052)      & 0.312        & (0.071)      & 0.316        & (0.053)      \\
				& Other                            & 0.133        & (0.034)      & 0.038        & (0.056)      & 0.038        & (0.034)      \\
				& Missing                          & -0.029       & (0.048)      & -0.099       & (0.074)      & -0.099       & (0.060)      \\[1.5ex]
				\multicolumn{2}{l}{\textbf{Total Marginal   Effects}}    &              &              &              &              &              &              \\
				\multicolumn{2}{l}{Age}                                  & -0.125       & (0.007)      & -0.138       & (0.006)      & -0.136       & (0.005)      \\
				\multicolumn{2}{l}{Male}                                 & -0.321       & (0.025)      & -0.229       & (0.037)      & -0.245       & (0.027)      \\
				\multicolumn{2}{l}{Hispanic}                             & -0.349       & (0.033)      & -0.152       & (0.047)      & -0.161       & (0.036)      \\
				\multicolumn{2}{l}{Race}                                 &              &              &              &              &              &              \\
				& Black                            & 0.120        & (0.025)      & 0.166        & (0.031)      & 0.163        & (0.023)      \\
				& Asian                            & 0.190        & (0.033)      & 0.262        & (0.050)      & 0.261        & (0.038)      \\
				& Other                            & -0.016       & (0.047)      & -0.048       & (0.074)      & -0.049       & (0.059)      \\
				\multicolumn{2}{l}{Year in school}                       & 0.171        & (0.009)      & 0.157        & (0.007)      & 0.156        & (0.007)      \\
				\multicolumn{2}{l}{Live with both   parents}             & 0.458        & (0.033)      & 0.342        & (0.049)      & 0.342        & (0.038)      \\
				\multicolumn{2}{l}{Mother   education}                   &              &              &              &              &              &              \\
				& $<$ High                         & -0.398       & (0.044)      & -0.388       & (0.057)      & -0.399       & (0.046)      \\
				& $>$ High                         & 0.771        & (0.043)      & 0.635        & (0.047)      & 0.637        & (0.037)      \\
				& Missing                          & -0.041       & (0.061)      & 0.015        & (0.093)      & -0.013       & (0.069)      \\
				\multicolumn{2}{l}{Mother job}                           &              &              &              &              &              &              \\
				& Professional                     & 0.652        & (0.058)      & 0.487        & (0.069)      & 0.493        & (0.056)      \\
				& Other                            & 0.203        & (0.037)      & 0.085        & (0.056)      & 0.088        & (0.038)      \\
				& Missing                          & -0.127       & (0.054)      & -0.166       & (0.072)      & -0.168       & (0.066)     \\\bottomrule
			\end{tabular}
			\begin{tablenotes}[para,flushleft]
				\footnotesize
				Notes: Estimates without parentheses are marginal effects; standard errors are shown in parentheses.
			\end{tablenotes}
		\end{threeparttable}
	\end{table}

	\begin{table}[!ht]
		\centering 
		\footnotesize
		\caption{Empirical results -- Homogeneous peer effects with fixed effects} 
		\label{app_FE_full} 
		\resizebox{\columnwidth}{!}{\begin{threeparttable}
				\begin{tabular}{lld{3}d{3}cd{3}d{3}cd{3}d{3}c}
					\toprule
					&                     & \multicolumn{3}{c}{Model 4: Semiparametric cost} & \multicolumn{3}{c}{Model 5: Quadratic cost}             & \multicolumn{3}{c}{Model 6: Tobit}                       \\
					&                     & \multicolumn{1}{c}{Coef.}   & \multicolumn{2}{c}{Marginal effects}   & \multicolumn{1}{c}{Coef.}      & \multicolumn{2}{c}{Marginal effects}        & \multicolumn{1}{c}{Coef.}      & \multicolumn{2}{c}{Marginal effects}         \\\midrule
					\multicolumn{2}{l}{Peer Effects}             & 0.043           & 0.079            & (0.023)           & 0.188         & 0.369             & (0.029)          & 0.446     & 0.362            & (0.020)           \\[1.5ex]
					\multicolumn{11}{l}{\textbf{Own effects}}                                                                                                                                                                                \\
					\multicolumn{2}{l}{Age}                      & -0.051          & -0.095           & (0.008)           & -0.019        & -0.038            & (0.008)          & -0.048    & -0.039           & (0.005)           \\
					\multicolumn{2}{l}{Male}                     & -0.160          & -0.296           & (0.019)           & -0.106        & -0.208            & (0.017)          & -0.262    & -0.213           & (0.011)           \\
					\multicolumn{2}{l}{Hispanic}                 & -0.007          & -0.013           & (0.024)           & 0.051         & 0.101             & (0.026)          & 0.114     & 0.093            & (0.018)           \\
					\multicolumn{2}{l}{Race}                     &                 &                  &                   &               &                   &                  &           &                  &                   \\
					& Black                      & 0.133           & 0.255            & (0.038)           & 0.129         & 0.256             & (0.033)          & 0.303     & 0.249            & (0.020)           \\
					& Asian                      & 0.221           & 0.443            & (0.042)           & 0.293         & 0.596             & (0.039)          & 0.699     & 0.588            & (0.024)           \\
					& Other                      & 0.065           & 0.122            & (0.028)           & 0.092         & 0.182             & (0.029)          & 0.220     & 0.180            & (0.019)           \\
					\multicolumn{2}{l}{Year in school}           & 0.043           & 0.080            & (0.008)           & 0.050         & 0.098             & (0.008)          & 0.120     & 0.097            & (0.005)           \\
					\multicolumn{2}{l}{Live with both   parents} & 0.083           & 0.152            & (0.016)           & 0.066         & 0.129             & (0.018)          & 0.159     & 0.128            & (0.014)           \\
					\multicolumn{2}{l}{Mother   education}       &                 &                  &                   &               &                   &                  &           &                  &                   \\
					& $<$ High                   & -0.061          & -0.111           & (0.019)           & -0.019        & -0.036            & (0.024)          & -0.047    & -0.038           & (0.015)           \\
					& $>$ High                   & 0.210           & 0.391            & (0.024)           & 0.163         & 0.323             & (0.017)          & 0.396     & 0.323            & (0.012)           \\
					& Missing                    & 0.032           & 0.060            & (0.029)           & 0.096         & 0.191             & (0.031)          & 0.220     & 0.180            & (0.023)           \\
					\multicolumn{2}{l}{Mother job}               &                 &                  &                   &               &                   &                  &           &                  &                   \\
					& Professional               & 0.140           & 0.267            & (0.026)           & 0.098         & 0.195             & (0.026)          & 0.241     & 0.197            & (0.014)           \\
					& Other                      & 0.047           & 0.087            & (0.018)           & 0.029         & 0.057             & (0.021)          & 0.072     & 0.058            & (0.013)           \\
					& Missing                    & -0.039          & -0.072           & (0.028)           & -0.027        & -0.052            & (0.029)          & -0.069    & -0.056           & (0.018)           \\[1.5ex]
					\multicolumn{11}{l}{\textbf{Contextual effects}}                                                                                                                                                                         \\
					\multicolumn{2}{l}{Age}                      & -0.007          & -0.013           & (0.004)           & -0.027        & -0.053            & (0.005)          & -0.062    & -0.051           & (0.003)           \\
					\multicolumn{2}{l}{Male}                     & -0.032          & -0.059           & (0.028)           & 0.015         & 0.029             & (0.029)          & 0.026     & 0.021            & (0.020)           \\
					\multicolumn{2}{l}{Hispanic}                 & -0.063          & -0.117           & (0.040)           & -0.020        & -0.039            & (0.039)          & -0.046    & -0.037           & (0.028)           \\
					\multicolumn{2}{l}{Race}                     &                 &                  &                   &               &                   &                  &           &                  &                   \\
					& Black                      & 0.005           & 0.009            & (0.042)           & -0.036        & -0.070            & (0.040)          & -0.075    & -0.061           & (0.024)           \\
					& Asian                      & -0.018          & -0.032           & (0.052)           & -0.140        & -0.274            & (0.055)          & -0.325    & -0.264           & (0.037)           \\
					& Other                      & -0.070          & -0.130           & (0.050)           & -0.102        & -0.201            & (0.047)          & -0.242    & -0.196           & (0.035)           \\
					\multicolumn{2}{l}{Year in school}           & 0.021           & 0.040            & (0.011)           & -0.007        & -0.014            & (0.012)          & -0.016    & -0.013           & (0.007)           \\
					\multicolumn{2}{l}{Live with both   parents} & 0.168           & 0.310            & (0.035)           & 0.068         & 0.134             & (0.037)          & 0.170     & 0.138            & (0.023)           \\
					\multicolumn{2}{l}{Mother   education}       &                 &                  &                   &               &                   &                  &           &                  &                   \\
					& $<$ High                   & -0.131          & -0.244           & (0.037)           & -0.075        & -0.147            & (0.047)          & -0.183    & -0.149           & (0.028)           \\
					& $>$ High                   & 0.212           & 0.393            & (0.033)           & 0.078         & 0.154             & (0.039)          & 0.193     & 0.156            & (0.024)           \\
					& Missing                    & -0.064          & -0.118           & (0.059)           & -0.075        & -0.147            & (0.059)          & -0.183    & -0.148           & (0.040)           \\
					\multicolumn{2}{l}{Mother job}               &                 &                  &                   &               &                   &                  &           &                  &                   \\
					& Professional               & 0.223           & 0.413            & (0.045)           & 0.107         & 0.210             & (0.048)          & 0.263     & 0.213            & (0.033)           \\
					& Other                      & 0.083           & 0.154            & (0.034)           & 0.032         & 0.062             & (0.036)          & 0.080     & 0.065            & (0.028)           \\
					& Missing                    & 0.015           & 0.028            & (0.050)           & 0.023         & 0.045             & (0.052)          & 0.062     & 0.050            & (0.035)           \\\midrule
					\multicolumn{2}{l}{$\bar R_c$}               & \multicolumn{3}{c}{13}                           & \multicolumn{3}{c}{1}                                    &            &                      &                       \\
					\multicolumn{2}{l}{Log likelihood}           & \multicolumn{3}{c}{$-126,111$}                      & \multicolumn{3}{c}{$-158,964$}                              & \multicolumn{3}{c}{$-160,258$}                               \\\bottomrule
				\end{tabular} 
				\begin{tablenotes}[para,flushleft]
					\footnotesize
					Notes: Columns \textit{Coef.} indicate the estimates for the coefficients, whereas Columns \textit{Direct Marg. Eff.} report the direct marginal effects and their corresponding standard errors in parentheses. Indirect and total marginal effects are presented in Table \ref{app_FE_ME}. $\bar R_c$ indicates the value at which the cost function switches from nonparametric to quadratic. For a fully quadratic cost function, $\bar R_c = 1$. For the semiparametric cost approach, the best $\bar R_c$ is determined by minimizing the BIC. The sample used in this empirical application consists of $n = \text{72,291}$ students from $S = 120$ schools.
				\end{tablenotes}
		\end{threeparttable}}
	\end{table}

	\begin{table}[!ht]
		\centering 
		\footnotesize
		\caption{Empirical results -- Homogeneous peer effects with fixed effects (indirect and total marginal effects)} 
		\label{app_FE_ME} 
		\begin{threeparttable}
			\begin{tabular}{lld{5}d{5}d{5}d{5}d{5}d{5}}
				\toprule
				&                              & \multicolumn{2}{c}{Model 4} & \multicolumn{2}{c}{Model 5} & \multicolumn{2}{c}{Model 6} \\\midrule
				\multicolumn{2}{l}{\textbf{Indirect Marginal   Effects}} &              &              &              &              &              &              \\
				\multicolumn{2}{l}{Age}                                  & -0.020       & (0.006)      & -0.020       & (0.006)      & -0.081       & (0.006)      \\
				\multicolumn{2}{l}{Male}                                 & -0.080       & (0.026)      & -0.080       & (0.026)      & -0.071       & (0.021)      \\
				\multicolumn{2}{l}{Hispanic}                             & -0.108       & (0.035)      & -0.108       & (0.035)      & -0.004       & (0.033)      \\
				\multicolumn{2}{l}{Race}                                 &              &              &              &              &              &              \\
				& Black                            & 0.033        & (0.041)      & 0.033        & (0.041)      & 0.040        & (0.026)      \\
				& Asian                            & 0.009        & (0.053)      & 0.009        & (0.053)      & -0.063       & (0.045)      \\
				& Other                            & -0.115       & (0.047)      & -0.115       & (0.047)      & -0.164       & (0.043)      \\
				\multicolumn{2}{l}{Year in school}                       & 0.044        & (0.011)      & 0.044        & (0.011)      & 0.029        & (0.008)      \\
				\multicolumn{2}{l}{Live with both   parents}             & 0.281        & (0.030)      & 0.281        & (0.030)      & 0.224        & (0.024)      \\
				\multicolumn{2}{l}{Mother   education}                   &              &              &              &              &              &              \\
				& $<$ High                         & -0.219       & (0.032)      & -0.219       & (0.032)      & -0.198       & (0.033)      \\
				& $>$ High                         & 0.412        & (0.037)      & 0.412        & (0.037)      & 0.348        & (0.027)      \\
				& Missing                          & -0.106       & (0.054)      & -0.106       & (0.054)      & -0.103       & (0.049)      \\
				\multicolumn{2}{l}{Mother job}                           &              &              &              &              &              &              \\
				& Professional                     & 0.470        & (0.057)      & 0.470        & (0.057)      & 0.374        & (0.042)      \\
				& Other                            & 0.155        & (0.032)      & 0.155        & (0.032)      & 0.108        & (0.033)      \\
				& Missing                          & 0.020        & (0.047)      & 0.020        & (0.047)      & 0.037        & (0.043)      \\[1.5ex]
				\multicolumn{2}{l}{\textbf{Total Marginal   Effects}}    &              &              &              &              &              &              \\
				\multicolumn{2}{l}{Age}                                  & -0.115       & (0.011)      & -0.115       & (0.011)      & -0.119       & (0.007)      \\
				\multicolumn{2}{l}{Male}                                 & -0.376       & (0.030)      & -0.376       & (0.030)      & -0.284       & (0.021)      \\
				\multicolumn{2}{l}{Hispanic}                             & -0.121       & (0.041)      & -0.121       & (0.041)      & 0.089        & (0.035)      \\
				\multicolumn{2}{l}{Race}                                 &              &              &              &              &              &              \\
				& Black                            & 0.288        & (0.038)      & 0.288        & (0.038)      & 0.289        & (0.028)      \\
				& Asian                            & 0.452        & (0.059)      & 0.452        & (0.059)      & 0.525        & (0.046)      \\
				& Other                            & 0.008        & (0.050)      & 0.008        & (0.050)      & 0.017        & (0.045)      \\
				\multicolumn{2}{l}{Year in school}                       & 0.125        & (0.012)      & 0.125        & (0.012)      & 0.126        & (0.009)      \\
				\multicolumn{2}{l}{Live with both   parents}             & 0.433        & (0.033)      & 0.433        & (0.033)      & 0.352        & (0.028)      \\
				\multicolumn{2}{l}{Mother   education}                   &              &              &              &              &              &              \\
				& $<$ High                         & -0.330       & (0.039)      & -0.330       & (0.039)      & -0.235       & (0.033)      \\
				& $>$ High                         & 0.803        & (0.049)      & 0.803        & (0.049)      & 0.671        & (0.030)      \\
				& Missing                          & -0.046       & (0.059)      & -0.046       & (0.059)      & 0.077        & (0.055)      \\
				\multicolumn{2}{l}{Mother job}                           &              &              &              &              &              &              \\
				& Professional                     & 0.737        & (0.069)      & 0.737        & (0.069)      & 0.571        & (0.044)      \\
				& Other                            & 0.241        & (0.037)      & 0.241        & (0.037)      & 0.166        & (0.036)      \\
				& Missing                          & -0.052       & (0.056)      & -0.052       & (0.056)      & -0.019       & (0.046)     \\\bottomrule
			\end{tabular}
			\begin{tablenotes}[para,flushleft]
				\footnotesize
				Notes: Estimates without parentheses are marginal effects; standard errors are shown in parentheses.
			\end{tablenotes}
		\end{threeparttable}
	\end{table}

	\begin{table}[!ht]
		\centering 
		\footnotesize
		\caption{Empirical results -- Heterogeneous peer effects with fixed effects} 
		\label{app_Het_full} 
		\begin{threeparttable}
			\begin{tabular}{lld{3}d{3}cd{3}d{3}cd{3}}
				\toprule
				&                                      & \multicolumn{3}{c}{Model 7: Semiparametric cost}        & \multicolumn{3}{c}{Model 8: Quadratic cost} \\
				&                     & \multicolumn{1}{c}{Coef.}   & \multicolumn{2}{c}{Marginal effects}   & \multicolumn{1}{c}{Coef.}      & \multicolumn{2}{c}{Marginal effects}    \\\midrule
				\multicolumn{2}{l}{Peer Effects$^{\text{Male},\text{Male}}$}     & -0.024  & -0.022            & (0.007)          & 0.032   & 0.033             & (0.008)          \\
				\multicolumn{2}{l}{Peer Effects$^{\text{Male},\text{Female}}$}   & 0.071   & 0.066             & (0.008)          & 0.080   & 0.083             & (0.007)          \\
				\multicolumn{2}{l}{Peer Effects$^{\text{Female},\text{Male}}$}   & 0.069   & 0.061             & (0.006)          & 0.072   & 0.066             & (0.006)          \\
				\multicolumn{2}{l}{Peer Effects$^{\text{Female},\text{Female}}$} & 0.034   & 0.030             & (0.008)          & 0.132   & 0.121             & (0.008)          \\[1.5ex]
				\multicolumn{8}{l}{\textbf{Own effects}}                                                                                                                           \\
				\multicolumn{2}{l}{Age}                                          & -0.050  & -0.091            & (0.008)          & -0.027  & -0.052            & (0.007)          \\
				\multicolumn{2}{l}{Male}                                         & -0.224  & -0.417            & (0.042)          & -0.115  & -0.224            & (0.034)          \\
				\multicolumn{2}{l}{Hispanic}                                     & -0.008  & -0.015            & (0.022)          & 0.050   & 0.098             & (0.029)          \\
				\multicolumn{2}{l}{Race}                                         &         &                   &                  &         &                   &                  \\
				& Black                                & 0.138   & 0.260             & (0.032)          & 0.137   & 0.270             & (0.031)          \\
				& Asian                                & 0.227   & 0.445             & (0.038)          & 0.295   & 0.594             & (0.035)          \\
				& Other                                & 0.067   & 0.125             & (0.026)          & 0.096   & 0.188             & (0.028)          \\
				\multicolumn{2}{l}{Year in school}                               & 0.042   & 0.077             & (0.008)          & 0.048   & 0.093             & (0.008)          \\
				\multicolumn{2}{l}{Live with both parents}                       & 0.081   & 0.145             & (0.018)          & 0.065   & 0.125             & (0.018)          \\
				\multicolumn{2}{l}{Mother education}                             &         &                   &                  &         &                   &                  \\
				& $<$ High                             & -0.059  & -0.106            & (0.022)          & -0.017  & -0.034            & (0.023)          \\
				& $>$ High                             & 0.209   & 0.382             & (0.020)          & 0.167   & 0.326             & (0.020)          \\
				& Missing                              & 0.031   & 0.057             & (0.031)          & 0.095   & 0.187             & (0.035)          \\
				\multicolumn{2}{l}{Mother job}                                   &         &                   &                  &         &                   &                  \\
				& Professional                         & 0.134   & 0.250             & (0.024)          & 0.094   & 0.185             & (0.025)          \\
				& Other                                & 0.042   & 0.078             & (0.018)          & 0.025   & 0.049             & (0.021)          \\
				& Missing                              & -0.039  & -0.070            & (0.027)          & -0.024  & -0.047            & (0.029)          \\[1.5ex]
				\multicolumn{8}{l}{\textbf{Contextual effects}}                                                                                                                    \\
				\multicolumn{2}{l}{Age}                                          & -0.009  & -0.016            & (0.004)          & -0.023  & -0.045            & (0.004)          \\
				\multicolumn{2}{l}{Male}                                         & 0.006   & 0.011             & (0.045)          & 0.073   & 0.142             & (0.053)          \\
				\multicolumn{2}{l}{Hispanic}                                     & -0.061  & -0.111            & (0.039)          & -0.027  & -0.053            & (0.043)          \\
				\multicolumn{2}{l}{Race}                                         &         &                   &                  &         &                   &                  \\
				& Black                                & 0.011   & 0.019             & (0.036)          & -0.014  & -0.027            & (0.039)          \\
				& Asian                                & -0.003  & -0.005            & (0.046)          & -0.090  & -0.175            & (0.053)          \\
				& Other                                & -0.066  & -0.121            & (0.046)          & -0.085  & -0.166            & (0.050)          \\
				\multicolumn{2}{l}{Year in school}                               & 0.018   & 0.034             & (0.012)          & -0.001  & -0.002            & (0.010)          \\
				\multicolumn{2}{l}{Live with both parents}                       & 0.166   & 0.303             & (0.033)          & 0.095   & 0.185             & (0.034)          \\
				\multicolumn{2}{l}{Mother education}                             &         &                   &                  &         &                   &                  \\
				& $<$ High                             & -0.122  & -0.222            & (0.042)          & -0.069  & -0.134            & (0.043)          \\
				& $>$ High                             & 0.206   & 0.375             & (0.034)          & 0.109   & 0.212             & (0.031)          \\
				& Missing                              & -0.057  & -0.103            & (0.060)          & -0.047  & -0.090            & (0.062)          \\
				\multicolumn{2}{l}{Mother job}                                   &         &                   &                  &         &                   &                  \\
				& Professional                         & 0.213   & 0.388             & (0.042)          & 0.130   & 0.253             & (0.046)          \\
				& Other                                & 0.076   & 0.139             & (0.034)          & 0.040   & 0.079             & (0.037)          \\
				& Missing                              & 0.014   & 0.025             & (0.054)          & 0.026   & 0.051             & (0.052)          \\\midrule
				\multicolumn{2}{l}{$\bar R_c$}                                   & \multicolumn{3}{c}{10}                         & \multicolumn{3}{c}{1}                         \\
				\multicolumn{2}{l}{Log likelihood}                               & \multicolumn{3}{c}{$-125,667$}                    & \multicolumn{3}{c}{$-125,096$}  \\\bottomrule
			\end{tabular} 
			\begin{tablenotes}[para,flushleft]
				\footnotesize
				Notes: Columns \textit{Coef.} indicate the estimates for the coefficients, whereas Columns \textit{Marginal effects} report the direct marginal effects and their corresponding standard errors in parentheses. Indirect and total marginal effects are presented in Table \ref{app_Het_ME}. $\bar R_c$ indicates the value at which the cost function switches from nonparametric to quadratic. The best $\bar R_c$ is determined by minimizing the BIC. Peer Effects$^{g,g^{\prime}}$ indicates the effects of peers in group $\mathcal G_{g^{\prime}}$ on students in group $\mathcal G_g$, where $g,g^{\prime}\in \{\text{Male},\text{Female}\}$. The sample used in this empirical application consists of $n = \text{72,291}$ students from $S = 120$ schools.
			\end{tablenotes}
		\end{threeparttable}
	\end{table}
	
	\clearpage
	\begin{table}[!ht]
		\centering 
		\footnotesize
		\caption{Empirical results -- Heterogeneous peer effects with fixed effects (indirect and total marginal effects)} 
		\label{app_Het_ME} 
		\begin{threeparttable}
			\begin{tabular}{lld{5}d{5}d{5}d{5}}
				\toprule
				&                              & \multicolumn{2}{c}{Model 7} & \multicolumn{2}{c}{Model 8}  \\\midrule
				\multicolumn{2}{l}{\textbf{Indirect Marginal   Effects}} &              &              &              &                   \\
				\multicolumn{2}{l}{Age}                                & -0.026       & (0.005)      & -0.064       & (0.006)      \\
				\multicolumn{2}{l}{Male}                               & -0.034       & (0.044)      & 0.089        & (0.056)      \\
				\multicolumn{2}{l}{Hispanic}                           & -0.107       & (0.035)      & -0.029       & (0.045)      \\
				\multicolumn{2}{l}{Race}                               &              &              &              &              \\
				& Black                           & 0.055        & (0.035)      & 0.052        & (0.038)      \\
				& Asian                           & 0.059        & (0.051)      & -0.017       & (0.057)      \\
				& Other                           & -0.105       & (0.044)      & -0.127       & (0.054)      \\
				\multicolumn{2}{l}{Year in school}                     & 0.042        & (0.011)      & 0.025        & (0.010)      \\
				\multicolumn{2}{l}{Live with both parents}             & 0.287        & (0.030)      & 0.233        & (0.036)      \\
				\multicolumn{2}{l}{Mother education}                   &              &              &              &              \\
				& $<$ High                        & -0.211       & (0.037)      & -0.154       & (0.044)      \\
				& $>$ High                        & 0.421        & (0.039)      & 0.333        & (0.034)      \\
				& Missing                         & -0.094       & (0.057)      & -0.044       & (0.067)      \\
				\multicolumn{2}{l}{Mother job}                         &              &              &              &              \\
				& Professional                    & 0.462        & (0.056)      & 0.344        & (0.052)      \\
				& Other                           & 0.146        & (0.034)      & 0.101        & (0.041)      \\
				& Missing                         & 0.015        & (0.051)      & 0.042        & (0.058)      \\[1.5ex]
				\multicolumn{2}{l}{\textbf{Total Marginal   Effects}}    &              &              &              &               \\
				\multicolumn{2}{l}{Age}                                & -0.118       & (0.010)      & -0.117       & (0.010)      \\
				\multicolumn{2}{l}{Male}                               & -0.451       & (0.061)      & -0.135       & (0.064)      \\
				\multicolumn{2}{l}{Hispanic}                           & -0.121       & (0.036)      & 0.069        & (0.047)      \\
				\multicolumn{2}{l}{Race}                               &              &              &              &              \\
				& Black                           & 0.315        & (0.034)      & 0.322        & (0.034)      \\
				& Asian                           & 0.504        & (0.065)      & 0.577        & (0.055)      \\
				& Other                           & 0.020        & (0.050)      & 0.061        & (0.060)      \\
				\multicolumn{2}{l}{Year in school}                     & 0.119        & (0.012)      & 0.119        & (0.012)      \\
				\multicolumn{2}{l}{Live with both parents}             & 0.432        & (0.033)      & 0.359        & (0.040)      \\
				\multicolumn{2}{l}{Mother education}                   &              &              &              &              \\
				& $<$ High                        & -0.317       & (0.042)      & -0.187       & (0.048)      \\
				& $>$ High                        & 0.803        & (0.048)      & 0.659        & (0.039)      \\
				& Missing                         & -0.036       & (0.066)      & 0.143        & (0.073)      \\
				\multicolumn{2}{l}{Mother job}                         &              &              &              &              \\
				& Professional                    & 0.712        & (0.068)      & 0.529        & (0.053)      \\
				& Other                           & 0.224        & (0.041)      & 0.150        & (0.050)      \\
				& Missing                         & -0.055       & (0.058)      & -0.005       & (0.070)     \\\bottomrule
			\end{tabular}
			\begin{tablenotes}[para,flushleft]
				\footnotesize
				Notes: Estimates without parentheses are marginal effects; standard errors are shown in parentheses.
			\end{tablenotes}
		\end{threeparttable}
	\end{table}

	{\fontsize{11}{10}\selectfont
		\bibliographyoa{References}
		\bibliographystyleoa{ecta} }
	
\end{document}